\newtheorem{theorem}{Theorem}
\newtheorem*{theorem*}{Theorem}
\newtheorem{conjecture}[theorem]{Conjecture}
\newtheorem{corollary}[theorem]{Corollary}
\newtheorem{lemma}[theorem]{Lemma}
\newtheorem{proposition}[theorem]{Proposition}
\newtheorem{claim}[theorem]{Claim}
\newtheorem*{claim*}{Claim}
\newcommand{\C}{\mathcal{C}}
\newcommand*{\cP}{\mathcal{P}}
\newcommand{\BIS}{\#{}BIS}
\def\gam{\gamma}
\newcommand{\dist}{\mathrm{dist}}
\newcommand{\nm}{\operatorname{nm}}
\def\eps{\epsilon}
\begin{document}

\title{Algorithms for the ferromagnetic Potts model on expanders}

\author[C.\ Carlson]{Charlie Carlson}
\address{Computer Science Department \\ University of California, Santa Barbara}
\email{CharlieAnneCarlson@ucsb.edu}

\author[E.\ Davies]{Ewan Davies}
\address{Department of Computer Science \\ Colorado State University}

\email{research@ewandavies.org}

\author[N.\ Fraiman]{Nicolas Fraiman}
\address{Department of Statistics and Operations Research \\ University of North Carolina at Chapel Hill}
\email{fraiman@unc.edu}

\author[A.\ Kolla]{Alexandra Kolla}
\address{Computer Science and Engineering Department\\ University of California, Santa Cruz}
\email{akolla@ucsc.edu}

\author[A.\ Potukuchi]{Aditya Potukuchi}
\address{Department of Electrical Engineering \& Computer Science, York University}
\email{apotu@yorku.ca}

\author[C.\ Yap]{Corrine Yap}
\address{School of Mathematics \\ Georgia Institute of Technology}
\email{math@corrineyap.com}

\thanks{NF is supported in part by NSF grant CCF-1934964, AK is supported by NSF CAREER grant 1452923, AP is supported in part by NSF grant CCF-1934915, CY is supported in part by NSF grant CCF-1814409 and NSF grant DMS-1800521. An extended abstract of this work appeared at FOCS 2022.}

\begin{abstract}
We give algorithms for approximating the partition function of the ferromagnetic $q$-color Potts model on graphs of maximum degree $d$. Our primary contribution is a fully polynomial-time approximation scheme for $d$-regular graphs with an expansion condition at low temperatures (that is, bounded away from the order-disorder threshold). The expansion condition is much weaker than in previous works; for example, the expansion exhibited by the hypercube suffices. \\
The main improvements come from a significantly sharper analysis of standard polymer models; we use extremal graph theory and applications of Karger's algorithm to count cuts that may be of independent interest. It is \#BIS-hard to approximate the partition function at low temperatures on bounded-degree graphs, so our algorithm can be seen as evidence that hard instances of \#BIS are rare. \\
We also obtain efficient algorithms in the Gibbs uniqueness region for bounded-degree graphs. While our high temperature proof follows more standard polymer model analysis, our result holds in the largest known range of parameters $d$ and $q$.
\end{abstract}

\maketitle

\section{Introduction}

The $q$-state Potts model on a graph $G=(V,E)$ at inverse temperature $\beta$ is given by the partition function
\begin{equation}
\label{eqn:pottspartition}
Z_G(q, \beta) = \sum_{\sigma: V\to [q]}e^{\beta m(G,\sigma)},
\end{equation}
where the sum is over all assignments of the spins $[q]=\{1,2,\dotsc,q\}$ to the vertices of $G$, and $m(G,\sigma)$ is the number of edges such that both endpoints receive the same spin under $\sigma$. 
The Potts model has been of continued interest in combinatorics and computer science, most notably because of its direct relations to notions such as $q$-cuts and graph colorings. 
The Potts model also arises in physics and other areas as a generalization of the Ising model.
The main computational question associated to the Potts model is to approximate the partition function. The model is a canonical example of a Markov random field, and hence an excellent testbed for algorithmic techniques.

The Potts model as above is \emph{ferromagnetic} when $\beta > 0$ and \emph{antiferromagnetic} when $\beta < 0$.
We are interested in the ferromagnetic model because for $q\ge 3$, at low temperatures approximating the partition function is \BIS{}-hard~\cite{GJ12,GSVY16}. 
The complexity class known as \BIS{} consists of problems that are equivalent to counting the number of independent sets in bipartite graphs~\cite{DGGJ04}, and understanding the complexity of approximating problems in \BIS{} is a longstanding problem.
More precisely, approximating the partition function of the ferromagnetic Potts model for fixed $\beta$ is \BIS-hard~\cite{GJ12}, and restricted to bounded-degree graphs it is \BIS-hard at low temperatures~\cite{GSVY16}. 
That is, for any $d\ge 3$ and $q\ge 3$, for any $\beta > \beta_o(q,d)$ it is \BIS-hard to approximate $Z_G(q,\beta)$ on graphs of maximum degree $d$, where $\beta_o$ is given by 
\begin{equation}
    \beta_o(q,d) = \ln\left(\frac{q-2}{(q-1)^{1-2/d}-1}\right).
\end{equation}
The parameter $\beta_o$ has a precise definition as the \emph{order-disorder} threshold of the Potts model on the random $d$-regular graph, and we note that it is not the same as the Gibbs uniqueness phase transition on the $d$-regular tree. 
See e.g.~\cite{GSVY16,CGG+22} for a discussion of these thresholds.
The Gibbs measure $\mu_{G,q,\beta}$ for the Potts model on the spin assignments of $G$ is given by
\[ \mu_{G,q,\beta}(\sigma) = e^{\beta m(G,\sigma)} / Z_G(q,\beta) \]
and, alongside approximating $Z$, algorithms that approximately sample from $\mu_{G,q,\beta}$ are of interest in theoretical computer science. 
By self-reducibility, the standard problems of approximating the partition function and approximately sampling from the Gibbs measure are equivalent.

Several recent works have given algorithms for \BIS-hard problems on random graphs or graphs with strong expansion~\cite{JKP20,HJP20,GGS21b,CGG+21}. 
One consequence of such research is a restriction on possible hard instances of \BIS{}, since under somewhat general conditions these works give efficient algorithms. 
In the absence of strong evidence that \BIS{} admits a general polynomial-time approximation algorithm, a weaker but intriguing prospect is that of a subexponential-time algorithm. 
The recent algorithm of Jenssen, Perkins, and Potukuchi~\cite{JPP22} counts independent sets in $d$-regular bipartite graphs in subexponential time when $d$ grows with the number of vertices of the instance. 
Our work is motivated in part by the question of whether a subexponential-time approximation algorithm for the Potts partition function can be found (at temperatures known to be \BIS-hard).

In general, the interaction between probabilistic phase transitions or thresholds such as $\beta_o$ and the computational complexity of approximating partition functions has received substantial attention. 
For example, while the (Gibbs uniqueness) phase transition on the infinite $d$-regular tree corresponds to an NP-hardness threshold in graphs of maximum degree $d$ for antiferromagnetic 2-spin models such as the hard-core model and the Ising model~\cite{Sly10,SS14,GSV16}, for the ferromagnetic Potts model on the random $d$-regular graph neither the Gibbs uniqueness phase transition nor the order-disorder threshold correspond to a computational threshold~\cite{HJP20} when $q$ is large enough in terms of $d$. 
The methods of~\cite{HJP20} rely on large $q$, and our work addresses the question of how one can work with smaller $q$.

While our primary objective is to give algorithms for approximating $Z_G(q,\beta)$ in the \BIS-hardness region (i.e.\ low temperatures), we also investigate phase transitions subject to much weaker conditions than those satisfied by the random regular graph. To this end, we develop a high-temperature algorithm for general $d$-regular instances that works up to some $\beta_1(q,d)$ such that $\beta_1\sim \beta_o$ as $d\to \infty$. 
Since the Gibbs uniqueness threshold is less than $\beta_o$, our work gives a polynomial-time approximation algorithm in the Gibbs uniqueness region for regular graphs and does so for the largest known range of $q$.

\subsection{Our results}

Let $G = (V,E)$ be a $d$-regular graph on $n$ vertices. 
For a set of vertices $A\subset V$, let the \emph{edge-boundary} $\nabla(A)$ of $A$ be the set of edges of $G$ with precisely one endpoint in $A$. 
For $\eta > 0$, we say that a graph $G$ is an \emph{$\eta$-expander} if
\begin{equation}\label{eqn:expansion}
\text{every nonempty set}~A\subset V~\text{with}~|A|\le n/2~\text{satisfies}~|\nabla(A)| \geq \eta|A|.
\end{equation}
We note here that the $d$-dimensional hypercube $Q_d$ is a $1$-expander, and that $\eta$-expansion differs by a factor $d$ from the usual definition of edge expansion $\phi$ in an $n$-vertex $d$-regular graph, $\phi(G) = \min_{0 < |A| \le n/2}|\nabla(A)|/(d|A|)$.

For $z>0$ and $\delta\in [0,1)$, we say that $\hat z$ is a \emph{$\delta$-relative approximation} to $z$ if $1- \delta \le z/\hat z \le 1+\delta$. 
For a given $q \in \mathbb{N}$ and $\beta > 0$, a \emph{fully polynomial-time approximation scheme} (FPTAS) for $Z_G(q,\beta)$ is an algorithm that for every $\delta>0$ outputs a $\delta$-relative approximation to $Z_G(q,\beta)$ and runs in time polynomial in $n$ and $1/\delta$. 
A \emph{polynomial-time sampling scheme} for $\mu_{G,q,\beta}$ outputs a random spin assignment with distribution $\hat \mu$ that lies within $\delta$ total variation distance of $\mu_{G,q,\beta}$ and runs in time polynomial in $n$ and $1/\delta$. %By $a \ge \operatorname{poly}(b)$, we mean that there is an absolute constant $c$ such that $a \ge b^{c}$.
Our main theorem is the following.

\begin{theorem}\label{thm:mainpotts}
For every $\epsilon > 0$, there exists $d_0(\eps)$ such that for $d \ge d_0(\epsilon)$, $q \ge d^c$ where $c$ is an absolute constant, and positive $\beta \notin ((1-\epsilon)\beta_o, (1+\epsilon)\beta_o)$, there exist
\begin{enumerate}
\item an FPTAS for $Z_G(q,\beta)$, and
\item a polynomial-time sampling scheme for $\mu_{G,q,\beta}$
\end{enumerate}
for $G$ in the class of $d$-regular $2$-expander graphs, and for $G$ in the class of triangle-free $d$-regular $1$-expander graphs.
\end{theorem}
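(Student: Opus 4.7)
\emph{Overall approach.} I would handle the two regimes separately with polymer models plus the cluster-expansion / Kotecky--Preiss (KP) pipeline developed by Helmuth--Perkins--Regts and Jenssen--Keevash--Perkins. For low temperature $\beta > (1+\eps)\beta_o$ one expects the Gibbs measure to concentrate near the $q$ monochromatic ground states, so $Z_G(q,\beta)$ is written as $q$ times the partition function of a polymer model in which a polymer is a connected set $\gamma\subset V$ together with a spin assignment differing from a reference ground state on exactly $\gamma$, with weight $w(\gamma) \asymp e^{-\beta|\nabla(\gamma)|}$ times a combinatorial factor counting the allowed spin labellings on $\gamma$. For high temperature $\beta < (1-\eps)\beta_o$ the disordered phase dominates and a polymer instead records a deviation from a balanced colour histogram. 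In each case KP reduces to showing that for every vertex $v \in V$,
\begin{equation*}
\sum_{k\ge 1} N_k(v)\cdot \rho^{k} \;\le\; O(1),
\end{equation*}
where $N_k(v)$ counts candidate polymers of edge-boundary exactly $k$ that contain $v$, and $\rho$ is a small parameter comparable to $e^{-\beta}$ at low temperature (respectively a function of $e^{\beta}-1$ at high temperature).

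\emph{The main obstacle.} Under the weak expansion hypothesis $\eta\in\{1,2\}$, the easy bound $N_k(v) \le (ed)^{k}$ is useless: in the hard part of the parameter range $\beta = \Theta(\log d)$ it already gives $\rho^{k}\cdot (ed)^{k}\ge 1$. I plan to beat this bound in two steps. First, I would enumerate the possible edge-cuts of size $k$ using an argument in the spirit of Karger's random-contraction algorithm: the Karger--Stein theorem says that in any graph the number of cuts of value at most $\alpha$ times the minimum cut is bounded by $n^{O(\alpha)}$, and combined with the expansion lower bound on the minimum boundary of a set of a given size this converts the exponential-in-$k$ count of cut shapes into a polynomial-in-$n$ one, at a cost that the entropy factor $q^{-\Omega(k)}$ coming from $q \ge d^{c}$ is large enough to absorb. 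Second, for each fixed boundary I need to bound the number of possible interiors $V(\gamma)$ and internal edge sets $E(\gamma)$, which is where extremal graph theory enters: a Kruskal--Katona style bound on the number of edges inside a vertex set of given size in a $d$-regular $2$-expander controls the internal complexity of $\gamma$, and the triangle-free hypothesis sharpens this bound enough to push the required expansion constant down from $2$ to $1$.

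\emph{From convergence to algorithms.} Once the KP condition is verified uniformly in $n$, the cluster expansion for $\log Z_G(q,\beta)$ converges absolutely, and truncating it at order $O(\log(n/\delta))$ gives a $\delta$-relative approximation to $\log Z_G(q,\beta)$. By the Helmuth--Perkins--Regts / Patel--Regts machinery this truncation can be evaluated in time polynomial in $n$ and $1/\delta$ by enumerating connected subgraphs of logarithmic size, yielding the FPTAS. A standard self-reducibility argument, in which conditional marginals are estimated by ratios of such approximate partition functions on modified instances, then produces the polynomial-time sampler. Combining the low- and high-temperature analyses covers all positive $\beta \notin ((1-\eps)\beta_o, (1+\eps)\beta_o)$, and the argument applies to both the $d$-regular $2$-expander class and, with the triangle-free refinement in the cut-counting step, the $d$-regular triangle-free $1$-expander class.
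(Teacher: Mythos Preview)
Your overall framework (polymer models, Koteck\'y--Preiss, Karger-type cut counting) matches the paper, but the low-temperature enumeration as you state it would fail. Applying Karger--Stein directly to $G$ gives at most $n^{O(b/d)}$ cuts of size $b$ (the min-cut being $d$), but the decay from the weight is only $q^{-\Theta(b/d)}$, not $q^{-\Omega(b)}$: since $\beta\sim (2+\eps)\ln q/d$ one has $e^{-\beta b}=q^{-(2+\eps)b/d}$, and part of this is consumed by the factor $Z_\gamma(q-1,\beta)$ hidden in your ``combinatorial factor''. You are left with $(n/q^{\Theta(1)})^{O(b/d)}$, which diverges once $n\gg q=d^c$. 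The paper's fix (Theorem~\ref{thm:enum}) is a container step: for each connected $A$ with $|\nabla(A)|=b$ one extracts a core $A_0\subset A$ of size $O(b/d)$ that is connected in $G^7$ (so there are only $d^{O(b/d)}$ cores), contracts $G$ outside a bounded neighbourhood of $A_0$ to a graph on $O(bd^3)$ vertices still of min-cut $\ge d$, and only then applies Karger, obtaining $d^{O(b/d)}$ independent of $n$. Relatedly, a Kruskal--Katona edge count will not control the weight: one must bound the partition function $Z_\gamma(q-1,\beta)$, for which the paper uses the Sah--Sawhney--Stoner--Zhao inequalities comparing to $Z_{K_{d+1}}$ (general case) or $Z_{K_{d,d}}$ (triangle-free case); this is exactly where the $2$-expander versus triangle-free $1$-expander dichotomy enters.

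Your high-temperature model is also not the right one. A polymer recording ``deviation from a balanced colour histogram'' has no clean weight and it is unclear how KP would be checked. The paper instead passes to the Fortuin--Kasteleyn representation: a polymer is a connected subgraph $\gamma$ on at least two vertices with weight $w_\gamma=q^{1-v_\gamma}(e^\beta-1)^{e_\gamma}$, and one has the exact identity $\Xi=q^{-n}Z_G(q,\beta)$. Verifying KP (Lemma~\ref{lem:hightempKP}) is then a direct calculation that uses no expansion hypothesis at all; the expansion is needed only at low temperature.
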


While the constants $2$ and $1$ marking the limit of our expansion range may seem arbitrary, they represent a subtle barrier in our proofs. All $n$-vertex $d$-regular graphs that are $2$-expanders or triangle-free $1$-expanders must have min-cut $d$ and the property that a set $A$ with $2\le |A|\le n/2$ vertices on one side has edge-boundary at least $2d-2$ (see~\eqref{eqn:mincut2}). We choose not to pursue any additional case analysis and parameter tradeoffs that might overcome this barrier. One option, as in~\cite{HJP20}, is to require stronger (e.g. $\Omega(d)$) expansion for small sets but we avoided this assumption for Theorem~\ref{thm:mainpotts}. It is also worth noting that we require $d_0(\epsilon) \geq e^{\Omega(1/\epsilon)}$.

In the low-temperature regime, $\beta > (1+\eps)\beta_o$, this is a significant improvement over a previous result of Jenssen, Keevash, and Perkins~\cite{JKP20} which required that $q \geq d^{\Omega(d)}$, as well as the graph being an $\Omega(d)$-expander when considering the same range of temperatures. Briefly, the statement in~\cite{JKP20} permits weaker $\eta$-expansion than $\eta=2$, but at the cost of larger $\beta$. 
We focus\footnote{Our work could also be used to prove a version of the low-temperature statement in Theorem~\ref{thm:mainpotts} for a strictly larger range of parameters than those handled in~\cite{JKP20} since we extend their method with a sharper analysis.} on obtaining an algorithm that that still applies when $\beta$ is close to $\beta_o$. 
For a $\delta$-relative approximation, the high-temperature algorithm presented in Theorem~\ref{thm:mainpotts} runs in time $\left(n/\delta\right)^{O_{\epsilon}(\ln d)}$ and the low-temperature algorithm runs in time $\left(n/\delta\right)^{O_{\epsilon}\left(d/\eta\right)}$.

For the high-temperature regime where $\beta\leq(1-\eps)\beta_o$, an expansion assumption is not required and the $d$-regular condition can be relaxed to maximum degree $d$. Moreover, there is a function $q_0(\eps)$ such that our method only requires $q\ge q_0(\eps)$. In particular, this gives an FPTAS to sample in the uniqueness regime (and a bit beyond) for large enough $q$ and $d$. This improves upon~\cite[Theorem~2.4]{BCH+20} which gives an FPTAS in the range $\beta \le \frac{3}{2}\ln(q)/d$ for all $d\ge 2$ when $q\ge \exp(\Omega(d\ln d))$ (see also a slightly different version in~\cite{CDK+20} that applies up to $(1-\eps)\beta_o$ but requires $q\ge \exp(\Omega(d^{3/2}\ln d))$).

The ``gap'' in allowed values of $\beta$ in Theorem~\ref{thm:mainpotts} is partly due to our focus on reducing the necessary lower bound on $q$. 
Such a gap should not be necessary (see~\cite{HJP20}) for large enough $q$, and it would be interesting to close the gap without strengthening the lower bound on $q$ in the statement.

Guided by results for other models in statistical physics, one might expect efficient approximation algorithms for $Z_G(q,\beta)$ to exist on graphs of maximum degree $d$ when $(q,\beta)$ lie in the Gibbs uniqueness region of the infinite $d$-regular tree, for any $d,q\ge 3$ (For $d=2$ the model is exactly solvable and $q=2$ gives the Ising model, which is rather well-understood by comparison).
Our methods give an FPTAS in this region for the largest-known range of $d$ and $q$.

\begin{theorem}\label{thm:FPTASuniqueness}
There exist absolute constants $d'$ and $q'$ such that for all $d\ge d'$ and $q\ge q'$, there is an FPTAS for $Z_G(q,\beta)$ on graphs of maximum degree $d$ when $\beta>0$ lies in the Gibbs uniqueness region of the $q$-color Potts model on the infinite $d$-regular tree. 
\end{theorem}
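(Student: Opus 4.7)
The plan is to derive Theorem~\ref{thm:FPTASuniqueness} from the high-temperature part of Theorem~\ref{thm:mainpotts}. As the discussion following Theorem~\ref{thm:mainpotts} notes, in the regime $\beta\le(1-\epsilon)\beta_o(q,d)$ the high-temperature FPTAS requires no expansion hypothesis, applies on arbitrary graphs of maximum degree $d$, and needs only $q\ge q_0(\epsilon)$ and $d\ge d_0(\epsilon)$. Thus it is enough to exhibit an absolute constant $\epsilon_0>0$ and thresholds $d', q'$ such that the entire Gibbs uniqueness region on the infinite $d$-regular tree is contained in $[0,(1-\epsilon_0)\beta_o(q,d)]$ whenever $d\ge d'$ and $q\ge q'$; applying Theorem~\ref{thm:mainpotts} with $\epsilon=\epsilon_0$ then finishes the proof.

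To exhibit the required containment I would work directly with the standard ferromagnetic Potts tree recursion. Writing $x=e^\beta$ and letting $r$ denote the ratio of the root marginal of a distinguished spin to any other spin under a subtree's free-boundary measure, the relevant one-dimensional map on the $(d-1)$-ary rooted tree is
\[
r \mapsto \left(\frac{xr + q - 1}{r + x + q - 2}\right)^{d-1},
\]
which has the symmetric fixed point $r=1$ for every $\beta$. Gibbs uniqueness at inverse temperature $\beta$ on the infinite $d$-regular tree is equivalent to the non-existence of further non-negative fixed points of this map that can be realised as limits of finite-volume boundary conditions, and for $q\ge 3$ such fixed points first appear (via a saddle-node bifurcation) at a threshold $\beta_u(q,d)$ strictly below the linear-instability threshold of $r=1$. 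A direct asymptotic analysis of the bifurcation equation yields $\beta_u(q,d) = (1+o(1))\,\frac{\ln q}{d}$, while the defining formula in the excerpt gives $\beta_o(q,d) = (2+o(1))\,\frac{\ln q}{d}$, so $\beta_u/\beta_o \to 1/2$ as $q\to\infty$ uniformly in large $d$.

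The main obstacle is making the comparison between $\beta_u$ and $\beta_o$ quantitative and uniform in $d\ge d'$: the $o(1)$ terms depend on both $q$ and $d$, and one must verify that they do not encroach on the factor-$(1-\epsilon_0)$ margin. The cleanest approach is to substitute $x=1+t/d$ with $t$ of order $\ln q$ into the fixed-point equation, perform a Taylor expansion to two orders, and check that for $q$ larger than some absolute $q'$ (and correspondingly $d\ge d'$) the resulting bound $\beta_u(q,d)\le (1-\epsilon_0)\beta_o(q,d)$ holds for, say, $\epsilon_0=1/4$. This is a one-variable calculation with explicit error control, and it is the only genuinely new ingredient needed beyond invoking the high-temperature half of Theorem~\ref{thm:mainpotts}. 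Taking $d'=\max(d_0(\epsilon_0),d_0')$ and $q'=\max(q_0(\epsilon_0),q_0')$, where $d_0',q_0'$ are the thresholds produced by this comparison, yields the required FPTAS on graphs of maximum degree $d$ throughout the uniqueness region.
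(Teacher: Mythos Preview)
Your proposal is correct and follows essentially the same route as the paper: pick $\epsilon_0=1/4$, observe that $\beta_u(q,d)\sim\ln(q)/d$ while $\beta_o(q,d)\sim 2\ln(q)/d$ so that $\beta_u\le(1-\epsilon_0)\beta_o$ for large enough $q$ and $d$, and invoke the high-temperature algorithm (which needs only $q\ge q_0(\epsilon_0)$, $d\ge d_0(\epsilon_0)$ and no expansion). The only difference is that the paper simply cites the asymptotic $\beta_u\sim\ln(q)/d$ from the literature (H\"aggstr\"om~1996 and Bordewich--Greenhill--Patel~2016) rather than re-deriving it via the tree recursion, so your proposed Taylor analysis of the fixed-point equation is not actually needed.
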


While our high-temperature result does not require the same novel combinatorial techniques (described below) as the low-temperature improvement, our work as a whole emphasizes some interesting aspects of the polymer models that we use to prove the above results. We discuss in a concluding section the ``critical window'' where $\beta$ is close to $\beta_o$ and the difficulties associated with lowering $q$.

One of our key innovations for the low-temperature result, which is of independent interest, is an upper bound on the number of sets in a $d$-regular $\eta$-expander with a prescribed edge-boundary size $b$. 
This is reminiscent of container theorems from extremal graph theory that bound the number of independent sets in some graph or hypergraph. 
The origin of container theorems can be traced back to Kleitman and Winston~\cite{KW80,KW82} but the ideas were developed significantly by Sapozhenko~\cite{Sap87,Sap01}, who specifically studied independent sets in expander graphs and the hypercube. 
Algorithmic applications of containers for counting independent sets in bipartite graphs were recently given by Jenssen and Perkins~\cite{JP20}, also with Potukuchi~\cite{JPP22}.
The type of bound we need does not seem to relate closely to an existing container theorem, and we prove the following novel result using ideas from randomized algorithms. 
We say that a subset $A$ of vertices in a graph $G$ is connected if the induced subgraph $G[A]$ is connected.

\begin{theorem}\label{thm:enum}
Let $G=(V,E)$ be a $d$-regular $\eta$-expander and let $u$ be a vertex in $G$. Then the number of connected sets $A\subset V$ of size at most $|V|/2$ such that $u \in A$ and $|\nabla(A)|=b$ is at most $d^{O\left((1+1/\eta)b/d\right)}$.
\end{theorem}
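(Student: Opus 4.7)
The plan is to parameterize by $k = |A|$ and bound the count for each value of $k$. I would first pin down the range of $k$: $A$ being connected implies $|E(G[A])| \ge k - 1$; $d$-regularity gives $|E(G[A])| = (dk - b)/2$. Together with the $\eta$-expansion hypothesis, these yield $(b-2)/(d-2) \le k \le b/\eta$, so $k$ lies in an interval with left endpoint $\Theta(b/d)$ and right endpoint $\Theta(b/\eta)$.

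In the ``sparse'' regime $k = O(b/d)$, the classical tree-lemma bound applies directly: the number of connected $k$-vertex subsets containing a fixed vertex in a graph of maximum degree $d$ is at most $(ed)^{k-1}$, which is $d^{O(b/d)}$ and already matches the target. The content of the theorem lies in the ``dense'' regime where $k$ can be as large as $b/\eta$, since there the tree-lemma only yields $d^{O(b/\eta)}$, off by a factor of $d/\eta$ in the exponent.

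For the dense regime, my plan is to combine a Sapozhenko-style container argument with a Karger cut-counting step. Since $A$ has at most $b$ boundary vertices (those with an edge leaving $A$), at least $k - b$ of its vertices are \emph{interior}, with all $d$ neighbors in $A$. I would first approximate $A$ by a compact sketch $\sigma(A)$---for instance its boundary vertices together with their outgoing edges---obtained by a greedy construction inside $A$'s closure, and enumerate the possible sketches. The goal is to bound the number of sketches by $d^{O(b/d)}$ via a tree-lemma style argument on a sketch-level graph, and then, for each sketch, bound the number of $A$ refining it by $d^{O(b/(d\eta))}$ using Karger's near-minimum cut theorem applied to an auxiliary multigraph obtained from $G$ by contracting the sketch interior. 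The extra factor of $1/\eta$ in the final exponent would arise from this second step.

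The main obstacle will be obtaining a ``base-$d$'' bound rather than the ``base-$n$'' bound that a black-box Karger would give: Karger bounds the number of cuts of size at most $b$ in a graph on $n$ vertices with min cut $\lambda$ by $n^{O(b/\lambda)}$, which with $\lambda \ge \eta$ only yields $n^{O(b/\eta)}$. Replacing $n$ by $d$ requires applying Karger not to $G$ itself but to an auxiliary graph whose size is controlled by the sketch, exploiting that in a $d$-regular graph the edge count $nd/2$ grows linearly in $n$, so the per-step probability of a Karger contraction disturbing $\nabla(A)$ is roughly $b/(nd)$ rather than the generic $b/(n\eta)$. Making this interplay precise, and ensuring the sketch enumeration truly costs only $d^{O(b/d)}$ even when $|\partial A|$ is close to $b$, will be the technical heart of the argument.
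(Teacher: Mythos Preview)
Your overall architecture---encode $A$ by a compact ``sketch'' enumerable via a tree lemma, then recover $A$ by applying Karger's cut-counting theorem to a contracted auxiliary graph---is exactly the paper's approach. But two pieces of your accounting are wrong, and fixing them reverses where the $1/\eta$ factor lives.

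First, the sketch you propose (the boundary vertices of $A$ with their outgoing edges) can have size up to $b$, not $O(b/d)$: every boundary edge contributes a boundary vertex in $A$, and these need not coincide. The tree lemma would then only give $d^{O(b)}$ sketches, far too many. Your observation that at least $k-b$ vertices of $A$ are interior helps only when $k>b$, i.e.\ when $\eta<1$; for $\eta\ge 1$ every vertex of $A$ can be on the boundary. The paper's fix is to take as core a maximal $A_0\subset A$ with pairwise-disjoint neighborhoods in an auxiliary bipartite vertex--edge graph $J$ (where $u\sim e$ iff $e$ touches some $G$-neighbor of $u$). Since each $J$-neighborhood has size $\ge d^2/2$ while $|N_J(A)|\le b+|E(G[A])|+b(d-1)$, and expansion forces $|E(G[A])|\le bd/(2\eta)$, one gets $|A_0|\le \frac{b}{d}(2+\frac{1}{\eta})$. \emph{This} is where the $1/\eta$ enters. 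A short argument shows $A_0\cup\{u\}$ is connected in $G^7$, so the tree lemma yields $d^{O((1+1/\eta)b/d)}$ choices for $A_0$.

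Second, in the Karger step the relevant min-cut is $d$, not $\eta$: a singleton has boundary exactly $d$, and (for the graph classes in the paper) no smaller cut exists; contraction only raises the min-cut. Contracting all edges of $G$ at distance more than $5$ from $A_0$ yields a graph on $O(|A_0|\,d^4)=O(bd^3)$ vertices with min-cut $\ge d$, still containing $\nabla(A)$ as a cut. Karger then bounds the number of candidate boundaries by $\binom{O(bd^3)}{2b/d}2^{2b/d}=d^{O(b/d)}$, with no $\eta$-dependence. So the bound $d^{O(b/(d\eta))}$ you assigned to the Karger step is neither attainable nor needed; the $1/\eta$ lives entirely in the core-size estimate upstream.
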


One way of interpreting the theorem is that we show that a large-girth graph (or, if you prefer, an infinite regular tree) is extremal for the number of connected sets with a given boundary $b$, we explain this heuristic in the overview below.

Our methods also give an idea of the typical structure of the Potts model on $\eta$-expanders below and above the order-disorder threshold $\beta_o$. More precisely, we show the following.

\begin{theorem}\label{thm:pt}
Let $\epsilon$, $d$, $q$, and $G$ be as in Theorem~\ref{thm:mainpotts}, with $n=|V(G)|$. As $n\to\infty$, for a coloring sampled from the ferromagnetic Potts model,
\begin{enumerate}
\item\label{part:pt1} for $\beta < (1 - \epsilon)\beta_o$, each color class has size $(1 + o(1))n/q$ with high probability, and
\item\label{part:pt2} for $\beta > (1+\epsilon)\beta_o$, there is a color class that contains at least $(1-o_d(1))n$ vertices with high probability.
\end{enumerate}
\end{theorem}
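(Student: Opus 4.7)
The plan is to derive both parts from the polymer-model analysis that underlies Theorem~\ref{thm:mainpotts}. The algorithms in that theorem proceed via convergent cluster expansions around appropriate reference phases, and this machinery gives structural information about typical samples from $\mu_{G,q,\beta}$ beyond what is needed purely for approximation of $Z$.

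For part~\ref{part:pt2}, the low-temperature regime $\beta>(1+\eps)\beta_o$, for each color $i\in[q]$ let $Z_i$ denote the contribution to $Z_G(q,\beta)$ from assignments $\sigma$ with strictly more than $n/2$ vertices of color $i$, and let $Z_{\mathrm{bad}}$ collect the remaining assignments (no color class exceeds $n/2$). Since the events ``$\sigma$ has more than $n/2$ vertices of color $i$'' are pairwise disjoint across $i$, we have
\[
Z_G(q,\beta) \;=\; \sum_{i=1}^{q} Z_i \;+\; Z_{\mathrm{bad}},
\]
and by symmetry $Z_1=\dotsb=Z_q$, so it suffices to prove $Z_{\mathrm{bad}}/(qZ_1) = e^{-\Omega(n)}$. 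The ordered-phase polymer representation writes a configuration in $Z_i$ as the monochromatic $i$-background plus a family of disjoint polymers $\gamma$ (connected subsets of non-$i$ vertices of size at most $n/2$), each carrying a weight whose exponent contains $-\beta|\nabla(\gamma)|$. Using Theorem~\ref{thm:enum} to enumerate polymers rooted at a given vertex with prescribed edge-boundary $b$, the cluster expansion converges for $\beta > (1+\eps)\beta_o$, and a standard Peierls-type estimate shows that every configuration counted in $Z_{\mathrm{bad}}$ must carry a total polymer edge-boundary of linear size (since its largest color class has at most $n/2$ vertices, expansion forces bichromatic edges of order $n$). The entropy–energy trade-off between Theorem~\ref{thm:enum} and $e^{-\beta b}$ then yields $Z_{\mathrm{bad}} \le e^{\beta|E|}\cdot e^{-\Omega(n)}$, whereas the monochromatic configuration alone gives $Z_1 \ge e^{\beta|E|}$, completing part~\ref{part:pt2}.

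For part~\ref{part:pt1}, the high-temperature regime $\beta<(1-\eps)\beta_o$, let $X_i=|\{v\in V:\sigma(v)=i\}|$. By symmetry $\E_{\mu}[X_i]=n/q$ for every $i$, so it is enough to control $\mathrm{Var}(X_i)$ and apply Chebyshev. The high-temperature (disordered-phase) cluster expansion around the uniformly random coloring, convergent in this parameter regime, gives exponential decay of pair correlations:
\[
\bigl|\Prob_{\mu}[\sigma(u)=\sigma(v)=i]-1/q^{2}\bigr| \;\le\; C\rho^{\,\dist(u,v)}
\]
for some $\rho<1$ and constants depending on $\eps,q,d$. Since $G$ is $d$-regular, the number of pairs at distance $k$ is at most $nd^{k}$, and choosing parameters so that $d\rho<1$ (which follows from the convergence radius used for the FPTAS) makes the geometric sum give $\mathrm{Var}(X_i) = O(n)$. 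Chebyshev then yields $\Prob_{\mu}[\,|X_i-n/q|\ge n^{2/3}\,]=o(1/q)$, and a union bound over the $q$ colors completes part~\ref{part:pt1}.

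The main obstacle is extracting the explicit decay-of-correlations estimate in part~\ref{part:pt1} with constants valid throughout the full parameter range of Theorem~\ref{thm:mainpotts}, matching the claim that $\beta_1\sim\beta_o$ as $d\to\infty$; this is not stated as a black-box in the excerpt but is an expected byproduct of a convergent cluster expansion. The low-temperature part is more mechanical once Theorem~\ref{thm:enum} and the polymer weight bounds are in hand, with the key quantitative point being that $\beta > (1+\eps)\beta_o$ is precisely the threshold at which the exponential suppression $e^{-\beta b}$ dominates the $d^{O((1+1/\eta)b/d)}$ entropy from Theorem~\ref{thm:enum}.
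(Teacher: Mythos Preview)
Your approach to part~\ref{part:pt2} is essentially the paper's: the decomposition $Z=\sum_i Z_i + Z_{\mathrm{bad}}$ and the entropy--energy comparison is exactly the content of Lemma~\ref{lem:maingroundstate}, which the paper simply cites.

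For part~\ref{part:pt1} your route is genuinely different from the paper's, and it contains a gap you yourself flag. The paper does not use spin correlations at all. Instead it works in the random-cluster representation: the high-temperature polymer model of Section~\ref{sec:hightemp} has $g(\gamma)=v_\gamma$, so the truncated cluster expansion at level $L=(1+\delta)\ln n$ already captures $\Xi$ up to $e^{-n^{-\delta}}$, which means that with high probability every random-cluster component has at most $(1+\delta)\ln n$ vertices. One then applies the Edwards--Sokal coupling: color each component independently and uniformly from $[q]$. Because $\sum_i |C_i|^2 \le (\max_i |C_i|)\cdot n \le (1+\delta)n\ln n$, Hoeffding's inequality gives concentration of each color count around $n/q$, and a union bound over the $q$ colors finishes.

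Your proposed alternative needs the pair-correlation decay rate $\rho$ to satisfy $d\rho<1$ so that the variance sum $\sum_k d^k\rho^k$ converges. But the only Koteck\'y--Preiss condition established in the paper for the high-temperature model is Lemma~\ref{lem:hightempKP} with $g(\gamma)=v_\gamma$, which yields decay like $e^{-\dist(u,v)}$, i.e.\ $\rho\approx e^{-1}$; this does not beat $d^k$ once $d\ge 3$. Strengthening $g$ to $g(\gamma)\ge (2\ln d)v_\gamma$ would require redoing the verification of~\eqref{eqn:KPcondition} with a larger constant in place of $e^3$, and it is not clear this goes through uniformly over the full range $\beta<(1-\eps)\beta_o$ claimed in Theorem~\ref{thm:mainpotts}. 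The paper's Edwards--Sokal/Hoeffding argument sidesteps this entirely, because it only needs components of size $O(\ln n)$ rather than a quantitative correlation decay rate.
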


The main result of~\cite{HJP20} gives a much more complete description of Potts model (in fact, the more general random cluster model that we describe below) on $d$-regular locally tree-like graphs subject to an $\eta$-expansion condition of the form $\eta=\Omega(d)$ and a stronger small-set expansion condition. 
It would be interesting to increase the precision of our approach to give a similar description subject to weaker conditions, e.g.\ for the Potts model on the hypercube. 
We give an example of what our methods currently reveal about the Potts model on the hypercube below.

An equivalent definition of $Z$ is given by the Fortuin--Kasteleyn representation~\cite{FK72}
\begin{equation}\label{eq:FKrep} Z_G(q,\beta) = \sum_{A \subseteq E} q^{c(A)} (e^\beta - 1)^{|A|}, \end{equation}
where $c(A)$ is the number of connected components of the graph $(V,A)$ on the same vertex set as $G$ but with edge set $A$. 
In this form, $Z$ is known as the partition function of the random cluster model, and it is easy to see that $Z$ is a reparametrization of the Tutte polynomial of $G$. The random cluster model is a distribution on subsets of edges where the probability mass of $A \subseteq E$ is proportional to $q^{c(A)} (e^\beta - 1)^{|A|}$. It is standard to change parameters via $p=e^\beta-1$. 
As an illustration of the power of our new techniques, we give the following structural result for the components induced by an edge set drawn from the random cluster model on the discrete $d$-dimensional hypercube $Q_d$.

\begin{theorem}\label{thm:maincube}
Let $\eps>0$ and $d\ge d_0(\eps)$ for the $d_0$ in Theorem~\ref{thm:mainpotts}.
For a constant $c$ at least the minimum $c$ of Theorem~\ref{thm:mainpotts}, let $q = d^{c}$ be a positive integer. There is a sequence $p_o = p_o(d,q)$ such that in the random cluster model on the $d$-regular discrete hypercube $Q_d$, for every $\epsilon >0$,
\begin{enumerate}
\item\label{part:subcritical} when $p \leq  (1 - \epsilon)p_o$, every component has size at most $d$ with high probability, and
\item\label{part:supercritical} when $p \geq (1 + \epsilon)p_o$, there is a unique connected component of size $2^d\left(1 - o(1)\right)$ with high probability.
\end{enumerate}

\end{theorem}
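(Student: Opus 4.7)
\textit{Setup.} The plan is to transfer both halves of Theorem~\ref{thm:pt} to the random cluster model through the Edwards--Sokal coupling. Define $p_o = e^{\beta_o(q,d)} - 1$, so that the monotone map $\beta = \ln(1+p)$ identifies $p \lessgtr (1\pm\epsilon)p_o$ with $\beta \lessgtr (1\pm\epsilon')\beta_o$ for some $\epsilon' > 0$ of the same order as $\epsilon$ (using that $\beta_o$ is bounded and bounded away from $0$ when $q = d^c$ and $d$ is large). The hypercube $Q_d$ is $d$-regular, triangle-free, and a $1$-expander by the edge-isoperimetric inequality, so it satisfies the second alternative in Theorem~\ref{thm:pt}. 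A sample $A$ from the random-cluster measure at $(q,p)$ arises by first drawing $\sigma$ from the Potts model at $\beta = \ln(1+p)$ and then, independently over $\sigma$-monochromatic edges, retaining each with probability $p/(1+p) = 1 - e^{-\beta}$.

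\textit{Part~\ref{part:subcritical}.} Let $C_v$ denote the FK cluster containing $v$. If $|C_v| \geq d+1$ then there is a subtree $T \subseteq Q_d$ with $v \in V(T)$ and $|E(T)| = d$ whose edges all lie in $A$; under the coupling this requires $\sigma$ to be monochromatic on $V(T)$ and all $d$ edges of $T$ to survive the bond retention step. The number of such rooted subtrees is at most $(ed)^d$ by a standard enumeration in graphs of maximum degree $d$. The disordered-phase polymer analysis behind Theorem~\ref{thm:pt}(\ref{part:pt1}) gives, for every $S \subseteq V(Q_d)$,
\[
\Prob[\sigma \text{ constant on } S] \leq C_\epsilon \, q^{1-|S|},
\]
with $C_\epsilon$ produced by convergence of the cluster expansion (the product-measure value being exactly $q^{1-|S|}$). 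A union bound then gives
\[
\Prob[\exists v : |C_v| \geq d+1] \leq 2^d (ed)^d C_\epsilon q^{-d} (p/(1+p))^d,
\]
which is at most $C_\epsilon (2ed/q)^d = o(1)$ when $q = d^c$ for a sufficiently large constant $c$.

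\textit{Part~\ref{part:supercritical}.} Use the strengthened form of Theorem~\ref{thm:pt}(\ref{part:pt2}) noted after the theorem: there is a color $r$ with $V_r = \{v : \sigma(v) = r\}$ of size $(1-o(1))2^d$ w.h.p., as $d \to \infty$. The induced subgraph $H = Q_d[V_r]$ retains $(1-o(1))|E(Q_d)|$ edges and, because $Q_d$ is $1$-expanding, still has strong expansion after removing the $o(n)$ missing vertices. Under the coupling, each edge of $H$ is independently retained with probability $1 - e^{-\beta} \geq 1 - e^{-\beta_o}$, which is bounded below by roughly $1/2$ for large $d$ (since $\beta_o \to \ln 2$) and hence lies a constant factor above the $\Theta(1/d)$ bond-percolation threshold on $Q_d$. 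A standard supercritical bond-percolation argument at constant density, bounding the expected number of vertices in non-giant components by summing over their isoperimetric edge-boundaries via Theorem~\ref{thm:enum}, then produces a unique FK cluster of size $(1-o(1))2^d$ inside $H$, and hence in $Q_d$ up to an $o(n)$ error.

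\textit{Main obstacle.} The technical heart is the monochromatic bound in Part~\ref{part:subcritical}: it is exact with $C_\epsilon = 1$ under the pure product measure, so the content is to show that the cluster expansion in the disordered phase (available from the polymer analysis underlying Theorem~\ref{thm:pt}(\ref{part:pt1})) produces only a bounded correction. In Part~\ref{part:supercritical}, the work is to turn the dominant-color statement into a high-density bond-percolation estimate on the almost-complete subgraph $H$, where Theorem~\ref{thm:enum} plays the role of the container bound needed to control boundaries of small components.
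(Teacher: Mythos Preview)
Your Part~\ref{part:subcritical} takes an unnecessary detour and leaves its key step unjustified. The high-temperature polymer model of Section~\ref{sec:hightemp} \emph{is}, up to the factor $q^{-n}$, the random-cluster partition function: its polymers are precisely the nontrivial FK components. Hence the Koteck\'y--Preiss tail bound with $g(\gamma)=v_\gamma$ and cutoff $L=(1+\delta)\ln n$ directly controls the probability that some FK component has more than $L$ vertices, and since $n=2^d$ this gives the bound $d$ immediately (this is exactly what the paper does). Your route instead rests on the inequality $\Prob[\sigma\text{ constant on }S]\le C_\eps\, q^{1-|S|}$, which you assert but do not prove. Ferromagnetic positive association yields the \emph{opposite} inequality $\Prob[\sigma\text{ constant on }S]\ge q^{1-|S|}$, and via the FK representation your desired upper bound is equivalent to controlling $\E[q^{|S|-k(A,S)}]$ where $k(A,S)$ counts FK components meeting $S$; this is essentially the component-size control you set out to prove, so the argument is close to circular.

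Your Part~\ref{part:supercritical} contains a genuine error. You claim $\beta_o\to\ln 2$, but for $q=d^c$ one has $\beta_o\sim 2\ln(q)/d = 2c\ln(d)/d\to 0$. Consequently the retention probability $1-e^{-\beta}$ at $\beta=(1+\eps)\beta_o$ is $\Theta((\ln d)/d)$, not a constant, and your ``standard supercritical bond-percolation argument at constant density'' does not apply. Even if one works at this density, you would need supercritical percolation on $Q_d[V_r]$ for an \emph{adversarially} placed $V_r$ of size $(1-o(1))2^d$; your claim that this subgraph inherits strong expansion is unjustified, and Theorem~\ref{thm:enum} does not supply it. The paper bypasses all of this with a short coupling argument: if two FK components each had more than $\delta n$ vertices they would receive the same random color with probability at most $1/q$, contradicting the dominant-color event from the low-temperature analysis; and once at most one component exceeds $\delta n$, Hoeffding's inequality applied to the uniform coloring of the remaining (small) components forces the single large component to have size $(1-o(1))n$. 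No percolation estimate on any subgraph is required.
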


Some remarks are in order here. First, the condition $q = d^c$ means that the number of colors grows (poly-logarithmically) with the number of vertices in the graph. This is admittedly a slightly different regime than the more common one where $q$ and $d$ are both fixed. This difference turns out to be inconsequential, and the behavior predicted by Theorem~\ref{thm:pt} ultimately holds. 

Second, we note that for any constant $c>0$ and $q=d^c$, $\ln (1 + p_o) \sim p_o \sim \frac{2\ln q}{d}$ as $d\to\infty$. 
When considering the random cluster model one should not need to restrict $q$ to the positive integers, though our methods currently require the condition that $q$ is an integer.  
It would be interesting to remove this restriction, and we discuss the barriers in the concluding section.

\subsection{An overview of techniques}

Our algorithms follow from the cluster expansion for abstract polymer models, which was first used to design sampling algorithms for the Potts model by Helmuth, Perkins and Regts~\cite{HPR19b}. The method has been studied intensively since, including further works concerning the ferromagnetic Potts model~\cite{JKP20,BCH+20,HJP20} and many other works tackling various Markov random fields.
At a high level, the method proceeds by transforming the partition function into a representation that corresponds to an abstract but particularly well-studied \emph{polymer model} from statistical mechanics. 
The strength of this approach lies in the fact that general conditions on these models give a wealth of probabilistic and algorithmic information.
The specific polymer models we rely on have appeared in several earlier works~\cite{JKP20,BCH+20,HJP20,CDK+20,CDK20}, and our key innovations are improved analyses of the models to obtain algorithms under significantly weaker assumptions. 
Let us split the rest of the discussion into the \emph{low-temperature} regime ($\beta \geq (1 +\epsilon)\beta_o$) and the \emph{high-temperature} regime ($\beta \leq (1 -\epsilon)\beta_o$).

\textbf{Low temperature.} This part contains our main contributions. Some of the difficulties we face boil down to understanding certain combinatorial questions that are of independent interest. Informally, two problems that arise are as follows. Let $G$ be a $d$-regular $\eta$-expander with $n$ vertices.
\begin{enumerate}[label = {(\alph*)}]
\item\label{enumproblem-boundary} What is the number of connected, induced subgraphs of $G$ containing a given vertex and with an edge-boundary of size $b$?
\item\label{enumproblem-colorings} What is the number of $q$-colorings of $G$ that have exactly $k$ non-monochromatic edges?
\end{enumerate}

Let us first quantitatively motivate the types of answers that can be expected for these questions. 
For~\ref{enumproblem-boundary}, suppose that $G$ is weakly ``locally tree-like'' in the sense that every connected set of $a \ll n$ vertices spans $O(a)$ edges and therefore has an edge-boundary of size $\Omega(d a)$. Working backwards, a naive heuristic is that a typical connected subset of $\Theta(b/d)$ vertices has an edge-boundary of size $b$. 
The locally tree-like property also means that the number of connected, induced subgraphs of $G$ containing a given vertex is roughly the number of trees with degree at most $d$ rooted at the vertex, and hence for subgraphs on $\Theta(b/d)$ vertices this number is roughly $d^{O(b/d)}$, we can answer~\ref{enumproblem-boundary} for locally tree-like graphs. The same argument also works for graphs satisfying condition~\eqref{eqn:expansion} (by noting that a set of boundary size $b$ may have at most $b/\eta$ vertices) to give a bound of $d^{O(b/\eta)}$.
Theorem~\ref{thm:enum}
gives a stronger upper bound for graphs satisfying our expansion condition~\eqref{eqn:expansion} by finding a small ``certificate'' for each connected, induced subgraph with a prescribed edge-boundary size. 
Since the certificate is small, there cannot be too many valid certificates and an upper bound on the number of sets we are interested in follows.
One of the key tools for the proof is a standard but elegant adaptation of Karger's randomized algorithm for finding minimum cuts~\cite{Kar93} to the problem of counting cuts; see Theorem~\ref{thm:karger}.
To the best of our knowledge, our application of this celebrated result in combinatorial optimization to analyze a counting algorithm is novel, and the technique may be of independent interest.
Some additional ideas used in the proof can be found in~\cite{PY22}.

For~\ref{enumproblem-colorings}, the heuristic is as follows: the simplest way that one can obtain a $q$-coloring of $G$ with $k$ non-monochromatic edges is by coloring all but $k/d$ randomly chosen vertices, which for small $k$ are likely to form an independent set, in the same color. The independent set on $k/d$ vertices can be colored arbitrarily with the other colors, giving the required $k$ non-monochromatic edges. This gives a lower bound of roughly $\binom{n}{k/d} q^{k/d} = (ndq/k)^{\Omega(k/d)}$. We provide another enumeration result (Lemma~\ref{lem:groundstate}) that again justifies this heuristic for graphs satisfying the expansion condition~\eqref{eqn:expansion}. 
Note that the emergence of problem~\ref{enumproblem-colorings} involving colorings means that it is not straightforward to generalize our low-temperature algorithm to the random cluster model (with non-integer $q$).

In a calculation establishing that the polymer models we use provide efficient algorithms, we require upper bounds on $Z_\gam(q-1,\beta)$ where $\gam$ is a connected, induced subgraph of $G$. 
Previous works used crude bounds here, and we use better bounds from~\cite{SSSZ20} that are tight when $\gamma$ is isomorphic to $K_{d+1}$, or in the triangle-free case, complete bipartite graphs $K_{d,d}$. 
This is still not ideal as an $\eta$-expander cannot contain such subgraphs, but it is generally difficult to prove improved bounds, or even identify graph properties that would allow an improved bound in such problems~\cite{PY22}.
This is another place in which our methods require the Potts model, as such bounds are not known for the random cluster model. 
In this way, our algorithmic work motivates new questions in extremal graph theory; see Conjectures~\ref{conj:cliqueextremal} and~\ref{conj:bicliqueextremal}.

\textbf{High temperature.} 
The main idea to handle this regime is to switch to the random cluster model. The primary reason for this is to access a convenient representation as a polymer model. 
Using the Edwards--Sokal coupling~\cite{ES88}, one can view both of these models on the same probability space. 
The coupling gives an algorithm for converting a random cluster sampling algorithm to a Potts model sampling algorithm and vice versa. 
The polymer model we use was also generalized to a partition function related to Unique Games in~\cite{CDK+20}. 
The analysis given in this paper improves upon the analyses in~\cite{BCH+20} and~\cite{CDK+20}. 
Moreover, this part of the proof works for \emph{any} graph of maximum degree $d$ (i.e.\ we can dispense with $d$-regularity and with an expansion assumption), a fact present in these earlier works which does not seem to be well-known in general.

\subsection{Related work}
Algorithmic applications of the cluster expansion originate in~\cite{HPR19b}, and for the ferromagnetic Potts model in particular there are two works of note that establish algorithms at all temperatures on subgraphs of the integer lattice $\mathbb{Z}^d$~\cite{BCH+20} and on regular graphs with strong expansion~\cite{HJP20}. 
These results rely on $q$ being at least exponentially large in $d$ (and even larger if the expansion guarantee is as weak as that of Theorem~\ref{thm:mainpotts}), which is a substantial drawback but in return the approaches work for all temperatures. 
The focus of our work is slightly different, and essentially we pay for working with smaller $q$ that grows only polynomially with $d$ (along with much weaker expansion) by accepting a small gap in our techniques around the threshold $\beta_o$.

Our main tool is the cluster expansion for abstract polymer models, as in~\cite{HPR19b,JKP20,BCH+20,HJP20} (and many more works), where deterministic approximation algorithms for partition functions follow from convergent cluster expansions by approximating a truncated series. 
See also Barvinok's work (e.g.\ the monograph~\cite{Bar16}) which follows a similar series approximation approach.
A twist on the cluster expansion method originating in~\cite{CGG+21} uses a Markov chain to sample from an abstract polymer model.
The approaches of~\cite{JKP20,CGG+21} were combined and generalized in~\cite{GGS21b} to give algorithms for general spin systems on bipartite expanders. 
We do not pursue Markov-chain based sampling from polymer models here as the conditions required in~\cite{CGG+21,GGS21b,BCP22} (i.e.\ upper bounds on polymer weights) are stronger than what we can guarantee with our current methods, and we do not want to restrict our attention to bipartite graphs.

A popular approach to approximate counting and sampling is to simulate a Markov chain whose stationary distribution is the Gibbs measure of interest. 
For the Potts model there are two main Markov chains, the \emph{Glauber dynamics} and the \emph{Swendsen--Wang dynamics}. 
The main result of~\cite{BGP16} shows for graphs of maximum degree $d$ that Glauber dynamics is rapidly mixing for $\beta$ up to a certain threshold and slow mixing for $\beta$ at least some slightly larger threshold.
For Glauber dynamics on the random regular graph there is an improved understanding: the Gibbs uniqueness threshold marks the change from rapid mixing to torpid mixing~\cite{BG21a,CGG+22}.
For comparison with our results, we note that each of the three thresholds discussed above lie below $\beta_o$ and are asymptotic to $\ln(q)/d$ as $d\to\infty$ (which is asymptotically half of $\beta_o\sim 2\ln(q)/d$ as $d\to\infty$). It should be noted though, that these results hold for {\em every} $q$, while the results in this paper are restricted to large $q$.
The picture for Swendsen--Wang dynamics on the random regular graph is less well-understood; see~\cite{CGG+22} and the references therein for details.  
An alternative approach to sampling from the ferromagnetic Potts model (and random cluster model) on the random regular graph in the Gibbs uniqueness regime of the ferromagnetic Potts model was given in~\cite{BGG+20}.

\subsection{Organization}
Section~\ref{sec:prelim} is an introduction to abstract polymer models and their use in obtaining approximate counting and sampling algorithms. 
In Section~\ref{sec:lowtemp} we prove our algorithmic result---Theorem~\ref{thm:mainpotts}---for the low-temperature case ($\beta > (1 + \epsilon)\beta_o$) using the solutions to problems~\ref{enumproblem-boundary},~\ref{enumproblem-colorings}, and extremal results on partition functions (Theorem~\ref{thm:enum}, Lemma~\ref{lem:maingroundstate}, and Lemmas~\ref{lem:extremal} and~\ref{lem:TFextremal} respectively). Section~\ref{sec:enum} is dedicated to the proof of our container-like result Theorem~\ref{thm:enum}, solving problem~\ref{enumproblem-boundary}; and Section~\ref{sec:maingroundstate} to the proof of Lemma~\ref{lem:maingroundstate}, solving problem~\ref{enumproblem-colorings}. In Section~\ref{sec:extremal}, we prove the extremal results, Lemmas~\ref{lem:extremal} and~\ref{lem:TFextremal}. Section~\ref{sec:hightemp} consists of the proof of the high-temperature case ($\beta < (1-\epsilon)\beta_o$) in Theorem~\ref{thm:mainpotts}. In Section~\ref{sec:pt} we use our results to characterize the structure of the Potts model, proving Theorem~\ref{thm:pt}. 
Finally, Section~\ref{sec:rccube} is dedicated to the structure of the random cluster model on the hypercube, namely the proof of Theorem~\ref{thm:maincube}. We conclude with a discussion of further directions to pursue.

\section{Preliminaries}\label{sec:prelim}

\subsection{Abstract polymer models}

An \emph{abstract polymer model}~\cite{GK71} is given by a set $\cP$ of polymers, a compatibility relation $\sim$ on polymers, and a weight $w_\gam$ associated to each polymer $\gam\in \cP$. 
In applications, polymers might be combinatorial objects such as vertex subsets or connected subgraphs of a graph.
Let $\Omega$ denote the collection of all sets of pairwise compatible polymers, including the empty set (i.e.\ the set of independent sets in the graph on $\cP$ with edges between incompatible polymers). 
Then the polymer model partition function is 
\[ \Xi = \sum_{\Lambda\in\Omega}\prod_{\gam\in\Lambda}w_\gam. \]
The associated Gibbs measure $\nu$ on $\Omega$ is given by $\nu(\Lambda) = \prod_{\gam\in\Lambda}w_\gam / \Xi$, in much the same way as the Gibbs measures associated to the partition function of the Potts and random cluster models.
It may be helpful to observe that $\Xi$ is the multivariate independence polynomial of the graph on $\cP$ with edges between incompatible polymers, evaluated at the polymer weights given by $w$, and hence $\nu$ is the associated hard-core measure.

The \emph{cluster expansion} of $\Xi$ is a formal power series for $\ln \Xi$ with terms given by \emph{clusters}.
A cluster $\Gamma$ is an ordered tuple of polymers, and to each $\Gamma$ we associate an \emph{incompatibility graph} $H(\Gamma)$. This graph has vertex set $\Gamma$ (i.e.\ a vertex for each polymer present in $\Gamma$ with multiplicity) and each pair of vertices representing incompatible polymers is an edge of $H(\Gamma)$. By convention, a polymer is incompatible with itself.
We write $\C$ for the set of all clusters.
The formal power series corresponding to the cluster expansion is then
\[ \ln \Xi = \sum_{\Gamma \in \C}\phi(\Gamma)\prod_{\gam\in\Gamma}w_\gam, \]
where $\phi$ is the Ursell function 
\[ \phi(\Gamma) = \sum_{\substack{F\subset E(H(\Gamma)) \\ (V(H(\Gamma)), F) \text{ connected}}} (-1)^{|F|}. \]
We remark that for any polymer $\gamma$, the tuples $(\gamma),(\gamma, \gamma),\dotsc$ are valid clusters, and so the cluster expansion is an infinite sum. 
We use the following theorem to establish convergence and an important tail bound.

\begin{theorem}[Koteck\'y--Preiss~\cite{KP86}]\label{thm:KP}
If there exist functions $f : \cP \to [0, \infty)$ and $g : \cP \to [0, \infty)$ such that for all $\gam \in \cP$, 
\begin{equation}
\label{eqn:KPcondition}
\sum_{\gamma' \not\sim \gamma}w_{\gamma'}e^{f(\gamma') + g(\gamma')} \leq f(\gamma),
\end{equation}
    then the cluster expansion converges absolutely. 
    Moreover, if we let $g(\Gamma)=\sum_{\gamma\in \Gamma}g(\gamma)$, then for every polymer $\gam$,
\begin{equation}
\label{eqn:KPtailbound}
\sum_{\substack{\Gamma \in \mathcal{C}\\ \Gamma \not\sim \gamma}}\left|\phi(\Gamma)\prod_{\gamma' \in \Gamma}w_{\gamma'}\right| e^{g(\Gamma)} \leq f(\gamma),
\end{equation}
where we write $\Gamma \not\sim \gamma$ to mean that the cluster $\Gamma$ contains a polymer that is incompatible with $\gamma$.
\end{theorem}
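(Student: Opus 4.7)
The plan is to follow the classical strategy for proving convergence of the cluster expansion: reduce the tail sum via the tree-graph (Penrose) identity, and then invoke hypothesis \eqref{eqn:KPcondition} through an inductive/generating-function argument on a sum over rooted decorated trees.

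First I would invoke the tree-graph inequality for the Ursell function. For any ordered cluster $\Gamma=(\gamma_1,\dots,\gamma_n)$ one has
\[
|\phi(\Gamma)| \leq \frac{1}{n!}\,\#\{\text{spanning trees of } H(\Gamma)\}.
\]
This is a standard combinatorial identity obtained by partial cancellation in the inclusion-exclusion defining $\phi$, and it uses no hypothesis on the weights $w_\gam$. Interchanging the sums in \eqref{eqn:KPtailbound} and unfolding the $n!$ against the labellings of $\{1,\dots,n\}$ recasts the tail bound as a sum over \emph{rooted} labeled trees $T$ whose vertices are polymers, whose edges correspond to incompatibilities, and whose root $\gam_0$ is incompatible with the distinguished polymer $\gamma$, with a vertex weight $w_{\gam'}e^{g(\gam')}$ attached to each $\gam'\in V(T)$.

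Define
\[
S(\gam') := \sum_{T \text{ rooted at } \gam'} \prod_{\gam'' \in V(T)} w_{\gam''}\,e^{g(\gam'')},
\]
where $T$ ranges over rooted trees built by attaching a multiset of subtrees at each vertex, edges respecting incompatibility. The standard branching recursion
\[
S(\gam') = w_{\gam'}\,e^{g(\gam')}\exp\!\Bigl(\sum_{\gam'' \not\sim \gam'} S(\gam'')\Bigr)
\]
reduces the left-hand side of \eqref{eqn:KPtailbound} to $\sum_{\gam_0\not\sim\gam} S(\gam_0)$. I would then prove, by induction on the depth of the tree (truncating at depth $k$ and letting $k\to\infty$, which is justified since all terms are non-negative), that
\[
S(\gam') \leq w_{\gam'}\,e^{g(\gam')+f(\gam')}.
\]
The inductive step substitutes this bound into the recursion and then applies \eqref{eqn:KPcondition} directly to the inner sum: $\sum_{\gam''\not\sim\gam'} w_{\gam''} e^{g(\gam'')+f(\gam'')} \leq f(\gam')$, and exponentiating yields the depth-$k$ bound. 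Summing over $\gam_0 \not\sim \gam$ and applying \eqref{eqn:KPcondition} once more gives \eqref{eqn:KPtailbound}. Absolute convergence of the cluster expansion for $\ln\Xi$ follows by reorganizing the sum: each cluster $\Gamma$ contains at least one polymer $\gam^\star$, so summing $|\phi(\Gamma)|\prod_{\gam\in\Gamma}|w_\gam|$ over $\Gamma$ is at most $\sum_{\gam^\star} S(\gam^\star)$ (with $g\equiv 0$), which is finite by the same argument.

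The main obstacle is the bookkeeping in the tree-graph step: one must be careful to match the $1/n!$ from the Penrose bound against the labellings of the tree vertices so that the reduction to the generating-function identity for $S(\gam')$ is clean. Once this is set up correctly, the inductive use of \eqref{eqn:KPcondition} is essentially automatic, and both claims of the theorem fall out of the same estimate.
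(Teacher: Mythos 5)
The paper offers no proof of this statement to compare against: Theorem~\ref{thm:KP} is imported verbatim from Kotecký--Preiss~\cite{KP86} as a black box. Your argument is, in essence, the standard modern proof of that theorem, and it is sound: bound $|\phi(\Gamma)|$ by counting spanning trees of $H(\Gamma)$ (Penrose), dominate the sum over clusters incompatible with $\gamma$ by a sum over rooted trees whose edges join incompatible polymers and whose root is incompatible with $\gamma$, and close the induction on truncated depth via $S_{k+1}(\gamma') = w_{\gamma'}e^{g(\gamma')}\exp\bigl(\sum_{\gamma''\not\sim\gamma'}S_k(\gamma'')\bigr) \le w_{\gamma'}e^{g(\gamma')+f(\gamma')}$, where the last step is exactly hypothesis~\eqref{eqn:KPcondition}; one further application of~\eqref{eqn:KPcondition} to $\sum_{\gamma_0\not\sim\gamma}S(\gamma_0)$ yields~\eqref{eqn:KPtailbound}, and monotone convergence justifies passing from $S_k$ to $S$. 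Two bookkeeping points you flagged do require care but do not affect correctness. First, the placement of the $1/n!$ must be reconciled with the paper's convention that clusters are ordered tuples and $\phi$ carries no explicit symmetry factor; your normalization is the one under which the displayed cluster expansion is literally correct. Second, when rooting the spanning tree at a polymer incompatible with $\gamma$, a cluster may contain several such polymers; since all terms are nonnegative you may simply sum over all admissible roots (absorbing the factor $n$ into $1/(n-1)!$), and the resulting overcount only strengthens the upper bound, after which the sum over labelled rooted trees with unordered child multisets reproduces the generating function $S$. Finally, your deduction of absolute convergence by bounding the full sum by $\sum_{\gamma^\star}S(\gamma^\star)$ (with $g\equiv 0$, which is legitimate since $g\ge 0$) requires that sum to be finite; this is automatic in the paper's applications, where $\cP$ is a finite family of subgraphs of a finite graph, and in general one states convergence through the pinned sums~\eqref{eqn:KPtailbound}, which your argument already provides.
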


\subsection{Algorithms from convergent cluster expansions}\label{sec:clusteralgorithms}

Approximate counting and sampling algorithms for polymer models follow from Theorem~\ref{thm:KP} in a rather standard way due to~\cite{HPR19b} and subsequent works. 
In this paper we consider two polymer models that we define for an $n$-vertex, $d$-regular graph $G$, and then we consider approximating $\Xi$ or approximately sampling from $\nu$ with time complexities measured in terms of $n$ and a desired approximation error $\delta$.

For a polymer model as above, let $g:\cP\to[0,\infty)$ be as in Theorem~\ref{thm:KP}, and extend $g$ to clusters $\Gamma\in\C$ by $g(\Gamma) = \sum_{\gamma\in\Gamma}g(\gam)$.
An FPTAS for $\Xi$ follows from a few steps that one must check can be done efficiently:
\begin{enumerate}
    \item Use~\eqref{eqn:KPtailbound} to show that there exists some $L= L(\eps)$ such that 
    \[ \Xi(L):=\exp\left(\sum_{\substack{\Gamma \in \C\\g(\Gamma) \le L}}\phi(\Gamma)\prod_{\gam\in\Gamma}w_\gam\right) \] 
    is a close enough relative approximation of $\Xi$.
    \item List all clusters $\Gamma\in\C$ such that $g(\Gamma) \le L$.
    \item For each such cluster $\Gamma$, compute $\phi(\Gamma)$ and $\prod_{\gamma\in \Gamma} w_\gamma$.
    \item Compute $\Xi(L)$ directly from the above quantities.
\end{enumerate}
The running time of this algorithm depends on $L$, the definition of ``close enough,'' the time complexity of listing clusters and of computing polymer weights, $\phi$ and $g$. 
It has been established~\cite{HPR19b,JKP20,BCH+20} that for the polymer models we consider, the above steps can be carried out sufficiently fast for the definition of an FPTAS\@. 

Provided that the above algorithm yields an FPTAS for $\Xi$, there is a generic polynomial-time approximate sampling algorithm for $\nu$ that follows from a \emph{self-reducibility} property of the polymer model; see~\cite[Theorem~10]{HPR19b}. We omit the details and summarize these facts as the following theorem which applies when $\cP$ is (a subset of) the connected subgraphs of some bounded-degree graph $G$. For a subgraph $\gam$, we use $v_\gam$ to denote the number of vertices in $\gam$.
The important fact is that both the high-temperature and low-temperature polymer models we study satisfy the hypotheses of the result below.

\begin{theorem}[{\cite[Theorem~8]{JKP20} and~\cite[Theorem~10]{HPR19b}}]\label{thm:algorithm}
Fix $d$ and let $\mathcal{G}$ be some class of graphs of maximum degree at most $d$. 
Suppose the following hold for a polymer model with partition function $\Xi(G)$ where the polymers $\cP$ are connected subgraphs of some $G\in\mathcal{G}$, and with decay function $g$.
\begin{enumerate}
\item There exist constants $c_1, c_2 > 0$ such that given a connected subgraph $\gamma$, determining whether $\gamma$ is a polymer in $\mathcal{P}$, and computing $w_\gamma$ and $g(\gamma)$ can be done in time $O(v_{\gamma}^{c_1}e^{c_2v_{\gamma}})$.
\item There exists $\rho = \rho(d) > 0$ so that for every $G\in\mathcal{G}$ and every $\gamma\in\mathcal{P}$, $g(\gamma)\geq \rho v_{\gamma}$.
\item The Koteck\'y--Preiss condition~\eqref{eqn:KPcondition} holds with the given function $g$.
\end{enumerate}

Then there is an FPTAS for the partition function $\Xi(G)$ of the polymer model for $G\in\mathcal{G}$ that with approximation error $\delta$ and $n=|V(G)|$ has running time $n\cdot (n/\delta)^{O((\ln d+c_2)/\rho)}$.
There is also a polynomial-time sampling scheme for the associated polymer measure $\nu$.
\end{theorem}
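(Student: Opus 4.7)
The plan is to instantiate the four-step scheme outlined just before the statement in Section~\ref{sec:clusteralgorithms}: truncate the cluster expansion at level $L$, enumerate every cluster $\Gamma\in\C$ with $g(\Gamma)\le L$, evaluate $\phi(\Gamma)$ and $\prod_{\gamma'\in\Gamma} w_{\gamma'}$ for each, and combine these to form $\Xi(L)$. The sampling statement then follows by plugging the resulting FPTAS into the standard self-reducibility reduction of~\cite[Theorem~10]{HPR19b}, which converts approximation of $\Xi$ into approximate sampling from $\nu$ at polynomial overhead.

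For the truncation level, I would apply~\eqref{eqn:KPtailbound} to each polymer and charge every cluster in the tail of $\ln\Xi$ to a canonical vertex of $G$ contained in the polymer support of the cluster. For the polymer models we apply this theorem to, the function $f$ from~\eqref{eqn:KPcondition} can be taken uniformly bounded (by a model-dependent constant), so this yields
\[ |\ln\Xi - \ln\Xi(L)| \le C\,n\,e^{-L} \]
for some constant $C$. Setting $L = C'\ln(n/\delta)$ for a sufficiently large constant $C'$ then makes this error at most $\delta/2$, and exponentiating shows that $\Xi(L)$ is a $\delta$-relative approximation of $\Xi$.

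Next I would bound the cost of listing and evaluating the truncated sum. By condition~(ii), any $\Gamma$ with $g(\Gamma)\le L$ satisfies $\sum_{\gamma\in\Gamma} v_\gamma \le L/\rho$, so the union of the polymers in $\Gamma$ covers at most $L/\rho$ vertices of $G$. Since incompatibility of polymers corresponds to their touching in $G$, this union is a connected subgraph of $G$, and the number of connected subgraphs on at most $k$ vertices containing a fixed root in a graph of maximum degree $d$ is $(ed)^{O(k)}$; hence there are $n\cdot (ed)^{O(L/\rho)}$ possible supports. For each support I enumerate ordered tuples of connected subgraphs forming valid clusters, using~(i) to test membership in $\cP$ and to compute $w_\gamma$ and $g(\gamma)$ in time $O(v_\gamma^{c_1} e^{c_2 v_\gamma})$; the Ursell function $\phi(\Gamma)$ is then evaluated from $H(\Gamma)$ in time $\exp(O(|\Gamma|))$ by standard techniques. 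Multiplying these contributions yields total running time $n\cdot e^{O((\ln d + c_2) L/\rho)} = n\cdot (n/\delta)^{O((\ln d + c_2)/\rho)}$, matching the claimed bound.

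The main obstacle in my view is the truncation-error charging: one must assign each cluster in the tail of $\ln\Xi$ to a unique vertex of $G$ so that~\eqref{eqn:KPtailbound} sums cleanly into the bound $C\,n\,e^{-L}$ without double counting, while keeping the $f(\gamma)$ values from~\eqref{eqn:KPcondition} uniformly controlled across polymers. The remaining pieces — the connected-support argument for enumeration, the evaluation of $\phi$, and the self-reducibility recursion for sampling — are routine bookkeeping in the style of~\cite{HPR19b,JKP20} and do not present serious difficulty.
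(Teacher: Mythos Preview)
Your proposal follows exactly the four-step outline that the paper itself sketches immediately before the theorem statement; the paper does not supply its own proof of Theorem~\ref{thm:algorithm} but cites~\cite{JKP20,HPR19b} for the details, so your sketch is in line with what the paper intends. Two small corrections are worth noting. First, the claim that ``$f$ can be taken uniformly bounded'' is not accurate for the models in this paper (in both the low- and high-temperature models $f(\gamma)$ is proportional to $v_\gamma$); what is actually used is that $f$ evaluated at a \emph{single-vertex} polymer is a constant, and indeed the paper carries out precisely this charging in the proof of Lemma~\ref{lem:polymerapprox} (see~\eqref{eq:crudetail}--\eqref{eq:crudetailwithKP}). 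Second, your worry about ``double counting'' is unnecessary: in the standard argument one simply upper-bounds the tail by $\sum_{v\in V}\sum_{\Gamma\not\sim(\{v\},\emptyset)}|\cdots|$, and overcounting each cluster once per vertex in its support only helps the inequality. With these tweaks your argument is correct and matches the approach of the cited references.
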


\section{Low-Temperature Algorithms}\label{sec:lowtemp}

\subsection{Sketch of low-temperature argument}

For the low-temperature regime, we decompose the partition function into colorings that have a majority color, and the remaining colorings. 
Our polymer model can only represent colorings with a majority color, so it is vital to prove that the remaining colorings have a small contribution to the partition function, which is the content of our Lemma~\ref{lem:maingroundstate}. 
Working with fewer colors and smaller expansion makes this step more challenging than in~\cite{JKP20}. The core of the proof involves obtaining upper bounds on the number of vertex-colorings of a graph that induce few monochromatic edges. 
In light of the heuristic~\ref{enumproblem-colorings}, which relates colorings with few monochromatic edges to cuts, this is done by analyzing a Karger-like random sequence of edge contractions and the probability that this yields a particular cut.

We then handle each majority color with a polymer model where the polymers are connected, induced subgraphs of $G$ whose vertices do not receive the majority color. The weight of a polymer is related to the $(q-1)$-spin partition function on that subgraph. Since we are working with colorings with a majority color, the ``defects'' (vertices that don't have the majority color) occupy at most half the vertices, which is a global constraint on the polymers that we wish to avoid. 
We pass to a different polymer model which requires only that each polymer has at most half the vertices. This partition function is clearly an overestimate; however, in Lemma~\ref{lem:polymerapprox} we show that this overestimation is negligible.

All that remains is to verify the Koteck{\'y}--Preiss condition for this polymer model. 
There is, however, another challenge here: the aforementioned $(q-1)$-spin partition function plays a significant role in this computation, and one needs to bound it in order to proceed. Fortunately, our understanding of extremal bounds on such functions has progressed recently, most notably due to Sah, Sawhney, Stoner, and Zhao~\cite{SSSZ20}. We use their results to derive bounds on the partition functions of the subgraphs that make up our polymers. 
For triangle-free graphs the available bounds are stronger, which lets us work with weaker expansion in this case, and ultimately with the hypercube.
The extremal results we derive from~\cite{SSSZ20} are the content of Lemmas~\ref{lem:extremal} and~\ref{lem:TFextremal}.

The final ingredient in verifying the Koteck{\'y}--Preiss condition is a bound on the number of connected subsets of vertices in $G$ with a given edge-boundary size. At first glance, this looks similar to a lemma used in~\cite{JP20,JPP22} that gives a bound on the number of connected subsets of vertices with a given \emph{vertex-boundary} size (in bipartite graphs), which followed from mild adaptations of existing container methods~\cite{Sap87,Sap01}.

While the case of edge-boundaries does not seem to follow directly from these methods, we have developed Theorem~\ref{thm:enum} inspired by these previous works. At a high-level, container methods work by finding an ``encoding'' or ``certificate'' for each object of the type we wish to enumerate. 
If these certificates are small then the number of certificates, and thus objects, must be relatively small. 
In our case, we group sets of vertices $A$ with a fixed edge-boundary $B$ of size $b$  by identifying a ``core'' subset $A_0\subset A$ of size $O(b/d)$.
We show these core sets must be connected in $G^7$, and hence their number can be controlled. 
In this sense, each core set $A_0$ certifies a container which contains sets $A$ (of the type whose number we wish to bound) obtained by growing $A_0$ appropriately, see~\eqref{eqn:cover}.
To complete the bound on the number of sets $A$ of interest, we must bound for each core $A_0$ the number of possible $A$, or equivalently edge-boundaries $B$, that can be associated with $A_0$.
We achieve this using Karger's algorithm involving randomized edge-contractions to count cuts. 

\subsection{Polymer model}
In Theorem~\ref{thm:mainpotts}, we deal with the class of $d$-regular $2$-expanders and triangle-free $d$-regular $1$-expanders. Both these families of graphs satisfy
\begin{equation}\label{eqn:mincut2}
\text{every set}~A \subset V~\text{such that}~2\leq |A| \leq |V|/2~\text{satisfies}~|\nabla(A)| \geq 2d-2.
\end{equation}
In particular, this also holds for the hypercube $Q_d$. 
Property~\eqref{eqn:mincut2} is immediate in $d$-regular 2-expanders and follows from Mantel's theorem~\cite{Mantel} in the triangle-free 1-expanding case. 
To see this, note that the expansion condition suffices for $|A|\ge 2d-2$, and if $|A|<2d-2$ then we use the bound \[ \nabla(A) = d|A| - 2|E(G[A])| \ge d|A| - \frac{|A|^2}{2}, \]
which follows from Mantel's theorem. 
For $2\le |A| \le 2d-2$, this lower bound is at least $2d-2$ and property~\eqref{eqn:mincut2} follows.

The min-cut of a graph is the minimum number of edges that need to be deleted in order to disconnect the remaining graph. Any $d$-regular graph $G$ that satisfies~\eqref{eqn:mincut2} also satisfies
\begin{equation}\label{eqn:mincut1}
\text{the min-cut of}~G~\text{has size at least}~d.
\end{equation}
In other words, the smallest set of edges that can be deleted to disconnect the graph is simply the set of all edges incident to any given vertex. Throughout this section we will use that $G$ satisfies~\eqref{eqn:mincut2}, and therefore~\eqref{eqn:mincut1}.

We now define a polymer model for the low-temperature regime, following~\cite{JKP20}. 
Here we take $\eps\in(0,1)$; then, for some $d_0(\eps)$ and an absolute constant $c$, we have $d\ge d_0$, $q\ge d^c$, and $\beta \ge (1+\eps)\beta_o(q,d)$. 
We also assume that $q$ is large enough (by ensuring that $d_0$ and $c$ are large enough) that this implies $\beta \ge (2+\eps)\ln(q)/d$.

Let a polymer $\gam$ be a connected, induced subgraph of $G$ with at most $n/2$ vertices. For each polymer $\gam$, let $v_\gam$ and $e_\gam$ be the number of vertices and edges respectively. 
Let $\nabla_{\gamma}$ denote the number of boundary edges (i.e.\ edges of $G$ with one endpoint in the vertex set of $\gamma$), and let $\eta_{\gamma} := \nabla_{\gamma}/v_\gamma$, be a measure of the expansion of the vertex set of $\gam$ in $G$. 
The polymer weights $w_\gam$ are given by
\[
w_{\gamma} = e^{-\beta (\nabla_{\gamma} + e_{\gamma})} \cdot Z_{\gamma}(q-1, \beta),
\]
where we interpret $\gam$ as a graph and hence $Z_{\gamma}(q-1, \beta)$ is the Potts partition function of $\gam$ with $q-1$ colors. 
We say that two polymers $\gam$ and $\gam'$ are incompatible if $\dist_G(\gam, \gam') \leq 1$. 
That is, $\gam\not\sim\gam'$ if and only if they share a vertex or one contains a neighbor of the other. 
As before, we write $\Omega$ for the collection of pairwise compatible sets of polymers and $\Xi = \Xi_{G}(q,\beta)$ for the partition function of this polymer model.

For our low-temperature algorithm, there is not a precise correspondence between $Z_G(q,\beta)$ and $\Xi$, and we must establish that approximating $\Xi$ does in fact yield an approximation of $Z_G(q,\beta)$.
The set $[q]^{V(G)}$ of $q$-colorings of $V(G)$ admits a partition into $S_0, S_1, \dotsc, S_q$ where $S_0$ is the set of colorings such that each color occupies at most $n/2$ vertices and for $j\in [q]$, $S_j$ is the set of colorings in which $j$ is the majority color (i.e. $S_j = \{\sigma\in [q]^{V(G)} : |\sigma^{-1}(j)| > n/2\}$). We have the following lemma whose proof is postponed to Section~\ref{sec:maingroundstate}.

\begin{lemma}\label{lem:maingroundstate}
There exists an absolute constant $c$ such that the following holds. 
For every $\epsilon\in(0,1)$ there exists $d_0(\epsilon)$ such that for every $d\ge d_0$, $q\ge d^c$ and $\beta \ge (1+\eps)\beta_o(q,d)$, we have
\[
\frac{\sum_{\sigma \in S_0}e^{\beta m(G,\sigma)}}{Z_{G}(q, \beta)} \leq q^{-\Omega\left(\frac{\epsilon n}{d}\right)}.
\]
\end{lemma}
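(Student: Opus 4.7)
The plan is to lower bound $Z_G(q,\beta)$ by the $q$ monochromatic colorings, giving $Z_G \ge q e^{\beta|E|}$, and to upper bound $\sum_{\sigma \in S_0}e^{\beta m(G,\sigma)}$ by grouping colorings according to their number $k(\sigma):=|E|-m(G,\sigma)$ of bichromatic edges. The expansion property~\eqref{eqn:expansion} immediately forces $k(\sigma)\ge \eta n/2$ for any $\sigma\in S_0$: every color class $C_i=\sigma^{-1}(i)$ has $|C_i|\le n/2$, so $|\nabla(C_i)|\ge \eta|C_i|$; summing and using $2k(\sigma)=\sum_i|\nabla(C_i)|$ gives the lower bound.

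Hence $\sum_{\sigma \in S_0} e^{\beta m(G,\sigma)} \le e^{\beta|E|}\sum_{k\ge \eta n/2} N(k)e^{-\beta k}$, where $N(k)$ denotes the number of $q$-colorings of $G$ with exactly $k$ bichromatic edges. The key combinatorial input, matching the heuristic in problem~\ref{enumproblem-colorings}, is an enumeration bound $N(k)\le q\binom{n}{\alpha k/d}q^{\alpha k/d}$ for a suitably small absolute constant $\alpha$; this will be the content of the groundstate-enumeration lemma (Lemma~\ref{lem:groundstate}, proved in Section~\ref{sec:maingroundstate}). Its proof views each coloring as a multi-way cut of $G$ and analyses a Karger-like random edge-contraction process: using the min-cut property~\eqref{eqn:mincut1}, each contraction step avoids the edges of any fixed cut of size $k$ with sufficient probability, giving an inverse-probability upper bound on the number of such cuts.

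For $k\ge \eta n/2$ one simplifies the binomial via $\binom{n}{\alpha k/d}\le (2ed/(\alpha \eta))^{\alpha k/d}$, obtaining the $n$-free bound $N(k)\le q(C dq/\eta)^{\alpha k/d}$ for an absolute $C$. Substituting $\beta\ge (2+\eps)\ln q/d$ gives
\[
N(k)e^{-\beta k}\le q\left(\frac{Cd}{\eta\, q^{(2+\eps)/\alpha-1}}\right)^{\alpha k/d}.
\]
Choosing $q\ge d^c$ for a sufficiently large absolute constant $c$ (and $d\ge d_0(\eps)$ large) makes the bracketed base at most $q^{-\Omega(\eps)}$, so the series is geometric and dominated by its $k=\eta n/2$ term. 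This yields $\sum_{k\ge \eta n/2} N(k)e^{-\beta k}\le q\cdot q^{-\Omega(\eps n/d)}$, and dividing by the monochromatic lower bound $Z_G\ge qe^{\beta|E|}$ produces the desired ratio.

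The main obstacle is Lemma~\ref{lem:groundstate}. A naive bound $N(k)\le \binom{|E|}{k}q^{O(k/d)}$, which would arise from enumerating arbitrary $k$-edge bichromatic sets $B$ and using $t(G-B)\le 1+2k/d$ from~\eqref{eqn:mincut1} to count compatible color assignments component-by-component, carries a $\binom{|E|}{k}$ factor that is far too large. Tightening to $\binom{n}{O(k/d)}$ is precisely what the Karger-style multiway-cut argument delivers; without this improvement the geometric base fails to be $n$-free and the ratio cannot be pushed below $q^{-O(1)}$, let alone $q^{-\Omega(\eps n/d)}$.
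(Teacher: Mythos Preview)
Your overall architecture matches the paper exactly: lower bound $Z_G\ge qe^{\beta|E|}$, the expansion bound $k(\sigma)\ge \eta n/2$ via $\sum_i|\nabla(C_i)|=2k(\sigma)$, and a geometric summation over $k$ once an enumeration lemma controls $N(k)$. The gap is in the form of the enumeration bound you invoke and what the one-stage Karger argument actually delivers.

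The single-stage contraction argument you describe (contract random edges using only the min-cut property~\eqref{eqn:mincut1}, then color the surviving $2k/d$ supervertices) yields
\[
N(k)\le \binom{n}{2k/d}q^{2k/d},
\]
i.e.\ $\alpha=2$, not a ``suitably small'' $\alpha<2$. Plugging $\alpha=2$ into your own displayed inequality gives a base $Cd/(\eta\,q^{\eps/2})$, so forcing this below $q^{-\Omega(\eps)}$ requires $q\ge d^{\Theta(1/\eps)}$. That contradicts the statement's requirement that $c$ be an \emph{absolute} constant. The heuristic from~\ref{enumproblem-colorings} suggests $\alpha\approx 1$, but the basic Karger bound does not reach it.

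The paper's Lemma~\ref{lem:groundstate} closes this gap with two refinements you are missing. First, it splits on whether $k\le nd^{1-\eps/4}$ or not; for large $k$ the crude $\alpha=2$ bound suffices because then $\binom{n}{2k/d}\le (ed^{\eps/4})^{2k/d}$, which is only $d^{O(\eps k/d)}$ and is absorbed by the slack $q^{\eps k/d}$. Second, for small $k$ one separates the singleton monochromatic components from the rest and reruns the contraction argument on the remainder using the stronger min-cut property~\eqref{eqn:mincut2} (cuts of non-singleton sets have size $\ge 2d-2$). This second stage replaces the exponent $2\ell/d$ by $2\ell/(2d-2)\approx \ell/d$ in the $q$-factor (see Lemma~\ref{lem:numbercolors}\eqref{lem:numbercolors2}), which is exactly the halving of $\alpha$ your calculation needs. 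Without this two-stage decomposition and the use of~\eqref{eqn:mincut2}, the constant $c$ cannot be made independent of $\eps$.
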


Let us define the partition function-like quantity
\[
\widetilde{\Xi} = \widetilde{\Xi}_G(q,\beta) := \sum_{\substack{\Lambda \in \Omega \\ \|\Lambda\| < n/2}}\prod_{\gamma \in \Lambda}w_{\gamma},
\]
where $\|\Lambda\| = \sum_{\gamma \in \Lambda}v_{\gamma}$ is the size of a set $\Lambda$ of pairwise compatible polymers. We write $\widetilde\Omega = \{\Lambda\in\Omega : \|\Lambda\| < n/2\}$ and note that $\widetilde\Xi$ is not the true partition function of the polymer model because of the global constraint $\|\Lambda\| < n/2$. 
In terms of the Potts partition function $Z_G(q,\beta)$, the quantity $\widetilde{\Xi}$ faithfully represents the colorings of $G$ which have a majority color because
\[
    \sum_{j\in[q]}\sum_{\sigma\in S_j}e^{\beta m(G,\sigma)} = e^{\beta dn/2} \sum_{j\in[q]}\sum_{\substack{\Lambda \in \Omega \\ \|\Lambda\| < n/2}} \prod_{\gam\in\Lambda}w_\gam = q e^{\beta dn/2}\widetilde{\Xi}.
\]
To see this, consider a coloring $\sigma\in[q]^{V}$ with majority color $j$ and decompose $\sigma^{-1}([q]\setminus\{j\})\subset V$ into sets such that their pairwise distance in $G$ is at least two. 
These sets induce subgraphs which form pairwise compatible polymers. Hence for any $j\in[q]$ the colorings with majority color $j$ are in bijection with $\{\Lambda\in\Omega : \|\Lambda\| < n/2\}$. 
The weight function and incompatibility criterion have been carefully defined to make the above calculation work on the level of partition function contribution as well.

Rewriting Lemma~\ref{lem:maingroundstate} in this terminology gives us that 
\[
\frac{q e^{\beta dn/2}}{Z_G(q,\beta)} \widetilde{\Xi} = 1 - q^{-\Omega\left(\frac{\epsilon n}{d}\right)}.
\]
Thus, approximating $\widetilde\Xi$ is a viable avenue for approximating $Z_G(q,\beta)$. 
There is one final complication, however, related to the fact that $\widetilde\Xi$ is not precisely the partition function $\Xi$ of the low-temperature polymer model defined above. 
Since the standard algorithmic setup (Theorem~\ref{thm:algorithm}) allows us to approximate $\Xi$ and approximately sample from the associated polymer measure $\nu$, we must additionally show that restricting to $\Lambda$ such that $\|\Lambda\|<n/2$ does not harm the approximation too much.
The proof of the low-temperature case of  Theorem~\ref{thm:mainpotts} therefore follows from the following two lemmas and the standard algorithmic setup of Theorem~\ref{thm:algorithm}.

\begin{lemma}\label{lem:polymerapprox}
For $\Xi$ and $\widetilde{\Xi}$ as defined above, we have
\[
\widetilde{\Xi} \le \Xi \le \widetilde\Xi\left(1 + e^{-\Omega(n/d)}\right).
\]
\end{lemma}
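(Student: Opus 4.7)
The lower bound $\widetilde\Xi \le \Xi$ is immediate: every weight $w_\gam$ is positive and $\widetilde\Omega \subseteq \Omega$, so $\widetilde\Xi$ is a partial sum of nonnegative terms of $\Xi$. For the upper bound, denote by $\nu$ the polymer Gibbs measure $\nu(\Lambda) = \prod_{\gam\in\Lambda} w_\gam / \Xi$ on $\Omega$. Then $\Xi - \widetilde\Xi = \Xi \cdot \nu(\|\Lambda\| \ge n/2)$, so the lemma reduces by rearrangement to proving the concentration estimate $\nu(\|\Lambda\| \ge n/2) \le e^{-\Omega(n/d)}$.

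The plan is to establish this concentration via an exponential-tilting / Chernoff argument driven by the Koteck\'y--Preiss condition that will be verified later in this section. For a tilt $t > 0$, the Markov bound on the exponential moment yields $\nu(\|\Lambda\| \ge n/2) \le e^{-tn/2}\, \Xi_t / \Xi$, where $\Xi_t := \sum_{\Lambda\in\Omega} \prod_{\gam\in\Lambda} w_\gam e^{tv_\gam}$ is the partition function of the model with tilted polymer weights. Writing $\nu_s$ for the tilted Gibbs measure at parameter $s$, we have $\ln(\Xi_t/\Xi) = \int_0^t \E_{\nu_s}[\|\Lambda\|]\,ds$, so the problem reduces to bounding the first moment of $\|\Lambda\|$ under $\nu_s$ uniformly in $s \in [0,t]$.

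For the first moment, I would use the standard polymer inequality $\Prob_{\nu_s}[\gam \in \Lambda] \le w_\gam e^{sv_\gam}$ (which follows from positivity of the weights, since $\Prob_{\nu_s}[\gam\in\Lambda] = w_\gam e^{sv_\gam}\,\Xi_s^{(\gam)}/\Xi_s$ and $\Xi_s^{(\gam)} \le \Xi_s$), together with a union bound over vertices, to obtain
\[
\E_{\nu_s}[\|\Lambda\|] \;=\; \sum_\gam v_\gam\,\Prob_{\nu_s}[\gam \in \Lambda] \;\le\; \sum_{v\in V}\sum_{\gam \ni v} w_\gam e^{sv_\gam}.
\]
The KP verification in this section uses a decay function of the form $g(\gam) = \rho v_\gam$ with $\rho = \Omega(\ln d/d)$ and constant slack; provided $s \le \rho$, applying~\eqref{eqn:KPcondition} at the singleton polymer $\{v\}$ (a valid polymer for $n \ge 3$) gives $\sum_{\gam\ni v} w_\gam e^{sv_\gam} \le \rho$, hence $\E_{\nu_s}[\|\Lambda\|] \le \rho n$. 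Integrating and taking $t := \rho$ (with $\rho \le 1/4$ for $d$ large) produces $\nu(\|\Lambda\|\ge n/2) \le \exp\bigl(-(1/2 - \rho)\rho n\bigr) = e^{-\Omega(\rho n)} \le e^{-\Omega(n/d)}$, completing the reduction.

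The principal technical obstacle is not in the above manipulations but in securing the Koteck\'y--Preiss condition at every singleton polymer with the decay rate $\rho = \Omega(\ln d/d)$ and constant slack; this is precisely the substantive work of the low-temperature analysis in this section, underpinned by Theorem~\ref{thm:enum} and the extremal bounds of Lemmas~\ref{lem:extremal} and~\ref{lem:TFextremal}. Once those ingredients are in hand, the Chernoff/integration/singleton-KP combination above delivers the required superpolynomial decay of $\Xi/\widetilde\Xi - 1$.
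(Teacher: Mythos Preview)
Your proof is correct and follows the same overall strategy as the paper: reduce to the concentration estimate $\nu(\|\Lambda\|\ge n/2)\le e^{-\Omega(n/d)}$ and establish it by exponentially tilting the polymer weights, controlling the moment-generating function via the Koteck\'y--Preiss condition of Lemma~\ref{lem:polymerpf}, and applying Markov's inequality. The technical route differs slightly. The paper fixes the tilt at $1/d$ (the $v_\gamma/d$ piece of $g$), crudely bounds $\ln(\Xi'/\Xi)\le\ln\Xi'$ using $\Xi\ge 1$, and then controls $\ln\Xi'$ with the cluster-expansion tail bound~\eqref{eqn:KPtailbound}, picking up an extra factor $q^{-\eps/4}$ from the $\nabla_\gamma$ part of $g$. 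Your argument instead writes $\ln(\Xi_t/\Xi)=\int_0^t\E_{\nu_s}\|\Lambda\|\,ds$ and bounds the integrand directly from the raw KP inequality~\eqref{eqn:KPcondition} at a singleton, using only the elementary bound $\Prob_{\nu_s}[\gamma\in\Lambda]\le w_\gamma e^{sv_\gamma}$. Your route avoids invoking the cluster expansion itself; the paper's avoids the integration step. Both yield $e^{-\Omega(n/d)}$.

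One small inaccuracy worth noting: the paper's decay function is not of the pure form $g(\gamma)=\rho v_\gamma$ that you posit; rather $g(\gamma)=\tfrac{\eps\nabla_\gamma\ln q}{4d}+\tfrac{v_\gamma}{d}$, and it is $f(\gamma)=\tfrac{\eps v_\gamma\ln q}{4d}$ that has the form $\rho v_\gamma$. This does not affect your argument, since you only actually use $(f+g)(\gamma)\ge\rho v_\gamma$ and $f(\{v\})=\rho$, both of which hold with $\rho=\tfrac{\eps\ln q}{4d}=\Omega(\ln d/d)$.
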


\begin{lemma}\label{lem:polymerpf}
For every $\epsilon\in(0,1)$, there exist $d_0(\epsilon)$ and an absolute constant $c$ such that for every $d\ge d_0$, $q\ge d^c$ and $\beta \ge (1+\eps)\beta_o(q,d)$, the Koteck\'y--Preiss condition (see~\eqref{eqn:KPcondition} in Theorem~\ref{thm:KP}) holds for the low-temperature polymer model with $f(\gam)=\frac{\eps v_\gam \ln q}{4d}$ and $g(\gam)=\frac{\eps \nabla_\gam \ln q}{4d} + \frac{v_\gam}{d}$.
\end{lemma}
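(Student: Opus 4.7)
The plan is to verify the Kotecký--Preiss condition~\eqref{eqn:KPcondition} directly by first establishing a per-polymer upper bound $w_\gamma e^{f(\gamma)+g(\gamma)} \leq q^{-c_0 \nabla_\gamma/d}$ for some $c_0=c_0(\epsilon)>0$, and then summing over incompatible polymers via Theorem~\ref{thm:enum}. Starting from the handshake identity on the subgraph $\gamma$ of $d$-regular $G$, namely $2e_\gamma + \nabla_\gamma = dv_\gamma$, we can rewrite the weight as
\[ w_\gamma = e^{-\beta(dv_\gamma+\nabla_\gamma)/2}\, Z_\gamma(q-1,\beta). \]
I would then apply the extremal bounds of Lemmas~\ref{lem:extremal} and~\ref{lem:TFextremal} to obtain $Z_\gamma(q-1,\beta) \leq q^{v_\gamma/k}\, e^{\beta dv_\gamma/2}$, where $k=d+1$ in the general $2$-expander case (the extremum attained by disjoint copies of $K_{d+1}$) and $k=2d$ in the triangle-free $1$-expander case (attained by disjoint copies of $K_{d,d}$). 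Substituting yields $w_\gamma \leq q^{v_\gamma/k}\, e^{-\beta \nabla_\gamma/2}$.

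The expansion hypothesis ensures $v_\gamma \leq \nabla_\gamma/\eta$ with $\eta=2$ or $\eta=1$, and the convenient identity $\eta k \geq 2d$ holds in both cases, so $v_\gamma/k \leq \nabla_\gamma/(2d)$. Since $\beta \geq (2+\epsilon)\ln(q)/d$, this gives $w_\gamma \leq q^{-(1/2+\epsilon/2)\nabla_\gamma/d}$. Multiplying by $e^{f(\gamma)+g(\gamma)} = q^{\epsilon(v_\gamma+\nabla_\gamma)/(4d)}\, e^{v_\gamma/d}$, applying $v_\gamma \leq \nabla_\gamma/\eta$ once more, and absorbing the $e^{v_\gamma/d}$ factor into $q^{o(\nabla_\gamma/d)}$ via $q \geq d^c$ with $c$ sufficiently large, there remains some $c_0 = c_0(\epsilon)>0$ in the exponent, yielding $w_\gamma e^{f(\gamma)+g(\gamma)} \leq q^{-c_0\nabla_\gamma/d}$.

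For the summation, every incompatible polymer $\gamma'$ must contain a vertex of the closed neighborhood $N[\gamma]$, which has size at most $(d+1)v_\gamma$. Grouping incompatible polymers by an anchor $v \in N[\gamma]$ and boundary size $b = \nabla_{\gamma'} \geq d$ (the min-cut bound~\eqref{eqn:mincut1}), Theorem~\ref{thm:enum} provides at most $d^{Cb/d}$ such polymers per anchor, where $C = C(\eta)$ is absolute. Thus
\[ \sum_{\gamma' \not\sim \gamma} w_{\gamma'}\, e^{f(\gamma')+g(\gamma')} \leq (d+1)v_\gamma \sum_{b \geq d} d^{Cb/d}\, q^{-c_0 b/d}. \]
Choosing $c$ in $q \geq d^c$ large enough that $d^C q^{-c_0} \leq d^{-2}$, the inner sum is geometric with ratio bounded away from $1$ and first term of order $q^{-\Omega(1)}$. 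A final comparison against $f(\gamma) = \epsilon v_\gamma \ln(q)/(4d)$ shows the total is dominated by $f(\gamma)$ for $d,q$ sufficiently large, completing the verification.

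I expect the main obstacle to be the first step: securing the correct exponent in the $w_\gamma$ bound. The parameters are tight and the expansion hypothesis has to give exactly what is needed in both cases. In particular, the stronger $K_{d,d}$-extremality in the triangle-free setting (where $k=2d$) is what compensates for the weaker $\eta=1$ expansion, so that the product $\eta k$ matches the $2d$ coming from the $2$-expander case and a uniform $c_0>0$ emerges in both settings; without this precise matching, one could not hope to weaken the expansion requirement from the $\Omega(d)$ of~\cite{JKP20} down to an absolute constant.
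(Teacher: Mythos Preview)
Your approach is correct and mirrors the paper's proof: both establish a per-polymer bound of the form $w_\gamma e^{f(\gamma)+g(\gamma)} \le q^{-\Theta(\nabla_\gamma/d)}$ via the extremal lemmas combined with the expansion hypothesis, then sum over incompatible polymers using Theorem~\ref{thm:enum} and compare the resulting geometric series against $f(\gamma)$. The paper uses $|N_G[\gamma]|\le v_\gamma+\nabla_\gamma$ where you use $(d+1)v_\gamma$, but both reduce to the same target $\frac{\eps\ln q}{4d(d+1)}$ for the per-anchor sum.

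The one step that needs more care is your claimed bound $Z_\gamma(q-1,\beta) \le q^{v_\gamma/k}\, e^{\beta d v_\gamma/2}$. This is not what Lemmas~\ref{lem:extremal} and~\ref{lem:TFextremal} literally say: those lemmas give a bound in terms of $e_\gamma$, and $\gamma$ is \emph{not} $d$-regular (boundary vertices have degree below $d$ in $\gamma$), so the heuristic ``extremum attained by disjoint copies of $K_{d+1}$'' does not apply verbatim. Your per-vertex form does follow, but it requires an extra argument: either invoke the monotonicity of $Z_{K_{d+1}}^{1/(d+1)}$ in $d$ (a consequence of the log-convexity in~\cite{SSSZ20}) to go directly from Theorem~\ref{thm:extremal}, or observe that the $e_\gamma$-based bound of Lemma~\ref{lem:extremal} differs from yours by the factor $q^{-\nabla_\gamma/(d+1)}e^{\beta\nabla_\gamma/2}$, which is $\ge 1$ precisely because $\beta \ge (2+\eps)\ln(q)/d > 2\ln(q)/(d+1)$. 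The paper sidesteps this by working directly with the $e_\gamma$ form and the identity $v_\gamma - 2e_\gamma/d = \nabla_\gamma/d$, arriving at the same endpoint $w_\gamma e^{f+g}\le q^{-\nabla_\gamma/(2d)}$; once your missing step is filled in, your packaging is arguably cleaner and makes the ``$\eta k \ge 2d$'' matching between the two cases more transparent.
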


Note that the left-hand side of condition~\eqref{eqn:KPcondition} decreases if $g$ decreases, and hence the above lemma also implies that Theorem~\ref{thm:KP} holds for any smaller choice of $g$.
We prove the above results, starting with Lemma~\ref{lem:polymerpf}, in the following subsections.

\subsection{Proof of Lemma~\ref{lem:polymerpf}}

We first give the proof in the case that $G$ is a 2-expander, and then show the modifications necessary for triangle-free 1-expanders.

We use $N(\gam) = N_G(\gam) = \{v: \exists u \in \gam \text{ such that }\{u,v\} \in E\}$ for the \emph{neighborhood} of $\gam$ in $G$ and $N_G[\gam] = N_G(\gam) \cup V(\gam)$ for the \emph{closed neighborhood} of $\gam$. We upper bound the left-hand side of condition~\eqref{eqn:KPcondition}, here noting that the incompatibility relation on polymers yields
\[ \sum_{\gam'\not\sim\gam}w_{\gam'} e^{f(\gam')+g(\gam')} \le  \sum_{u\in N_G[\gam]}\sum_{\gam' \ni u} w_{\gam'} e^{f(\gam') + g(\gam')} = \sum_{u\in N_G[\gam]}\sum_{b\ge d}\sum_{\substack{\gam' \ni u \\ \text{s.t.} \nabla_{\gam'}=b}} w_{\gam'} e^{f(\gam') + g(\gam')}, \]
Since $|N_G[\gam]| \le v_\gam + \nabla_\gam$, it suffices to show that for all $u\in V$ we have 
\begin{equation}
\label{eqn:KPsplit}
\sum_{b\ge d}\sum_{\substack{\gam' \ni u \\ \text{s.t.} \nabla_{\gam'}=b}} w_{\gam'} e^{f(\gam') + g(\gam')} \leq \min_{\gam\in\cP} \frac{f(\gam)}{v_{\gam} + \nabla_{\gam}}.
\end{equation}

We use the following lemma, whose proof is postponed to Section~\ref{sec:extremal}, to bound the weights $w_{\gam'}$ that appear in~\eqref{eqn:KPsplit}.

\begin{lemma}\label{lem:extremal}
For any polymer $\gam\in\cP$ we have
\[
Z_{\gam}(q-1, \beta) \leq e^{\beta e_\gam} q^{v_\gam - \frac{2e_\gam}{d}+\frac{e_\gam}{\binom{d+1}{2}}} 2^{\frac{e_\gam}{\binom{d+1}{2}}}.
\]
\end{lemma}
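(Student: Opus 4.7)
The plan is to invoke the reverse Sidorenko inequality of Sah--Sawhney--Stoner--Zhao~\cite{SSSZ20}, which identifies $K_{d+1}$ as the extremal graph (suitably normalized) for the ferromagnetic Potts partition function among graphs of maximum degree at most $d$, and to combine this with a direct low-temperature bound on $Z_{K_{d+1}}(q-1,\beta)$. Concretely, I would establish the two bounds
\[
Z_{\gam}(q-1,\beta) \le Z_{K_{d+1}}(q-1,\beta)^{e_{\gam}/\binom{d+1}{2}} \cdot q^{\,v_{\gam} - 2e_{\gam}/d}
\]
and
\[
Z_{K_{d+1}}(q-1,\beta) \le 2q\, e^{\beta\binom{d+1}{2}},
\]
and then multiply them.

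For the first inequality I would use the ``per-vertex'' form of the SSSZ bound: each vertex $v\in V(\gam)$ contributes a factor $Z_{K_{d+1}}^{\,d_{\gam}(v)/[d(d+1)]}\cdot q^{1-d_{\gam}(v)/d}$, where the first factor accounts for the local edges at $v$ and the second captures the ``missing'' half-edges at a vertex of degree $d_{\gam}(v)<d$. Taking the product over $v$ and using $\sum_v d_{\gam}(v)=2e_{\gam}$ gives exactly the displayed inequality. If one only has the regular vertex-normalized SSSZ bound $Z_G^{1/v_G}\le Z_{K_{d+1}}^{1/(d+1)}$ to hand, then I would first pad $\gam$ to a $d$-regular host (at a cost of roughly $q^{v_{\gam}-2e_{\gam}/d}$ for the added vertices), apply SSSZ to the padded graph, and strip off the padding.

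For the second inequality I would split $Z_{K_{d+1}}(q-1,\beta)=\sum_\sigma e^{\beta m(\sigma)}$ by color profile. The monochromatic colorings contribute exactly $(q-1)\,e^{\beta\binom{d+1}{2}}$. The leading non-monochromatic term comes from the ``one defect'' profile $(d,1)$, contributing $(q-1)(q-2)(d+1)\,e^{\beta\binom{d}{2}}$; its ratio to the monochromatic contribution is $(q-2)(d+1)\,e^{-\beta d}$. Under $\beta\ge(1+\eps)\beta_o\sim(2+o_d(1))\ln q/d$, which gives $e^{\beta d}\gtrsim q^{2(1+\eps)}$, and under $q\ge d^c$ with sufficiently large absolute constant $c$, this ratio is $o(1)$; analogous but smaller bounds for profiles with more colors ensure that the total non-monochromatic mass is at most $(q-1)\,e^{\beta\binom{d+1}{2}}$, yielding the claimed clique bound.

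Substituting the clique bound into the first inequality gives
\[
Z_{\gam}(q-1,\beta) \le \bigl[2q\,e^{\beta\binom{d+1}{2}}\bigr]^{e_{\gam}/\binom{d+1}{2}}\cdot q^{v_{\gam}-2e_{\gam}/d}
= e^{\beta e_{\gam}}\,q^{v_{\gam}-2e_{\gam}/d+e_{\gam}/\binom{d+1}{2}}\,2^{e_{\gam}/\binom{d+1}{2}},
\]
which is precisely the lemma. The main obstacle is obtaining the first inequality in the required edge-normalized, non-regular form: the naive vertex-normalized SSSZ bound $Z_{\gam}\le Z_{K_{d+1}}^{v_{\gam}/(d+1)}$ is in fact too weak (as a short calculation shows it does not imply the target at $\beta\sim\beta_o$), so one must either extract the sharper per-vertex inequality directly from the entropy-compression proof in~\cite{SSSZ20} or argue via a careful padding-and-stripping reduction. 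The clique bound in the second step is then routine but requires the combined use of both $\beta\ge(1+\eps)\beta_o$ and $q\ge d^c$ to dominate the $(d,1)$ profile.
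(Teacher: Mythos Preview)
Your two-step plan is exactly the paper's proof: it combines Corollary~\ref{cor:extremal} (your first displayed inequality, up to the harmless $q-1\mapsto q$) with Lemma~\ref{lem:BGJ} (your clique bound $Z_{K_{d+1}}(q-1,\beta)\le 2q\,e^{\beta\binom{d+1}{2}}$), and then multiplies.

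Two remarks on the sub-steps. For the first inequality, the clean route---which resolves the ``main obstacle'' you flag---is to apply the full vertex-wise SSSZ bound $Z_\gam\le\prod_v Z_{K_{d_{\gam}(v)+1}}^{1/(d_{\gam}(v)+1)}$ and then invoke the log-convexity of the sequence $d\mapsto Z_{K_{d+1}}^{1/(d+1)}$ (proved as~\cite[Lemma~5.3]{SSSZ20}); this is precisely what justifies the per-vertex factor $Z_{K_{d+1}}^{d_\gam(v)/[d(d+1)]}\cdot q^{1-d_\gam(v)/d}$ you assert, but it is a separate convexity lemma rather than something that falls out of the entropy argument or a padding trick. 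For the clique bound, the paper takes a different sub-route, reusing its Karger-based coloring enumeration (Lemma~\ref{lem:groundstate}) applied to $K_{d+1}$ rather than your direct color-profile estimate; your approach is more elementary and self-contained here, and also works under the standing hypotheses $\beta\ge(1+\eps)\beta_o$ and $q\ge d^c$.
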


To bound the exponent of $q$ in Lemma~\ref{lem:extremal}, observe that $d v_\gam = 2e_\gam + \nabla_\gam \ge 2e_\gam$ by a double-counting argument, and recall that $\eta_{\gamma} = \nabla_{\gamma}/v_\gamma$ to obtain
\[
v_\gam - \frac{2e_\gam}{d}+\frac{e_\gam}{\binom{d+1}{2}} \leq \frac{\nabla_\gam}{d}\left(1 + \frac{1}{\eta_\gam}\right).
\]
Plugging these facts and Lemma~\ref{lem:extremal} into the definition of $w_\gam$ gives
\[
w_\gam = e^{-\beta(e_\gam +\nabla_\gam)} Z_\gam(q-1,\beta) \leq e^{-\beta \nabla_\gam} q^{\frac{\nabla_\gam}{d}(1 + \frac{1}{\eta_\gam})}2^{\frac{\nabla_\gam}{d\eta_\gam}},
\]
and using $\beta \ge (2+\eps)\ln(q)/ d$ we have
\[
e^{-\beta \nabla_{\gamma}} \leq e^{-(2+\eps)\nabla_\gam \frac{\ln q}{d}} = q^{-(2+\eps)\frac{\nabla_\gam}{d}}.
\]
We assume $q$ is large enough (by assuming $d_0$ and $c$ are large enough) so that $2^{\frac{\nabla_\gam}{d\eta_\gam}}\leq q^{\frac{\nabla_\gam}{d} \cdot \frac{\eps}{4}}$, and thus
\[
w_\gam \leq q^{-\frac{\nabla_\gam}{d}\left(1+\frac{3\eps}{4} - \frac{1}{\eta_\gam}\right)}.
\]
Using our choice of $f$ and $g$, we also have
\[
e^{f(\gam)+g(\gam)}
= q^{\frac{\eps}{4d}(v_\gam+\nabla_\gam)}\cdot e^{v_\gam/d} =  q^{\frac{\nabla_\gam}{d}\frac{\eps}{4}\left(1+\frac{1}{\eta_\gam}\right)}\cdot e^{\frac{v_\gam}{d}},
\]
which combined with the bound $\eta_\gam\ge 2$ gives
\begin{equation}
\label{eqn:weightupperbound}
w_\gam e^{f(\gam)+g(\gam)} \leq q^{-\frac{\nabla_\gam}{d}\left(1+\frac{3\eps}{4} - \frac{1}{\eta_\gam} - \frac{\eps}{4\eta_\gam}-\frac{\eps}{4}\right)}e^{\frac{v_\gam}{d}} \leq q^{-\frac{\nabla_\gam}{d}\left(\frac{1}{2}+\frac{3\eps}{8}\right)}e^{\frac{v_\gam}{d}} \leq q^{-\frac{\nabla_\gam}{2d}},
\end{equation}
where the final inequality holds again when $q$ is large enough in terms of $\eps$.

We now use our container-type result. Applying Theorem~\ref{thm:enum} and~\eqref{eqn:weightupperbound} to bound the left-hand side of~\eqref{eqn:KPsplit}, we have (for $q$ large enough that the series converges)
\[ \sum_{b\ge d}\sum_{\substack{\gam' \ni u \\ \text{s.t.} \nabla_{\gam'}=b}} w_{\gam'} e^{f(\gam') + g(\gam')} \leq \sum_{b\ge d} d^{O\left((1+1/\eta)b/d\right)} q^{-\frac{b}{2d}} \le \frac{d^{c'}/\sqrt q}{1-(d^{c'}/\sqrt q)^{1/d}},\]
where the absolute constant $c'\ge 4$ is large enough that the term $O\left((1+1/\eta)b/d\right)$ is at most $c'b/d$ for all $b\ge d$ and all $\eta\ge 1$. 
Note that the right-hand side above is a decreasing function of $q$.
To satisfy~\eqref{eqn:KPsplit} we want this to be at most
\[ \min_{\gam\in\cP} \frac{f(\gam)}{v_{\gam} + \nabla_{\gam}} = \frac{\eps \ln q}{4d} \min_{\gam\in\cP} \frac{v_\gam}{v_\gam+\nabla_\gam} = \frac{\eps \ln q}{4d(d+1)}, \]
which is an increasing function of $q$. 
Thus, it suffices to take $q$ larger than some known value for the desired inequality holds. 
It is straightforward to check that $q=d^{3c'}$ suffices for all $d$ large enough.

\subsection{The triangle-free case}
If $G$ is triangle free then a stronger upper bound on partition functions of the form $Z_{\gam}(q-1, \beta)$ holds. 

\begin{lemma}\label{lem:TFextremal}
If $G$ is triangle free, then
\[
Z_{\gam}(q-1, \beta) \leq e^{\beta e_\gam} \cdot q^{v_\gam - \frac{2e_\gam}{d}+\frac{e_\gam}{d^2}} \cdot 2^{\frac{e_\gam}{d^2}} .
\]
\end{lemma}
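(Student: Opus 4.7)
The plan is to follow the proof strategy of Lemma~\ref{lem:extremal}, substituting the bipartite extremal graph $K_{d,d}$ for $K_{d+1}$ to reflect that for triangle-free host graphs of maximum degree $d$, the per-edge Potts partition function is maximized by $K_{d,d}$ instead of by $K_{d+1}$. Since $G$ is triangle-free, every induced polymer $\gam$ is triangle-free with maximum degree at most $d$, so the Sah--Sawhney--Stoner--Zhao reverse-Sidorenko inequality in its triangle-free variant applies and gives
\[
Z_\gam(q-1,\beta) \le (q-1)^{v_\gam - 2e_\gam/d} \cdot Z_{K_{d,d}}(q-1,\beta)^{e_\gam/d^2}.
\]
The exponents here are exactly the triangle-free analogs of those in the proof of Lemma~\ref{lem:extremal}, but with $e(K_{d,d}) = d^2$ replacing $e(K_{d+1})=\binom{d+1}{2}$; this is precisely where the improvement by a factor $\sim 2$ in the denominator of the exponent arises.

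The second step is to bound the extremal partition function by $Z_{K_{d,d}}(q-1,\beta) \le 2q \cdot e^{\beta d^2}$ in the low-temperature regime $\beta \ge (2+\eps)\ln q/d$, $q\ge d^c$. I would use the factorization obtained by summing over the right-hand color counts first,
\[
Z_{K_{d,d}}(q-1,\beta) = \sum_{b_1+\cdots+b_{q-1}=d} \binom{d}{b_1,\dots,b_{q-1}}\Bigl(\sum_{c=1}^{q-1} e^{\beta b_c}\Bigr)^d,
\]
and isolate the contribution from the permutations of $b=(d,0,\ldots,0)$, which gives $(q-1)(e^{\beta d}+q-2)^d \le e\,(q-1)\,e^{\beta d^2}$ once $(q-2)e^{-\beta d}\le 1/d$; the latter is forced by $\beta\ge (2+\eps)\ln q/d$ and $q \ge d^c$. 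For the remaining multi-indices I would group by the largest entry $M<d$ and, using convexity of $x \mapsto e^{\beta x}$ to bound $\bigl(\sum_c e^{\beta b_c}\bigr)^d$ and standard multinomial counts, show the total is a geometric series in $M$ that also sums to $O((q-1)e^{\beta d^2})$, consolidating into the constant $2$ when $d$ and $c$ are large enough.

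Finally I would combine the two steps and use $q-1\le q$ in the first factor:
\[
Z_\gam(q-1,\beta) \le q^{v_\gam - 2e_\gam/d}\,\bigl(2q\cdot e^{\beta d^2}\bigr)^{e_\gam/d^2} = e^{\beta e_\gam}\cdot q^{\,v_\gam - 2e_\gam/d + e_\gam/d^2}\cdot 2^{e_\gam/d^2},
\]
which is the desired inequality.

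The main obstacle is the second step: the temperature $\beta\ge (2+\eps)\ln q/d$ is barely large enough to suppress non-monochromatic colorings of $K_{d,d}$, and the crude bound that each non-monochromatic coloring has at most $d^2-d$ monochromatic edges is very lossy when multiplied by the $(q-1)^{2d}$ colorings. Handling this requires a genuine ``defect expansion'': the cost $e^{-\beta d/2}$ per defect vertex beats the $q$-choice of off-majority color only once $q$ is sufficiently large in terms of $d$, and colorings with no clear majority need a separate treatment via the convexity estimate on $\sum_c e^{\beta b_c}$. This is exactly where the assumption $q\ge d^c$ with $c$ large (and $d$ large) gets used.
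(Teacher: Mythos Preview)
Your overall architecture matches the paper: apply the triangle-free Sah--Sawhney--Stoner--Zhao bound (Corollary~\ref{cor:TFextremal}) to reduce to $Z_{K_{d,d}}(q-1,\beta)$, bound that by $2q\,e^{\beta d^2}$, and combine. The difference is entirely in how you handle the middle step.

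The paper does not analyze the bipartite factorization directly. Instead it observes that $K_{d,d}$ is itself a $d$-regular $(d/2)$-expander satisfying~\eqref{eqn:mincut2}, and reapplies the Karger-based coloring enumeration already developed in Lemma~\ref{lem:groundstate} (packaged as Lemma~\ref{lem:BGJ}) to get $Z_{K_{d,d}}(q-1,\beta)\le (1+q^{-\Omega(\eps)})(q-1)e^{\beta d^2}$. This is slick because no new argument is needed: the same random-contraction machinery that handles ground-state colorings of $G$ also handles non-monochromatic colorings of $K_{d,d}$. Your proposed ``defect expansion'' via the explicit formula $\sum_b \binom{d}{b}(\sum_c e^{\beta b_c})^d$ is a genuinely different, more hands-on route; it can be made to work, and has the advantage of being self-contained for this specific graph, but it requires the case analysis on $M=\max_c b_c$ that you sketch rather than a one-line citation.

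One minor numerical point: your bound $(q-2)e^{-\beta d}\le 1/d$ gives only $(1+1/d)^d\le e$ for the main term, so the leading constant is $e>2$ and you cannot then ``consolidate into the constant $2$''. You need the sharper estimate $(q-2)e^{-\beta d}\le q^{-(1+\eps)}$, which under $q\ge d^c$ makes $(1+q^{-(1+\eps)})^d = 1+o(1)$; then the main term is $(1+o(1))(q-1)e^{\beta d^2}$ and the constant $2$ is comfortable. This is exactly the $1+q^{-\Omega(\eps)}$ that Lemma~\ref{lem:BGJ} produces.
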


Plugging this inequality in the computation in the previous section, we have
\[ v_\gam - \frac{2e_\gam}{d} + \frac{e_\gam}{d^2} \le \frac{\nabla_\gam}{d}\left(1+\frac{1}{2\eta_\gam}\right),\]
and hence only assuming $\eta\ge 1$ we have the bound
\[
w_\gam e^{f(\gam)+g(\gam)} \leq q^{-\frac{\nabla_\gam}{d}\left(\frac{1}{2}+\frac{\eps}{4}\right)}e^{\frac{v_\gam}{d}} \leq q^{-\frac{\nabla_\gam}{2d}} .
\]
We may then apply Theorem~\ref{thm:enum} to reach the same conclusion as before.

\subsection{Proof of Lemma~\ref{lem:polymerapprox}}

Let $\nu$ and $\widetilde{\nu}$ denote the probability distributions on $\Omega$ given by the partition functions $\Xi$ and $\widetilde{\Xi}$ respectively, meaning that 
$\nu(\Lambda) = \prod_{\gam\in\Lambda}w_\gam / \Xi$ for any $\Lambda\in\Omega$ and $\widetilde{\nu}(\Lambda)=0$ if $\|\Lambda\|\ge n/2$ but $\widetilde{\nu}(\Lambda) = \prod_{\gam\in\Lambda}w_\gam / \widetilde{\Xi}$ otherwise.
We briefly prove a large deviation inequality on the size of the defects, similar to~\cite{JP20, JPP22}. 
Here, we carefully make use of the term $e^{v_\gam/d}$ in the definition of $g(\gam)$ and the monotonicity of condition~\eqref{eqn:KPcondition} in $g$.
This immediately implies that Theorem~\ref{thm:KP} holds for an alternative polymer model on the same polymers but with weights
\[
w'_{\gamma} : = w_{\gamma} \cdot e^{\frac{v_{\gamma}}{d}},
\]
because for the same choice $f(\gam)=\frac{\eps v_\gam\ln q}{4d}$ and now taking $g(\gam) = \frac{\eps\nabla_\gam\ln q}{4d}$, condition~\eqref{eqn:KPcondition} is the same for both models.

Write
\[
\Xi' : = \sum_{\Lambda \in \Omega }\prod_{\gamma \in \Lambda}w'_{\gamma} = \sum_{\Lambda \in \Omega }\prod_{\gamma \in \Lambda}w_{\gamma} e^{v_\gam/d},
\]
and let $\mathbf{\Lambda}$ denote a random set of pairwise compatible polymers drawn from $\nu$.  
Observe that $\Xi'$ naturally appears in an expression for $\mathbb{E} e^{\|\mathbf{\Lambda}\|/d}$ as follows:
\[ \mathbb{E} e^{\|\mathbf{\Lambda}\|/d} = \sum_{\Lambda\in\Omega}\nu(\Lambda)e^{\|\Lambda\|/d} = \frac{1}{\Xi}\sum_{\Lambda\in\Omega}\prod_{\gam\in\Lambda}\left(w_\gam e^{v_\gam/d}\right) = \frac{\Xi'}{\Xi}. \]

Then Theorem~\ref{thm:KP} and the fact that $\Xi \ge 1$ gives us
\[
\ln \mathbb{E} e^{\|\mathbf{\Lambda}\|/d} = \ln \frac{\Xi'}{\Xi} \le \ln \Xi' = \sum_{\Gamma\in\mathcal{C}}\phi(\Gamma)\prod_{\gam\in\Gamma} w'_\gam, \] 
where the final equality is the cluster expansion for the abstract polymer model defined by $\Xi'$, and hence the sum is over clusters $\Gamma\in\mathcal{C}$ as defined in Section~\ref{sec:prelim}.
We can crudely upper bound this cluster expansion of by summing over all polymers of the form $\gam=(\{v\},\emptyset)$ induced by a vertex and applying the tail bound~\eqref{eqn:KPtailbound}.
This gives
\begin{align}
\ln\Xi' 
&\leq \sum_{v\in V(G)}\sum_{\substack{\Gamma \in \mathcal{C}\\ \Gamma\not\sim (\{v\},\emptyset)}} \left|\phi(\Gamma)\prod_{\gamma \in \Gamma}w'_{\gamma}\right| 
\\&\leq \sum_{v\in V(G)}\sum_{\substack{\Gamma \in \mathcal{C}\\ \Gamma\not\sim (\{v\},\emptyset)}} q^{-\eps/4}\left|\phi(\Gamma)\prod_{\gamma \in \Gamma}w'_{\gamma}\right|e^{g(\Gamma)}\label{eq:crudetail}
\\&\leq n \cdot \frac{\epsilon \ln q}{4q^{\epsilon/4}d}.\label{eq:crudetailwithKP}
\end{align}
Line~\eqref{eq:crudetail} holds because every cluster $\Gamma\not\sim(\{v\},\emptyset)$ contains at least one polymer (which has at least one vertex) and hence using~\eqref{eqn:mincut1} we have
\[ g(\Gamma) = \frac{\eps\ln q}{4d}\sum_{\gamma\in\Gamma}\nabla_\gamma \ge \frac{\eps\ln q}{4}.  \]
To obtain line~\eqref{eq:crudetailwithKP} we use~\eqref{eqn:KPtailbound} with $f((\{v\},\emptyset)) = \frac{\eps\ln q}{4d}$. 

Combining the above calculations, we have $\mathbb{E} e^{\|\mathbf{\Lambda}\|/d} \le e^{\frac{n}{4d}}$, and hence Markov's inequality gives
\begin{equation}\label{eqn:finallargedeviation}
\mathbb{P}(\|\mathbf{\Lambda}\| \geq n/2) = \mathbb{P}(e^{\|\mathbf{\Lambda}\|/d} \ge e^{n/(2d)}) \leq \exp\left(\frac{\epsilon n \ln q}{4q^{\epsilon/4}d} - \frac{n}{2d}\right) = \exp\left(-\frac{n}{4d}\right),
\end{equation}
which implies $\nu(\widetilde\Omega)\ge 1-e^{-n/(4d)}$. Here we have used the fact that $q$ is large in terms of $\epsilon$ so $\frac{\epsilon \ln q}{4q^{\epsilon/4}} \leq \frac{1}{4}$. We would like to point out that similar computations are required for the proofs of Theorems~\ref{thm:pt} and~\ref{thm:maincube}.

Since the probabilities for $\widetilde\nu$ on outcomes in $\widetilde\Omega$ are made by redistributing the probability mass $\nu(\Omega\setminus \widetilde\Omega)\le e^{-n/(4d)}$ onto outcomes in $\widetilde\Omega$, we immediately have a bound on the total variation distance% (equivalently, the $L_1$ distance $\|\widetilde\nu-\nu\|_1$)
\[
\|\widetilde{\nu} - \nu\|_{TV} \leq e^{-n/(4d)}.
\]
The conclusion follows because by the definitions of $\widetilde\nu$ and total variation distance we have 
\[ 0 \le \widetilde\nu(\widetilde\Omega)-\nu(\widetilde\Omega) = 1-\nu(\widetilde\Omega) \le \|\widetilde{\nu} - \nu\|_{TV}, \]
and hence dividing by $\nu(\widetilde\Omega)$ gives
\[
0 \le \frac{1}{\nu(\widetilde\Omega)}-1 = \frac{\Xi}{\widetilde\Xi}-1 \le \frac{\|\widetilde{\nu} - \nu\|_{TV}}{\nu(\widetilde\Omega)} \le \frac{e^{-n/(4d)}}{1-e^{-n/(4d)}}.
\]

\section{Enumerative results}

In this section, we prove Theorem~\ref{thm:enum} and Lemma~\ref{lem:maingroundstate}.

\subsection{Connected sets with small edge boundaries: Proof of Theorem~\ref{thm:enum}}\label{sec:enum}

The main combinatorial result underlying our container-like lemma is the following standard adaptation of Karger's algorithm to count \emph{$\alpha$-min-cuts}. 

\begin{theorem}[Karger~\cite{Kar93}]\label{thm:karger}
Let $G$ be a graph whose min-cut has $t$ edges, and let $\alpha \geq 1$ be a real number. The number of cuts in $G$ with at most $\alpha t$ edges is at most $\binom{n}{2\alpha} \cdot 2^{2\alpha}$.
\end{theorem}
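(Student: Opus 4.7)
The plan is to analyze a generalization of Karger's random contraction algorithm and argue that every cut $C$ of $G$ with $|C| \le \alpha t$ is output with probability at least $1/(\binom{n}{2\alpha}\cdot 2^{2\alpha})$. Since the events ``the algorithm outputs $C$'' are pairwise disjoint across distinct cuts $C$, summing these lower bounds immediately yields the stated upper bound on the number of such cuts. Concretely, the algorithm repeatedly picks a uniformly random edge of the current multigraph and contracts it (keeping all parallel edges, discarding self-loops) until $k := \lceil 2\alpha \rceil$ super-vertices remain, and then outputs a uniformly random non-trivial bipartition of those $k$ super-vertices, interpreted as a cut of $G$.

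Fix a cut $C$ with $|C| \le \alpha t$. First I would bound below the probability that no edge of $C$ is ever contracted. The key observation is that contraction never decreases the global min-cut, so throughout the process the current multigraph still has min-cut at least $t$ and therefore minimum degree at least $t$; when $j$ super-vertices remain it thus has at least $jt/2$ edges. Consequently the conditional probability of contracting an edge of $C$ at that step is at most $(\alpha t)/(jt/2) = 2\alpha/j \le k/j$, and the probability that $C$ survives all $n-k$ contractions is at least
\[
\prod_{j=k+1}^{n}\left(1 - \frac{k}{j}\right) \;=\; \prod_{j=k+1}^{n}\frac{j-k}{j} \;=\; \frac{(n-k)!\,k!}{n!} \;=\; \binom{n}{k}^{-1}.
\]

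Conditional on survival, the cut $C$ corresponds to exactly one of at most $2^{k-1}$ non-trivial bipartitions of the $k$ remaining super-vertices, so the algorithm outputs $C$ with conditional probability at least $2^{-(k-1)}$. Multiplying the two bounds gives an unconditional lower bound of $1/\bigl(\binom{n}{k}\,2^{k-1}\bigr) \ge 1/\bigl(\binom{n}{2\alpha}\,2^{2\alpha}\bigr)$ on the probability of outputting $C$, and disjointness over $C$ finishes the proof. There is no serious obstacle here; the one mild subtlety is the non-integrality of $2\alpha$, handled by working with $k = \lceil 2\alpha \rceil$ and using the inequality $2\alpha/j \le k/j$ inside the telescoping product, after which the bookkeeping loss is absorbed into the stated bound. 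The argument is essentially Karger's original analysis, stopped early enough that cuts slightly larger than the minimum still survive with inverse-polynomial probability.
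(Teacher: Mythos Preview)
Your argument is the standard proof of Karger's $\alpha$-min-cut bound and is correct. Note, however, that the paper does not give its own proof of this statement: Theorem~\ref{thm:karger} is simply cited from~\cite{Kar93} and used as a black box (the same contraction idea reappears, adapted, in the paper's proof of Lemma~\ref{lem:numbercolors}). So there is nothing in the paper to compare against beyond observing that your analysis is exactly the classical one the citation points to. The only minor wrinkle is the non-integrality of $2\alpha$: you handle it by rounding up to $k=\lceil 2\alpha\rceil$ and then asserting $\binom{n}{k}2^{k-1}\le \binom{n}{2\alpha}2^{2\alpha}$, which depends on how one interprets $\binom{n}{2\alpha}$ for non-integer $2\alpha$; the paper (like most of the literature) treats this informally via the telescoping product, so your level of rigor matches theirs.
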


We also require a well-known result giving an upper bound to the number of connected induced subgraphs containing a fixed vertex. 

\begin{proposition}[\cite{KNUTH98}, Vol.~3 p396, Ex.11]\label{prop:trees}
The number of rooted labelled trees of maximum degree $\Delta$ on $n \geq 1$ vertices is 
    \[
    \frac{\binom{\Delta n}{n}}{(\Delta-1)n + 1} \leq (e\Delta)^{n-1},
    \]
    and hence, for a graph of maximum degree $\Delta$ and a fixed vertex $v$, the number of connected induced subgraphs on $n$ vertices containing $v$ is also at most this number.
\end{proposition}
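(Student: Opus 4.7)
The proposition has two separable assertions: a closed-form count of rooted trees of bounded degree, and the resulting upper bound on connected induced subgraphs containing a fixed vertex. I would attack them in that order.

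For the exact formula $\frac{\binom{\Delta n}{n}}{(\Delta-1)n+1}$, the plan is to set up a bijection between the rooted trees in question and a family of lattice paths, then apply the cycle lemma (Raney/Dvoretzky--Motzkin). Concretely, I would traverse each tree in depth-first order and, at each of the $n$ vertices visited, emit one $+1$ step followed by $\Delta-1$ slot markers recorded as $-1$'s, some of which are ``filled'' by the DFS descent. This produces a sequence of $n$ $+1$'s and $(\Delta-1)n$ $-1$'s whose partial sums remain non-negative, and the correspondence is a bijection onto such non-negative sequences. The cycle lemma then says that exactly a $\frac{1}{(\Delta-1)n+1}$ fraction of the $\binom{\Delta n}{n}$ arrangements of these symbols are non-negative, which yields the formula. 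An alternative route is via the generating function $T(z)=z(1+T(z)+T(z)^2+\dots+T(z)^{\Delta})$ and Lagrange inversion, but the cycle-lemma argument is more elementary and avoids analytic machinery.

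For the bound $\leq (e\Delta)^{n-1}$, I would apply Stirling's formula. Using $(\Delta/(\Delta-1))^{(\Delta-1)}\leq e$ and crudely estimating the binomial coefficient gives $\binom{\Delta n}{n}\leq (\Delta-1)n\cdot (e\Delta)^{n-1}$ up to polynomial factors, and dividing by the denominator $(\Delta-1)n+1$ absorbs the remaining factor. For the second assertion, I would exhibit an injection from connected induced subgraphs on $n$ vertices containing $v$ to rooted trees on $n$ vertices of max degree $\Delta$: given such a subgraph $H$, run BFS (or DFS) from $v$ using a fixed tie-breaking order on $V(G)$, and let $T(H)$ be the resulting spanning tree rooted at $v$. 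Since $H$ is an induced subgraph of a graph of maximum degree $\Delta$, the tree $T(H)$ also has maximum degree at most $\Delta$, and the map is injective because the vertex set of $T(H)$ determines (and equals) that of $H$. The bound then follows from the counting formula.

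The main obstacle is getting the DFS-to-lattice-path encoding right so that the degree restriction transparently becomes the step-count condition; in particular, care is needed with how the root is treated (it has one fewer incident tree-edge than internal vertices) so that the $+1$'s and $-1$'s balance to the right totals. Once that accounting is correct, the cycle lemma and Stirling's estimate are routine, and the injection via canonical spanning trees is straightforward.
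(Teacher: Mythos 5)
The paper itself does not prove this proposition; it is quoted directly from Knuth, so any self-contained argument you give is new content, and your overall route (cycle lemma for the Fuss--Catalan count, a Stirling-type estimate, an injection for the ``hence'' clause) is the standard and viable one. However, the encoding you describe cannot work as stated: a word with $n$ steps $+1$ and $(\Delta-1)n$ steps $-1$ has total sum $-(\Delta-2)n<0$ for $\Delta\ge 3$, so its partial sums cannot all remain non-negative, and the cycle lemma applied to a word of length $\Delta n$ does not produce the fraction $\frac{1}{(\Delta-1)n+1}$. The correct accounting (which you anticipated as the obstacle) uses the extended tree: the $n$ internal nodes and $(\Delta-1)n+1$ external nodes read in preorder give a word with $n$ steps of $+(\Delta-1)$ and $(\Delta-1)n+1$ steps of $-1$, total sum $-1$; the cycle lemma then yields exactly one admissible rotation per cyclic class, giving $\frac{1}{\Delta n+1}\binom{\Delta n+1}{n}=\frac{1}{(\Delta-1)n+1}\binom{\Delta n}{n}$.

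The more substantive gap is in the second assertion. The displayed formula counts \emph{ordered} ($\Delta$-ary, plane) trees --- for $\Delta=2$, $n=3$ it gives $5$, the Catalan number, not the number of rooted vertex-labelled trees of maximum degree $2$ --- so after your (correct) observation that an induced subgraph is determined by the vertex set of its canonical spanning tree, you have only reduced the problem to counting subtrees of $G$ rooted at $v$, which are labelled by vertices of $G$ and are not literally the objects the formula counts. The missing bridge is to fix, for each vertex of $G$, an ordering of its at most $\Delta$ neighbors and encode each rooted subtree injectively as a $\Delta$-ary tree by recording which neighbor slots are occupied by children (noting that non-root vertices use at most $\Delta-1$ slots, which is fine since the bound is monotone); without this step ``the bound then follows from the counting formula'' does not follow. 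Finally, the crude estimate in the Stirling step needs $e\Delta\le(\Delta-1)n+1$, which fails for small $n$ and $\Delta$ (e.g.\ $\Delta=2$, $n\le 4$), so either sharpen the binomial bound or verify the finitely many small cases directly; this is routine but required for the inequality as stated for all $n\ge 1$.
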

With this, we now prove Theorem~\ref{thm:enum}.

\begin{proof}[Proof of Theorem~\ref{thm:enum}]
We first introduce some notation. Recall $V$ and $E$ are the sets of vertices and edges of $G$ respectively. For $u, v \in V$, we use $\dist_G(u, v)$ to denote the \emph{distance} in $G$ from $u$ to $v$, that is, the number of edges on a shortest path between $u$ and $v$. For every $k \in \mathbb{N}$, we define the \emph{$k$th neighborhood} of $u$ in $G$ as $N_{G}^k(u) = \{v \in V : 1 \leq \dist_G(u, v) \leq k\}$. Then $N_G(u) = N_G^1(u)$. We may also define the $k$th neighborhood of a set of vertices $A \subseteq V$ as $N_G^k(A) = \cup_{u \in A}N_G^k(u)$. The \emph{$k$th power of the graph $G$} is the graph $G^k$ on vertex set $V$ where $\{u,v\} \in E(G^k)$ if and only if $1 \leq \dist_G(u, v) \leq k$. Observe that $N_{G^k}(u) = N_G^k(u)$.

Let $I$ be the \emph{vertex-edge incidence graph} of $G$, which is a bipartite graph with vertex set $V \sqcup E$ and edges $\left\{(u,e) \in V\times E\mid u \in e\right\}$.
Let $J$ be the bipartite graph with vertex set $V \sqcup E$ and edges given by $\left\{(u,e)\in V\times E\mid\exists v\in N_G(u) \text{ s.t. } v\in e\right\}$.
That is, $(u,e)$ forms an edge of $J$ if and only if $e$ is an edge incident in $G$ to a neighbor of $u$ (this includes all edges incident to $u$ itself). 
Note that $I$ is a subgraph of $J$.

Observe that for every $u\in V$, the degree of $u$ in $J$ is at least $\binom{d+1}{2} \ge d^2/2$ and at most $d^2$.
To see the lower bound, consider that each of the $d$ neighbors of $u$ is itself incident to $d$ edges, and the number of distinct edges incident to $N_G(u)$ is minimized when $N_G(u)$ forms a clique on $d+1$ vertices. The upper bound holds because the maximum number of edges of $G$ incident to $N_G(u)$ is $d^2$ (which occurs if and only if $u$ is contained in no cycles of length 3 or 4).

Now consider a set of vertices $A \subseteq V$ of size at most $n/2$ such that $G[A]$ (the subgraph of $G$ induced by $A$) is connected and $x \in A$. Let $B=\nabla(A)$ and $|B|=|\nabla(A)|=b$. 

We then have $N_I(A)= B\sqcup W$ where $W=E(G[A])$.
Note that by summing degrees, $d|A|=b + 2|W|$, and by the expansion assumption~\eqref{eqn:expansion} we have $b \ge \eta|A|$. Hence,
\begin{equation}
\label{eqn:Wbound}
|W|\le \frac{b}{2}\left(\frac{d}{\eta}-1\right) \le \frac{bd}{2\eta}.
\end{equation}
Observe that $I$ is a subgraph of $J$, so $N_I(A) \subseteq N_J(A)$. Let $B' = N_J(A) \setminus N_I(A)$. If $e \in B'$, then $e$ must contain the external endpoint of some boundary edge of $A$. There are at most $b$ such external endpoints, and each is incident to at most $d-1$ such non-boundary edges. Thus, \begin{equation}
\label{eqn:B'bound}
|B'| \leq b(d-1).
\end{equation}

Let $A_0 \subseteq A$ be a maximal subset of vertices with pairwise-disjoint neighborhoods in $J$. We will first determine how many choices there are for $A_0$. We claim that 
\begin{equation}\label{eqn:size}
|A_0| \le \frac{2}{d^2}\left(|B'| +|B| +|W|\right) \le \frac{b}{d}\left(2+\frac{1}{\eta}\right).
\end{equation}
The first inequality holds because the above definitions give $|N_J(A)| = |B'| + |B| + |W|$, and each vertex in $A$ has degree at least $d^2/2$ in $J$.
The second inequality follows from~\eqref{eqn:Wbound} and~\eqref{eqn:B'bound}.

The maximality of $A_0$ gives us that 
\begin{equation}\label{eqn:span} 
\text{for any}~u \in A \setminus A_0,~\text{there is a}~v \in A_0~\text{such that}~N_{J}(u) \cap N_{J}(v) \neq \emptyset.
\end{equation}
As a result, we have, 
\begin{equation}
\label{eqn:cover}
A_0 \cup N^2_{J}(A_0) \supseteq A.
\end{equation}

Moreover, we make the following claim (where we recall $x \in A$ is some specified vertex).

\begin{claim}\label{claim:connected}
The set of vertices $A_0 \cup \{x\}$ is connected in $G^7$.
\end{claim}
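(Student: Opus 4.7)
The plan is to reduce the claim to a $G$-distance statement and then exploit the connectivity of $G[A]$ in the original graph.

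First, I will show that if $N_J(u)\cap N_J(v)\ne\emptyset$ for $u,v\in V$, then $\dist_G(u,v)\le 3$. Indeed, any common edge $e=\{p,q\}$ has $p\in N_G(u)$ (without loss of generality) and one of $p,q$ in $N_G(v)$; since $p$ and $q$ are either equal or adjacent, the walk $u \to p \to q \to v$ in $G$ has length at most $3$. Combining this with the maximality property~\eqref{eqn:span} (and the trivial case $u\in A_0$), every vertex $u\in A$ admits some $w(u)\in A_0$ with $\dist_G(u,w(u))\le 3$.

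Next, given $a,a'\in A_0$, I use that $G[A]$ is connected to take a $G$-path $a=u_0,u_1,\ldots,u_k=a'$ with each $u_i\in A$, and set $w_i=w(u_i)\in A_0$ (choosing $w_0=a$, $w_k=a'$). The triangle inequality yields
\[
\dist_G(w_i,w_{i+1}) \le \dist_G(w_i,u_i) + \dist_G(u_i,u_{i+1}) + \dist_G(u_{i+1},w_{i+1}) \le 3+1+3 = 7,
\]
so consecutive $w_i$ are equal or adjacent in $G^7$. The sequence $w_0,w_1,\ldots,w_k$ is thus a walk in $G^7$ from $a$ to $a'$ that uses only vertices of $A_0$. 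Finally, if $x\notin A_0$, then $\dist_G(x,w(x))\le 3\le 7$ attaches $x$ to $w(x)\in A_0$ by a single $G^7$-edge, completing the argument that $A_0\cup\{x\}$ is connected in $G^7$.

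The only substantive step is the first, which translates the abstract incidence-graph condition $N_J(u)\cap N_J(v)\ne\emptyset$ into a concrete $G$-distance bound; the rest is routine triangle-inequality bookkeeping. The exponent $7$ is essentially forced by this approach: each unit-length step along the $G[A]$-path can shift the associated representative in $A_0$ by up to distance $3$ at each end, giving $3+1+3=7$.
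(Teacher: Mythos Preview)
Your proof is correct and follows essentially the same approach as the paper: both arguments show that every $u\in A$ lies within $G$-distance $3$ of some vertex of $A_0$ (via the case analysis on a shared edge in $N_J$), then use a $G$-path in $A$ between two points of $A_0$ to chain together representatives in $A_0$ at mutual distance $\le 3+1+3=7$. Your use of the triangle inequality is a slightly cleaner packaging of the paper's explicit detour-walk construction, but the underlying idea is identical.
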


\begin{proof}
We start by proving that $A_0$ is connected in $G^7$. 
Since $A$ is connected in $G$, and $A_0\subseteq A$, for any pair of distinct vertices $u,v\in A_0$, there must be a path $P$ from $u$ to $v$ along edges of $G$ using only vertices in $A$. 
Let the vertices of $P$ be $w_0 = u, w_1, \dots, w_k = v$.

Observe that every vertex in $P$ is distance at most three in $G$ from $A_0$. This follows from~\eqref{eqn:span}; indeed, either $w_i \in A_0$ or there is some $u_i \in A_0$ and $e \in G$ such that $e \in N_J(w_i) \cap N_J(u_i)$. This means $e$ contains a vertex in $N_G(w_i)$ and a vertex in $N_G(u_i)$. There are three possibilities for $e$: either $e = u_iv_i$, $e$ is incident to exactly one of $u_i$ or $w_i$, or $e$ is incident to neither $u_i$ nor $w_i$. These cases give $\dist_G(w_i, u_i)$ as 1, 2, and 3, respectively. See Figure~\ref{fig:dist} for an illustration.

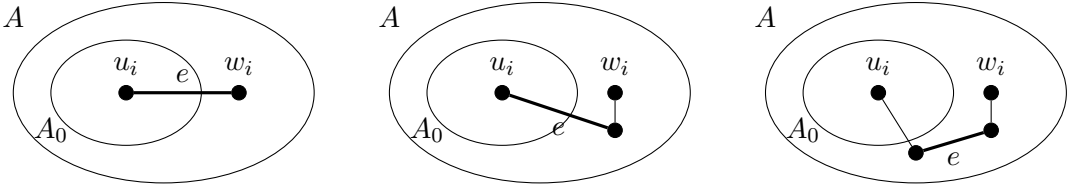
\begin{figure}
    \centering

	\begin{tikzpicture}[vert/.style ={circle,fill=black,draw,minimum size=0.5em,inner sep=0pt}]
	\draw[shift={(-5,0)}] (-1,0) ellipse (1cm and 0.7cm);
	\draw[shift={(-5,0)}] (-0.5,0) ellipse (2.2cm and 1.4cm);
	\node[shift={(-5,0)}, vert, label=above:$u_i$] (A) at (-1,0) {};
	\node[shift={(-5,0)}, vert, label=above:$w_i$] (B) at (.5,0) {};
	\draw[shift={(-5,0)}, very thick] (A) -- (B) node [midway, above] {$e$};
	\node[shift={(-5,0)}] () at (-2,-0.5) {$A_0$};
	\node[shift={(-5,0)}] () at (-2.5,1) {$A$};

	\draw (-1,0) ellipse (1cm and 0.7cm);
	\draw (-0.5,0) ellipse (2.2cm and 1.4cm);
	\node[vert, label=above:$u_i$] (C) at (-1,0) {};
	\node[vert] (D) at (.5,-0.5) {};
	\node[vert, label=above:$w_i$] (E) at (.5,0) {};
	\draw[very thick] (C) -- (D) node [midway, below] {$e$};
	\draw (D) -- (E);
	\node () at (-2,-0.5) {$A_0$};
	\node () at (-2.5,1) {$A$};

	\draw[shift={(5,0)}] (-1,0) ellipse (1cm and 0.7cm);
	\draw[shift={(5,0)}] (-0.5,0) ellipse (2.2cm and 1.4cm);
	\node[shift={(5,0)}, vert, label=above:$u_i$] (F) at (-1,0) {};
	\node[shift={(5,0)}, vert] (G) at (-.5,-0.8) {};
	\node[shift={(5,0)}, vert] (H) at (.5,-0.5) {};
	\node[shift={(5,0)}, vert, label=above:$w_i$] (I) at (.5,0) {};
	\draw[shift={(5,0)}] (F) -- (G);
	\draw[shift={(5,0)}, very thick] (G) -- (H) node [midway, below] {$e$};
	\draw[shift={(5,0)}] (H) -- (I);
	\node[shift={(5,0)}] () at (-2,-0.5) {$A_0$};
	\node[shift={(5,0)}] () at (-2.5,1) {$A$};
	\end{tikzpicture}
    \caption{The three different cases for an edge $e$ to be in $N_J(u_i) \cap N_J(w_i)$\label{fig:dist}}
\end{figure}

We can now construct a walk $P'$ along edges of $G$ from $u$ to $v$ that visits a vertex in $A_0$ at most every seven steps. Let $Q_i$ be a path from $w_i$ to $A_0$ such that $|Q_i| \leq 3$, and let $Q_i^{-1}$ be the reverse path from $A_0$ to $w_i$. If $w_i \in A_0$, set $Q_i$ and $Q_i^{-1}$ to be empty. Then the edges of $P'$ are
\[P' = (uw_1, Q_1, Q_1^{-1}, w_1w_2, Q_2, Q_2^{-1}, \dots, w_{k-2}w_{k-1}, Q_{k-1}, Q_{k-1}^{-1}, w_{k-1}w_k)\]
\begin{figure}
	\centering

	\begin{tikzpicture}[vert/.style ={circle,fill=black,draw,minimum size=0.5em,inner sep=0pt}]
%	\node[vert, label=below:$u$] (u) at (-1.5,3) {};
	\node[vert, label=below:$w_i$] (w1) at (0,0) {};
	\node[vert, label=below:$w_{i+1}$] (w2) at (2,0) {};
	\node[vert, label=below:$w_{i+2}$] (w3) at (4,0) {};
	\node[vert] (A1) at (0,1) {};
	\node[vert] (B1) at (0,2) {};
	\node[vert] (C1) at (0,3) {};
	\draw (w1) -- (A1) -- (B1) -- (C1);
	\draw[dashed] (w1) -- (-1,0);
	\draw[dashed] (w3) -- (5,0);
	\draw[->, color=red] (-0.2,0.2) -- (-0.2, 0.8);
	\draw[->, shift={(0,1)}, color=red] (-0.2,0.2) -- (-0.2, 0.8);
	\draw[->, shift={(0,2)}, color=red] (-0.2,0.2) -- (-0.2, 0.8);
	\draw[->, shift={(0,2)}, color=red] (0.2, 0.8) -- (0.2,0.2);
	\draw[->, shift={(0,1)}, color=red] (0.2, 0.8) -- (0.2,0.2);
	\draw[->, color=red] (0.2, 0.8) -- (0.2,0.2);
	\draw (w1) -- (w2) -- (w3);
	\draw[->, color=red] (0.7,0.2) -- (1.3,0.2);	

	\node[vert, shift={(2,0)}] (A2) at (0,1) {};
	\node[vert, shift={(2,0)}] (B2) at (0,2) {};

	\draw[shift={(2,0)}] (w2) -- (A2) -- (B2);
	\draw[->, shift={(2,0)}, color=red] (-0.2,0.2) -- (-0.2, 0.8);
	\draw[->, shift={(2,1)}, color=red] (-0.2,0.2) -- (-0.2, 0.8);

	\draw[->, shift={(2,1)}, color=red] (0.2, 0.8) -- (0.2,0.2);
	\draw[->, shift={(2,0)}, color=red] (0.2, 0.8) -- (0.2,0.2);
	\draw[->, shift={(2,0)}, color=red] (0.7,0.2) -- (1.3,0.2);

	\draw [dashed] plot [smooth] coordinates {(-0.3, 3) (0.5,2.2) (1,1.9) (2.5,1.8)} node[label={above right:$A_0$}]{};
	\end{tikzpicture}
	\caption{An example of part of the path $P'$ in red arrows.\label{fig:path}}
\end{figure}
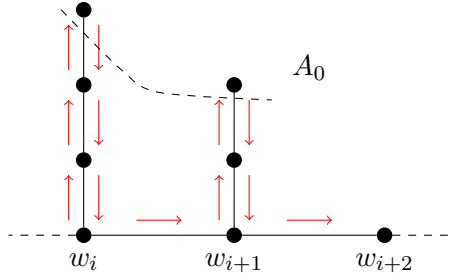

See Figure~\ref{fig:path} for an example. The walk $P'$ visits vertices of $A_0$ at most every seven steps, so there is a walk in $G^7$ from $u$ to $v$ using only vertices in $A_0$. As $u$ and $v$ were arbitrary, it follows that $A_0$ is connected in $G^7$.

This completes the proof of the claim in the case $x \in A_0$. If $x \notin A_0$, then it is distance at most three in $G$ from the nearest vertex in $A_0$ and hence is adjacent to a vertex of $A_0$ in $G^7$. Thus, $A_0\cup \{x\}$ is connected in $G^7$ as required.
\end{proof}

As a result, we can specify $A_0$ by specifying a tree of degree at most $d^7$ rooted at $v$ (since $G^7$ has maximum degree $d(d-1)^6\le d^7$). Using Proposition~\ref{prop:trees} and the bound on $|A_0|$ given by~\eqref{eqn:size}, there are at most $(ed^7)^{\frac{b}{d}\left(2+\frac{1}{\eta}\right)} = d^{O((1+1/\eta)b/d)}$ possibilities for $A_0$. 

We now want to count the number of choices for $B$. Let $E_0$ be the edges of $G$ in $N^3_J(A_0)$, meaning the edges which have an endpoint at distance at most 5 from $A_0$, and let $G'$ be the graph $G$ with every edge in $E \setminus E_0$ contracted. By~\eqref{eqn:cover} we have that 
\[
B \subseteq N_{I}(A) \subseteq N_{J}(A) \subseteq N_{J}^3(A_0),
\]
so $B \subset E(G')$. We also have $v \in V(G')$ if and only if $\dist_G(u, v) \leq 5$ for some $u \in A_0$. Since $|N_G^{k-1}(A_0)| \leq (d+1)|N_G^k(A_0)|$ for $k \geq 1$, this implies $|V(G')| \leq (d+1)^4|A_0| = O(bd^3)$.

By~\eqref{eqn:mincut1}, the min-cut size of $G$ is at least $d$ (and in fact exactly $d$ since $G$ is $d$-regular). Contracting edges does not decrease the min-cut size of a graph, so the min-cut size of $G'$ is also at least $d$.

Thus, the number of possible boundaries $B$ is upper bounded by the number of cuts of size at most $b$ in $G'$, which by Theorem~\ref{thm:karger} is at most
\[
\binom{O(bd^3)}{\leq 2b/d} \cdot 2^{2b/d} = d^{O(b/d)}.
\]
We are done, as we can uniquely determine $A$ from its boundary.
\end{proof}

\subsection{Colorings with small color classes: Proof of Lemma~\ref{lem:maingroundstate}}\label{sec:maingroundstate}

Let us now turn to Lemma~\ref{lem:maingroundstate}. For a coloring $\sigma$, let us use $\nm(\sigma) = \nm(G,\sigma)$ to denote the number of non-monochromatic edges in $\sigma$; observe that $m(G,\sigma) = |E(G)| - \nm(\sigma)$. Recall that $S_0$ is the set of colorings in which each color occupies at most $n/2$ vertices. The following lemma is the main point of this section.

\begin{lemma}\label{lem:groundstate}
Let $G$ be as in Theorem~\ref{thm:mainpotts}. For $k \geq d$, let $\mathcal{L}_k$ denote the number of $q$-colorings from $S$ which give exactly $k$ non-monochromatic edges. Then for $\delta > \frac{12}{\ln d}$,
\[
\mathcal{L}_k \leq n^4 \cdot q^{\frac{(2+\delta)k}{d}-\Omega\left(\frac{\delta k}{d}\right)}.
\]
\end{lemma}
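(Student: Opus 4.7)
The plan is to decompose $\mathcal{L}_k$ as (number of valid partition structures) times (colorings per structure). Given $\sigma \in S$ with $\operatorname{nm}(\sigma)=k$, I would let $B=B(\sigma)\subseteq E$ be the set of non-monochromatic edges and $C_1,\dotsc,C_t$ the connected components of $G\setminus B$. Each $C_i$ is contained in a single color class, and since $\sigma\in S$ no color class exceeds $n/2$, we get $|C_i|\le n/2$. Property~\eqref{eqn:mincut1} gives $|\nabla(C_i)|\ge d$, and combining with $\sum_i|\nabla(C_i)|=2k$ yields $t\le 2k/d$.

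For the partition count, I would impose a canonical ordering on components: fix an arbitrary total order on $V$, let $C_1$ be the component containing $\min V$, $C_2$ the one containing $\min(V\setminus C_1)$, and so on. At step $i$ the root $v_i:=\min\bigl(V\setminus(C_1\cup\dotsb\cup C_{i-1})\bigr)$ is determined by the earlier $C_j$'s, and applying Theorem~\ref{thm:enum} with $u=v_i$ bounds the number of connected subsets of size at most $n/2$ containing $v_i$ with $|\nabla(\cdot)|=b_i$ by $d^{O((1+1/\eta)b_i/d)}=d^{O(b_i/d)}$. Summing over compositions $b_1+\dotsb+b_t=2k$ with $b_i\ge d$, and over $t\le 2k/d$, costs only a further $2^{O(k/d)}=d^{O(k/d)}$ factor, so the number of valid partitions is at most $d^{O(k/d)}$. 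Each partition has a unique canonical presentation, so there is no overcounting. This qualifies as ``Karger-like'' because Theorem~\ref{thm:enum} itself is proved by running Karger's algorithm on a locally contracted version of $G$.

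For each partition, the colorings realizing it are the proper $q$-colorings of the multigraph obtained by contracting each $C_i$ (with the $k$ edges of $B$ as edges), and their number is at most $q^t\le q^{2k/d}$. Combining,
\[
\mathcal{L}_k \le d^{O(k/d)}\cdot q^{2k/d}.
\]
Choosing $q\ge d^c$ with $c=c(\eps)$ large enough forces $d^{O(k/d)}\le q^{(\eps/2)k/d}$, hence $\mathcal{L}_k\le q^{(2+\eps)k/d-\Omega(\eps k/d)}$, which fits inside the claimed bound with the $n^4$ prefactor providing ample slack for any polynomial bookkeeping overhead (for example, summing over $t$ and over possible boundary-size compositions).

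The main obstacle is arranging the partition count so that no $n$-dependent factor appears in the exponent: a naive application of Theorem~\ref{thm:karger} directly to multi-way cuts of the original graph would yield $\binom{n}{2k/d}\sim n^{2k/d}$, which is fatal. The canonical-ordering trick together with the container-style bound of Theorem~\ref{thm:enum} is the crucial device that replaces $n^{O(k/d)}$ by $d^{O(k/d)}$, and then the largeness of $q$ relative to $d$ completes the reduction. A secondary delicate point is the precise source of the $\Omega(\eps k/d)$ saving; above it comes from choosing $c$ large, but a tighter route could exploit the proper-coloring constraint on the contracted graph instead of the blunt $q^t$ bound.
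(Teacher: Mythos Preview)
Your argument is internally sound and does yield the bound $\mathcal{L}_k \le d^{O(k/d)}\cdot q^{2k/d}$, but it does not prove the lemma as stated. The $O$-constant in your $d^{O(k/d)}$ is absolute, so writing $q=d^{c}$ gives $\mathcal{L}_k \le q^{(2+O(1)/c)\,k/d}$; to rewrite this as $q^{(2+\epsilon)k/d-\Omega(\epsilon k/d)}$ you must take $c\ge C/\epsilon$ for an absolute $C$. In other words, your proof only works under the hypothesis $q\ge d^{c(\epsilon)}$, whereas the lemma fixes the polynomial relationship $q=\operatorname{poly}(d)$ \emph{first} and then asserts the bound for every $\epsilon>0$. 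This matters downstream: Theorem~\ref{thm:mainpotts} and Lemma~\ref{lem:maingroundstate} require the exponent $c$ to be an absolute constant independent of $\epsilon$.

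The paper takes a different route that produces an $\epsilon$-independent saving in the $q$-exponent. Instead of invoking Theorem~\ref{thm:enum} to count partitions, it runs a Karger-style random contraction argument directly on colorings (Lemma~\ref{lem:numbercolors}). The key idea you are missing is to separate out the \emph{singleton} monochromatic components $T$: once all surviving components have at least two vertices, property~\eqref{eqn:mincut2} guarantees the min-cut of the contracted graph is at least $2d-2$ rather than $d$, and rerunning Karger with this larger min-cut yields only about $q^{\ell/(d-1)}$ colorings of the non-singleton part instead of $q^{2\ell/d}$. This gives a saving of $q^{-\Omega(\ell/d)}$ that has nothing to do with how large $c$ is; combined with~\eqref{eqn:largeedges} (which forces $\ell+b_T=\Omega(n)$ in Case~1), it absorbs the $\binom{n}{\cdot}$ factors and delivers the stated bound for absolute $c$. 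Your closing remark that a ``tighter route could exploit the proper-coloring constraint on the contracted graph'' is exactly on target: that is where the paper finds the missing saving, via~\eqref{eqn:mincut2}.
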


We would like to briefly comment on parameter $\delta$. If one is interested in a combinatorial bound, one could simply plug in the ``best possible'' value (i.e., $\delta = 12/\ln d$). We allow for a choice in the parameter $\delta$ solely to make the subsequent calculations slightly easier.

Before using this to prove Lemma~\ref{lem:maingroundstate}, we will need the following.

\begin{lemma}\label{lem:nmedges}
    Let $G$ be an $\eta$-expander on $n$ vertices. For every $\sigma\in S_0$ we have $\nm(\sigma) \ge \eta n /2$.
\end{lemma}
\begin{proof}
    Let $A_i=\sigma^{-1}(i)$ be the set of vertices which get color $i$ under $\sigma$, which by the fact that $\sigma\in S_0$ satisfies $|A_i|\le n/2$. The non-monochromatic edges of $G$ under $\sigma$ are precisely those that appear as $\nabla(A_i)$ for two distinct choices of $i\in [q]$.
    This gives
    \[ 2\nm(\sigma) = \sum_{i \in[q]}|\nabla (A_i)| \ge \eta\sum_{i\in [q]}|A_i| = \eta n. \qedhere \]
\end{proof}

We are now ready to prove Lemma~\ref{lem:maingroundstate}.

\begin{proof}[Proof of Lemma~\ref{lem:maingroundstate}]
We start off with the easy lower bound $Z_{G}(q,\beta) \geq q \cdot e^{\beta |E(G)|}$
obtained by considering only the colorings that give every vertex the same color. So for $\beta \geq (2+\epsilon)\frac{\ln q}{d}$, we have
\begin{align*}
\frac{\sum\limits_{\sigma \in S_0}e^{\beta\left(|E(G)| - \operatorname{nm}(\sigma)\right)}}{Z_{G}(q,\beta)} &\le \frac{\sum\limits_{k \geq \eta n/2}\sum\limits_{\substack{\sigma \in S_0 \\ \operatorname{nm}(\sigma) = k}}e^{\beta(|E(G)| - k)}}{q \cdot e^{\beta |E(G)|}} \\
& \leq \frac{1}{q}\sum_{k \geq \eta n/2}\sum_{\substack{\sigma \in S_0 \\ \operatorname{nm}(\sigma) = k}}q^{-(2+\epsilon)\frac{k}{d}} \\
& \leq \frac{n^4}{q}\sum_{k \geq \eta n/2}q^{-\Omega\left(\frac{\epsilon k}{d}\right)}
\end{align*}
where the first inequality uses Lemma~\ref{lem:nmedges} and the last inequality uses Lemma~\ref{lem:groundstate} with $\delta = \epsilon$ (note that we already assume that $d \geq \exp\left(\Omega(1/\epsilon)\right)$). If $d \geq \sqrt{n}$, then $\frac{n^4}{q} < 1$. Otherwise, $n^4 = q^{o\left(\frac{\epsilon k}{d}\right)}$. 
In either case,
\[
\frac{n^4}{q}\sum_{k \geq \eta n/2}q^{-\Omega\left(\frac{\epsilon k}{d}\right)} \leq q^{-\Omega\left(\frac{\epsilon n}{d}\right)}.\qedhere
\]
\end{proof}

 One ingredient in the proof of Lemma~\ref{lem:groundstate} is the following, which requires the same techniques as in the proof of Theorem~\ref{thm:karger}.

\begin{lemma}\label{lem:numbercolors}
Let $G$ be as in Theorem~\ref{thm:mainpotts}. Let $\mathcal{C}(\ell, s)$ be the number of $q$-colorings of the vertices of $G$ such that there are exactly $\ell$ non-monochromatic edges, where each monochromatic component has at least $s$ vertices. Then we have
\begin{enumerate}
\item\label{lem:numbercolors1} $\mathcal{C}(\ell,1) \leq \binom{n}{2\ell/d}q^{\frac{2\ell}{d}}$, and
\item\label{lem:numbercolors2} $\mathcal{C}(\ell,2) \leq \binom{n}{2\ell/d}q^{\frac{2\ell}{2d - 2}}\cdot (2d)^{\frac{2\ell}{d}}$.
\end{enumerate}
\end{lemma}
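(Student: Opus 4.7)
The plan is a Karger-style randomized counting argument that exploits the min-cut bound~\eqref{eqn:mincut1}: contracting a uniformly random edge of $G$ preserves the min-cut at $\ge d$, so when $s$ super-nodes remain the current multigraph has at least $sd/2$ edges.

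For part~\ref{lem:numbercolors1}, set $T=2\ell/d$ and observe that a coloring $\sigma$ with exactly $\ell$ non-monochromatic edges has at most $T$ monochromatic components, since each such component has edge-boundary $\ge d$ and these boundaries sum to $2\ell$. Fix such a $\sigma$ and run the contraction procedure for $n-T$ steps, calling a run \emph{good} if no non-monochromatic edge of $\sigma$ is ever contracted. The standard Karger telescoping gives
\[\Prob[\text{good}] \;\ge\; \prod_{s=T+1}^{n}\!\Bigl(1-\tfrac{2\ell}{sd}\Bigr) \;=\; \prod_{s=T+1}^{n}\tfrac{s-T}{s} \;=\; \tfrac{1}{\binom{n}{T}}.\]
In any good run every one of the $T$ resulting super-nodes is monochromatic under $\sigma$, so $\sigma$ is determined by the $q$-colors assigned to these super-nodes; hence for any specific Karger output $\pi$ there are at most $q^T$ colorings $\sigma\in\mathcal{C}(\ell,1)$ compatible with $\pi$. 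Computing $\sum_\sigma \Prob[\text{good for }\sigma]$ in two ways (at least $\mathcal{C}(\ell,1)/\binom{n}{T}$ from the first inequality, and at most $q^T$ by writing it as an expectation over Karger outputs) yields $\mathcal{C}(\ell,1)\le\binom{n}{T}q^T$.

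Part~\ref{lem:numbercolors2} keeps the Karger-based $\binom{n}{T}$ factor but sharpens the per-partition bound on compatible colorings. Given a Karger output $\pi$ with $T$ super-nodes, I encode each compatible $\sigma\in\mathcal{C}(\ell,2)$ by a labelling of its super-nodes: each super-node is either marked \emph{standalone} (permitted only when its vertex-set has size $\ge 2$) or labelled with one of its at most $d$ neighbors in the contracted graph $G/\pi$, indicating that the two must receive the same color in $\sigma$. There are at most $(d+1)^T \le (2d)^T$ such labellings. The pointers partition the super-nodes into equivalence classes, and by construction the original vertex-set of each class has size $\ge 2$; hence by~\eqref{eqn:mincut2} each class has edge-boundary $\ge 2d-2$ in $G$, and since these boundaries sum to $2\ell$ there are at most $2\ell/(2d-2)$ classes. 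Coloring the classes contributes at most $q^{2\ell/(2d-2)}$. Every $\sigma\in\mathcal{C}(\ell,2)$ admits at least one such labelling (each singleton super-node inside a monochromatic component of size $\ge 2$ has a cut edge of $\pi$ to some same-colored neighbor), so the double-counting of part~\ref{lem:numbercolors1} gives $\mathcal{C}(\ell,2)\le\binom{n}{T}(2d)^T q^{2\ell/(2d-2)}$.

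The main step to verify carefully is the existence of a valid pointer for each singleton super-node $v$ of $\pi$ lying inside a monochromatic component $C$ of $\sigma$ with $|C|\ge 2$: because $C$ is connected in $G$ and $\{v\}$ is a proper subset, there must be an edge of $G$ from $v$ to some $w\in C\setminus\{v\}$, and this edge is a cut edge of $\pi$ whose other endpoint is in the same monochromatic component, giving the required same-color pointer. Non-integrality of $2\ell/d$ and $2\ell/(2d-2)$ is absorbed by the standard ceiling convention.
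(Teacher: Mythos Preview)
Your argument for part~\ref{lem:numbercolors1} is correct and essentially identical to the paper's.

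For part~\ref{lem:numbercolors2} there is a genuine gap, in the claim that the class boundaries ``sum to $2\ell$.'' For that equality to hold, your equivalence classes would have to coincide with the monochromatic components of $\sigma$, so that every edge crossing between classes is non-monochromatic. But to keep at most $d+1$ labelling options per super-node (and hence the $(d+1)^T$ bound), you are forced to mark every non-singleton super-node \emph{standalone}: a non-singleton super-node $S$ has $|\nabla(S)|\ge 2d-2$ boundary edges by~\eqref{eqn:mincut2}, and these may go to more than $d$ distinct super-nodes in $G/\pi$, so allowing it a pointer would violate the ``at most $d$ neighbours'' premise. Now take any monochromatic component of $\sigma$ that, after the Karger run, contains two adjacent non-singleton super-nodes $A$ and $B$ (this certainly occurs for some good runs). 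Both are standalone and hence placed in different classes, yet the monochromatic edge between them contributes to the boundary of each class. Thus the class boundaries sum to strictly more than $2\ell$, and the conclusion ``at most $2\ell/(2d-2)$ classes'' does not follow. There is no evident repair within a static pointer encoding that keeps only $O(d)$ options per super-node.

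The paper avoids this by continuing the random contraction rather than switching to an encoding. After the first Karger phase (giving the same $\binom{n}{2\ell/d}$ factor), it (II) repeatedly contracts any remaining singleton super-node with a uniformly random neighbour --- at most $2\ell/d$ steps, each preserving the non-monochromatic edges with probability $\ge 1/d$ because every singleton lies in a monochromatic component of size $\ge 2$ and so has at least one monochromatic incident edge --- leaving a multigraph in which every super-node has $\ge 2$ original vertices and hence min-cut $\ge 2d-2$ by~\eqref{eqn:mincut2}; then (III) it runs a second Karger phase down to $2\ell/(2d-2)$ super-nodes, succeeding with probability at least $2^{-2\ell/d}$. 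The factor $(2d)^{2\ell/d}$ is precisely the product of the inverses of the step-(II) and step-(III) success probabilities, and the final random $q$-colouring of the $2\ell/(2d-2)$ super-nodes yields the $q^{2\ell/(2d-2)}$ factor.
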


\begin{proof}
For part~\ref{lem:numbercolors1}: Fix any $q$-coloring $\sigma$ with exactly $\ell$ non-monochromatic edges and run the following algorithm on $G$. 

\begin{itemize}
\item[I.] While there are more than $\frac{2\ell}{d}$ vertices, choose a uniformly random edge and contract it. Delete self-loops, if any, after each contraction. 
\item[II.] Color every vertex uniformly and independently from $[q]$. Output the final graph $G'$ and coloring $\sigma'$.
\end{itemize}

Observe that $\sigma'$ naturally corresponds to a $q$-coloring $\sigma$ in the original graph defined by $\sigma(v) = \sigma'(v')$ where $v'$ is the vertex in $G'$ into which $v$ was contracted.

Recall that $G$ has a min-cut of size $d$ by~\eqref{eqn:mincut1}, and it does not decrease with the contraction operation during step I. So at any point in this step, if the current graph has $m$ vertices, then it has at least $\frac{md}{2}$ edges.

Moreover, each contraction reduces the number of vertices by $1$, and all deleted self-loops must be monochromatic. So the probability that the $\ell$ non-monochromatic edges remain uncontracted in $G'$ is at least
\[
\left(1-\frac{2\ell}{nd}\right)\left(1 - \frac{2\ell}{(n-1)d}\right)\cdots \frac{d}{2\ell} = \binom{n}{2\ell/d}^{-1}.
\]

During step II, the probability that every contracted vertex gets the correct color is $q^{-\frac{2\ell}{d}}$.
Therefore, the probability that $\sigma$ was recovered by this procedure is at least $\binom{n}{2\ell/d}^{-1}q^{-\frac{2\ell}{d}}$. Since this holds for any $\sigma$, there are at most $\binom{n}{2\ell/d}q^{\frac{2\ell}{d}}$ many $q$-colorings with $\ell$ non-monochromatic edges.

For part~\ref{lem:numbercolors2}: Fix any $q$-coloring $\sigma$ with at least $\ell$ non-monochromatic edges where each monochromatic component has at least $2$ vertices, and consider the following algorithm. 
\begin{itemize}
\item[I.] While there are more than $\frac{2\ell}{d}$ vertices, choose a uniformly random edge and contract it. Delete self-loops, if any, after each contraction. 
\item[II.] While a vertex in this contracted graph has degree $d$, contract it with one of its neighbors uniformly at random, deleting self-loops at each stage.
\item[III.] While there are more than $\frac{2\ell}{2d - 2}$ vertices, choose a uniformly random edge and contract it. Delete self-loops, if any, after each contraction. 
\item[IV.] Color every vertex uniformly and independently at random from $[q]$.
\end{itemize}

Again, the final coloring naturally corresponds to a $q$-coloring in $G$. Step I is analyzed in an identical manner as $(1)$. The probability that the $\ell$ non-monochromatic edges remain uncontracted at the end of this step is at least $\binom{n}{2\ell/d}^{-1}$.

Since every monochromatic component in $\sigma$ has at least two vertices, every uncontracted vertex at the end of step I is incident to at least one monochromatic edge. So during step II, the probability that the $\ell$ non-monochromatic edges remain uncontracted is at most $d^{-\frac{2\ell}{d}}$.

After the end of step II,~\eqref{eqn:mincut2} guarantees that the min-cut is at least $2d-2$, and this does not decrease with the contraction operation in step III\@. So, using a calculation similar to that in the analysis of step I, the probability that the $\ell$ non-monochromatic edges remain uncontracted at the end of this step is at least $2^{-\frac{2\ell}{d}}$.

During step IV, the probability that every contracted vertex gets the correct color is $q^{-\frac{2\ell}{2d-2}}$.
Thus, $\sigma$ is recovered with probability at least $\binom{n}{2\ell/d}^{-1}q^{-\frac{2\ell}{(2d-2)}}(2d)^{-\frac{2\ell}{d}}$. Since this holds for any $\sigma$, we have that there are at most $\binom{n}{2\ell/d}q^{\frac{2\ell}{(2d-2)}}(2d)^{\frac{2\ell}{d}}$ many $q$-colorings with $\ell$ non-monochromatic edges and each monochromatic component having size at least $2$.
\end{proof}

One can check by hand that our bound is not tight in several cases, such as when $s = 1$ and $\ell = d$. Indeed, the original algorithm by Karger for min-cuts gives a reasonable tight lower bound on the number of near-minimum cuts when $G$ is a cycle---a case that is not included in the class of graphs under consideration for Theorem~\ref{thm:mainpotts}. However, recent work obtaining a more fine-grained understanding of the Karger process may be adaptable to our setting of $q$-colorings; we refer the interested reader to \cite{GHLL21}.

We are now ready to prove Lemma~\ref{lem:groundstate}. Recall that $\mathcal{L}_k$ is the number of $q$-colorings from $S_0$ which induce exactly $k$ non-monochromatic edges, and we want to show that
\[\mathcal{L}_k \leq n^4 \cdot q^{\frac{(2+\delta)k}{d}-\Omega\left(\frac{\delta k}{d}\right)}. \]

\begin{proof}[Proof of Lemma~\ref{lem:groundstate}]
\textbf{Case 1: $k \leq nd^{1 - \frac{\delta}{4}}$.} For a subset $A \subset V$, let $E(A)$ denote the set of edges induced by $A$. Consider a partition of the vertices of $G$ into $L$ monochromatic components. Let $T$ be the set of components containing exactly one vertex. Let $N$ be the set of non-monochromatic edges. We then have the partition
\[
N = E(T) \sqcup \nabla(T) \sqcup \left(N \cap E\left(\overline{T}\right)\right).
\]
Let $e_T := |E(T)|$, $b_T := |\nabla(T)|$, and $\ell := \left|N\cap E (\overline{T})\right| = k - e_T - b_T$. We have that $|T| \leq \frac{2k}{d} \leq nd^{-\delta} < n/3$ and each monochromatic component has size at most $n/2$. So, one can choose some $S \subseteq V$ such that $n/3 \leq |S|\leq 2n/3$ by greedily including all vertices from the smallest remaining monochromatic component. In particular, $S \supseteq T$. The expansion condition~\eqref{eqn:expansion} then implies that $|\nabla(S)| \geq n/3$. However, $\nabla(S) \subseteq \nabla(T) \cup (N \cap E(\overline{T}))$, and so $|\nabla(S)| \leq b_T + (k - e_T - b_T) = k - e_T = \ell + b_T$. As a result, we have
\begin{equation}\label{eqn:largeedges}
b_T + \ell = \Omega(n).
\end{equation}

A $q$-coloring counted by $\mathcal{L}_k$ can be chosen by (I) first choosing the cut $\left(T, \overline{T}\right)$, (II) giving each element on the $T$-side of the cut a color, and (III) choosing the monochromatic components in $V \setminus T$ so that each component has at least two vertices. 

Theorem~\ref{thm:karger} gives us that (I) can be done in at most $\binom{n}{2b_T/d} \cdot 2^{2b_T/d}$ ways. We also have that (II) can be done in at most $q^{|T|}$ ways. Part~\ref{lem:numbercolors2} of Lemma~\ref{lem:numbercolors} gives us that (III) can be done in at most $\binom{n}{2\ell/d}\cdot q^{\frac{2\ell}{2d-2}}\cdot (2d)^{\frac{2\ell}{d}}$ ways. 
So, the number of $q$-colorings of $V$ with the above parameters is at most
\begin{align*}
Q & :=\binom{n}{2b_T/d} \cdot 2^{2b_T/d} \cdot q^{|T|} \cdot \binom{n}{2\ell/d}\cdot q^{\frac{2\ell}{2d-2}} \cdot (2d)^{\frac{2\ell}{d}} \\
& = \binom{n}{2b_T/d}\cdot \binom{n}{2\ell/d}\cdot q^{\frac{\ell + 2e_T + b_T}{d}} \cdot q^{\frac{\ell}{d^2}} \cdot (2d)^{\frac{2\ell}{d}}, \\
\end{align*}

where we have used $d|T| = 2e_T + b_T$. So we have 
\begin{align*}
Q \cdot q^{-\frac{2k}{d}} & = \binom{n}{2\ell/d}\binom{n}{2b_T/d}\cdot q^{\frac{-b_T - \ell}{d}} \cdot q^{\frac{\ell}{d^2}}\cdot (2d)^{\frac{2\ell}{d}} \\
& \leq \binom{n}{2(\ell + b_T)/d} \cdot 2^{\frac{4(b_T + \ell)}{d}} q^{\frac{-b_T - \ell}{d}} \cdot q^{\frac{\ell}{d^2}} \cdot (2d)^{\frac{2\ell}{d}} \\
& \leq d^{O\left(\frac{\ell+ b_T}{d}\right)}q^{-\frac{\ell + b_T}{d}} \cdot 2^{\frac{4(b_T + \ell)}{d}}q^{\frac{\ell}{d^2}} \\
& \leq q^{-\Omega\left(\frac{n}{d}\right)},
\end{align*}
where the first inequality used the fact that for $a+b \leq x$, we have $\binom{x}{a}\binom{x}{b} \leq \binom{x}{a+b}4^{a+b}$, and the the next two inequalities use~\eqref{eqn:largeedges}. So summing over all values for $\ell$ and $b_T$ gives us that for $k \leq nd^{1 - \frac{\delta}{4}}$,
\begin{align*}
\mathcal{L}_k \leq n^4 \cdot Q \leq n^4 \cdot q^{\frac{2k}{d} - \Omega\left(\frac{n}{d}\right)}.
\end{align*}

\textbf{Case 2: $k \geq nd^{1 - \frac{\delta}{4}}$.} This case is relatively straightforward. Using part~\ref{lem:numbercolors1} of Lemma~\ref{lem:numbercolors}, we have 
\[
\mathcal{L}_k \leq \binom{n}{2k/d}q^{\frac{2k}{d}} \leq (ed)^{\frac{2\delta k}{3d}} \cdot q^{\frac{2 k}{d}} = q^{-\frac{(2 + \delta)k}{d} - \Omega\left(\frac{\delta k}{d}\right)}.\qedhere
\]
\end{proof}

\section{Extremal results for partition functions}\label{sec:extremal}

An important step in upper bounding the weight $w_\gamma = e^{-\beta(\nabla_\gamma + e_\gamma)}Z_{\gamma}(q-1,\beta)$ is to upper bound the partition function $Z_{\gamma}(q-1,\beta)$. 
Bounds on partition functions in this setting are a well-studied topic in extremal combinatorics.

\subsection{General graphs and cliques}

\begin{theorem}[{Sah, Sawhney, Stoner, Zhao~\cite[Theorem 1.14 restated]{SSSZ20}}]\label{thm:extremal}
    Let $G$ be a graph, $q\ge 2$ be an integer and $\beta\ge 0$. Then
    \[ Z_G(q,\beta) \le \prod_{v\in V(G)} Z_{K_{d_v+1}}(q,\beta)^{\frac{1}{d_v+1}}, \]
    where $d_v$ is the degree of a vertex $v\in V(G)$.
\end{theorem}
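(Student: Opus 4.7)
The inequality asserts that, in a per-vertex sense, the clique $K_{d_v+1}$ is the worst local environment for the ferromagnetic Potts partition function. My approach would be the entropy method combined with a Shearer-type decomposition. Let $\sigma$ be sampled from the Gibbs measure $\mu_{G,q,\beta}$, so the Gibbs variational principle gives
\[ \log Z_G(q,\beta) = H(\sigma) + \beta\,\mathbb{E}_\mu[m(G,\sigma)]. \]
The strategy is to split both terms vertex-by-vertex, applying the same variational principle in reverse on each $K_{d_v+1}$.

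For the energy, write $m_v(\sigma)=|\{u\in N(v):\sigma(u)=\sigma(v)\}|$, so $2m(G,\sigma)=\sum_v m_v(\sigma)$. Viewing $\sigma$ restricted to $N[v]$ as a coloring of $K_{d_v+1}$, every edge from $v$ to $N(v)$ appears in $K_{d_v+1}$, while extra edges among $N(v)$ only add (nonnegative) monochromatic contributions, hence $m_v(\sigma)\le m(K_{d_v+1},\sigma_{N[v]})$ and
\[ m(G,\sigma) \le \tfrac12 \sum_v m\bigl(K_{d_v+1},\sigma_{N[v]}\bigr). \]
For the entropy, apply Shearer's inequality to the closed-neighborhood cover $\{N[v]:v\in V\}$ with weights $c_v\ge 0$ satisfying $\sum_{u\in N[v]} c_u \ge 1$, giving $H(\sigma)\le \sum_v c_v H(\sigma_{N[v]})$. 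Combining these with the Gibbs variational principle applied to each $K_{d_v+1}$ (which for any distribution $\nu$ on $[q]^{d_v+1}$ reads $H(\nu)+\beta\,\mathbb{E}_\nu[m(K_{d_v+1},\cdot)]\le \log Z_{K_{d_v+1}}(q,\beta)$, used on the marginal of $\sigma_{N[v]}$) yields
\[ \log Z_G(q,\beta) \;\le\; \sum_v c_v \log Z_{K_{d_v+1}}(q,\beta). \]
To match the claimed bound one needs $c_v=1/(d_v+1)$, and then summing gives exactly $\sum_v \log Z_{K_{d_v+1}}^{1/(d_v+1)}$.

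\textbf{Main obstacle.} In the $d$-regular case $c_v=1/(d+1)$ satisfies $\sum_{u\in N[v]}c_u=1$ with equality, so the argument closes immediately and the theorem is proved. For general graphs the weights $1/(d_v+1)$ fail Shearer's coverage condition: on a star $K_{1,n}$, a leaf $v$ sees $\sum_{u\in N[v]}1/(d_u+1)=\tfrac12+\tfrac{1}{n+1}<1$. This is exactly the technical step that SSSZ must overcome, and it is the heart of their proof. My plan would be to address it either (a) by sharpening the entropy bound using the chain rule with a carefully chosen vertex ordering (e.g.\ processing higher-degree vertices first, so that conditioning on previously revealed colors absorbs the deficit in Shearer's coverage), or (b) by a reduction to the regular case via an appropriate covering graph of $G$ whose Potts partition function scales multiplicatively in a way compatible with the product form on the right-hand side. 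Given the entropy/energy decomposition above is purely local, I would expect either route to succeed; the clever use of Hölder's inequality in a tensor-power framework (as in Kahn--Galvin-type arguments for PSD weight matrices) is another natural alternative for closing the gap in the non-regular case.
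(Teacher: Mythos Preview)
The paper does not prove this theorem; it is quoted from~\cite{SSSZ20} as a black box and only \emph{used} (in Corollary~\ref{cor:extremal} and Lemma~\ref{lem:extremal}). So there is no proof in the paper to compare against.

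Your proposed argument has a genuine gap, and it already appears in the $d$-regular case where you claim the proof closes. The problem is a mismatch of coefficients between the entropy and energy pieces. Shearer with the closed-neighborhood cover and weights $c_v=1/(d+1)$ gives
\[
H(\sigma)\ \le\ \sum_v \tfrac{1}{d+1}\,H(\sigma_{N[v]}),
\]
while your energy bound reads
\[
m(G,\sigma)\ \le\ \tfrac12\sum_v m\bigl(K_{d+1},\sigma_{N[v]}\bigr).
\]
To apply the Gibbs variational principle on each $K_{d+1}$ at inverse temperature~$\beta$ and get the exponent $1/(d+1)$, you need the \emph{same} coefficient $1/(d+1)$ in front of both the entropy and the $\beta$-energy term; with $1/2$ on the energy you are either comparing at the wrong temperature or getting the wrong exponent. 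The pointwise inequality you would actually need,
\[
m(G,\sigma)\ \le\ \sum_v \tfrac{1}{d+1}\,m\bigl(K_{d+1},\sigma_{N[v]}\bigr),
\]
is false: on the $6$-cycle with $\sigma=(1,1,2,2,2,2)$ the left side equals $4$ while the right side equals $10/3$. So ``in the $d$-regular case \dots\ the argument closes immediately'' is incorrect.

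What goes wrong is that inflating $N[v]$ to a clique in the energy term is lossy in a way that cannot be compensated by Shearer. The SSSZ proof does not use a Shearer cover by closed neighborhoods. It is closer to your option~(a)/(b): one reveals vertices one at a time and, for each vertex $v$, bounds the contribution $H(\sigma_v\mid \sigma_{\text{earlier}})+\beta\,\mathbb{E}[\#\{\text{monochromatic edges from }v\text{ to earlier neighbors}\}]$ by the corresponding single-vertex free energy in a clique, and the step that makes this work is a H\"older/Jensen argument exploiting that the ferromagnetic Potts edge-interaction matrix is positive semidefinite. (The log-convexity of $Z_{K_{d+1}}(q,\beta)^{1/(d+1)}$ in $d$, which the paper also quotes from~\cite{SSSZ20} in the proof of Corollary~\ref{cor:extremal}, belongs to the same circle of ideas.) If you try to repair the entropy-method route you will be forced to keep the energy term edge-by-edge rather than pass to $K_{d_v+1}$, at which point the comparison object is no longer a clique and the argument essentially becomes the SSSZ one.
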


We briefly remark that their result was much more general; \cite{SSSZ20} shows that in fact every ferromagnetic model (using a definition introduced in~\cite{GSV16} as a generalization of the ferromagnetic Potts model) is ``clique-maximizing'' which translates to the above statement on the level of partition functions.

\begin{corollary}\label{cor:extremal}
    Let $G$ be a graph of maximum degree $\Delta$ with $n$ vertices and $m$ edges. Then for any integer $q\ge 2$ and $\beta\ge 0$,
    \[ Z_G(q,\beta) \le q^{n-\frac{2m}{\Delta}} Z_{K_{\Delta+1}}(q,\beta)^{\frac{2m}{\Delta(\Delta+1)}} . \]
\end{corollary}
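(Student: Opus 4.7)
My plan is to deduce Corollary~\ref{cor:extremal} from Theorem~\ref{thm:extremal} by uniformizing the per-vertex contribution in the product.

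First, apply Theorem~\ref{thm:extremal} directly to obtain
\[
Z_G(q,\beta) \le \prod_{v \in V(G)} Z_{K_{d_v+1}}(q,\beta)^{1/(d_v+1)}.
\]
The target upper bound in the corollary has a different, $\Delta$-centric form, so the goal is to show that each factor $Z_{K_{d_v+1}}^{1/(d_v+1)}$ is dominated by an interpolation between $q$ (the $d_v=0$ case, where $Z_{K_1}=q$) and $Z_{K_{\Delta+1}}^{1/(\Delta+1)}$ (the $d_v=\Delta$ case). Concretely, the proof reduces to establishing, for every integer $d$ with $0 \le d \le \Delta$, the pointwise inequality
\begin{equation*}
Z_{K_{d+1}}(q,\beta)^{1/(d+1)} \le q^{\,1 - d/\Delta}\, Z_{K_{\Delta+1}}(q,\beta)^{d/(\Delta(\Delta+1))}. \qquad (\star)
\end{equation*}
Multiplying $(\star)$ over all $v$ with $d=d_v$ and using the identities $\sum_v(1 - d_v/\Delta) = n - 2m/\Delta$ and $\sum_v d_v/(\Delta(\Delta+1)) = 2m/(\Delta(\Delta+1))$ then yields the corollary.

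The main obstacle is $(\star)$. It is trivially an equality at the endpoints $d=0$ and $d=\Delta$, so the content lies at intermediate integers. Writing $F(d) = (d+1)^{-1}\ln Z_{K_{d+1}}(q,\beta)$, inequality $(\star)$ is precisely the chord bound $F(d) \le (1 - d/\Delta)F(0) + (d/\Delta)F(\Delta)$, which would follow from (integer) convexity of $F$. Equivalently, raising $(\star)$ to the power $\Delta(\Delta+1)(d+1)$ gives
\[
Z_{K_{d+1}}^{\Delta(\Delta+1)} \le q^{(\Delta - d)(\Delta+1)(d+1)}\, Z_{K_{\Delta+1}}^{d(d+1)},
\]
whose two sides are Potts partition functions of graphs with exactly the same vertex and edge counts; the right-hand graph packs its edges into the largest possible cliques, so this rephrasing expresses the extremal principle that packing edges into cliques maximizes the partition function.

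Theorem~\ref{thm:extremal} by itself does not imply $(\star)$, since it is tight on each of the regular graphs involved. My approach to $(\star)$ would be either to verify convexity of $F$ directly using the Fortuin--Kasteleyn representation $Z_{K_n}(q,\beta) = \sum_{A \subseteq E(K_n)} q^{c(A)}(e^\beta-1)^{|A|}$ (comparing cluster structures on $K_d$, $K_{d+1}$, and $K_{d+2}$), or to invoke a ``packing into cliques'' extremal result in the spirit of the Sah--Sawhney--Stoner--Zhao framework.
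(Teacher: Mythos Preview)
Your reduction to $(\star)$ is correct and is essentially the paper's approach: the paper also deduces the corollary from Theorem~\ref{thm:extremal} together with the log-convexity of the sequence $d\mapsto Z_{K_{d+1}}(q,\beta)^{1/(d+1)}$, which is exactly the convexity of your $F$. You need not prove $(\star)$ from scratch---this log-convexity is \cite[Lemma~5.3]{SSSZ20}; the paper packages its use as iterated degree swaps $(a,b)\mapsto(a-1,b+1)$ applied to $\Delta$ disjoint copies of $G$, but that is just the discrete midpoint form of the same convexity and yields your chord bound directly.
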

\begin{proof}
    This follows from Theorem~\ref{thm:extremal} and the fact that $Z_{K_{d+1}}(q,\beta)^{\frac{1}{d+1}}$ forms a log-convex sequence for $d\ge 0$, which is proved in~\cite[Lemma~5.3]{SSSZ20}. 
    In particular, consider the graph with partition function $Z_G(q,\beta)^\Delta$ formed from the disjoint union of $\Delta$ copies of $G$. 
    Applying Theorem~\ref{thm:extremal}, we obtain an upper bound on $Z_G(q,\beta)^\Delta$ in terms of a degree sequence. 
    We can further increase the upper bound because the aforementioned log-convexity implies that when $a\le b$, replacing two degrees $a$ and $b$ by degrees $a-1$ and $b+1$ can only increase the bound. 
    Then over graphs of maximum degree $\Delta$ with $\Delta n$ vertices and $\Delta m$ edges, the maximum upper bound Theorem~\ref{thm:extremal} can give is from a degree sequence where $2m$ vertices have degree $\Delta$ and the rest have degree zero. 
    The result follows upon taking the power $1/\Delta$ of this inequality.
\end{proof}

\subsection{Triangle-free graphs and bicliques}

The upper bound in Theorem~\ref{thm:extremal} can be improved in the case that $G$ is triangle-free. 

\begin{theorem}[{Sah, Sawhney, Stoner, Zhao~\cite[Theorem 1.9 and the subsequent remark]{SSSZ20}}]\label{thm:TFextremal}
    Let $G$ be a triangle-free graph with no isolated vertices, $q\ge 2$ be an integer and $\beta\ge 0$. Then
    \[ Z_G(q,\beta) \le \prod_{uv\in E(G)} Z_{K_{d_u,d_v}}(q,\beta)^{\frac{1}{d_u d_v}}, \]
    where $d_u$ is the degree of a vertex $u\in V(G)$.
\end{theorem}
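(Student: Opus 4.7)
\textbf{Plan for Theorem~\ref{thm:TFextremal}.} Since the statement is attributed to Sah–Sawhney–Stoner–Zhao, I would follow the information-theoretic template from their paper. The goal is to bound $\ln Z_G(q,\beta)$ from above by a weighted sum $\sum_{uv\in E(G)}\frac{1}{d_u d_v}\ln Z_{K_{d_u,d_v}}(q,\beta)$, and the natural tool is a weighted Shearer-type entropy inequality tailored to the triangle-free structure, followed by a reverse application of the Gibbs variational principle at each edge.

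\textbf{Step 1: information-theoretic setup.} Let $\sigma$ be drawn from the Gibbs measure $\mu_G$ on $[q]^{V(G)}$ whose density is proportional to $e^{\beta m(G,\sigma)}$. By the Gibbs variational principle we have the identity $\ln Z_G(q,\beta) = H(\sigma) + \beta\,\E\bigl[m(G,\sigma)\bigr]$, where $H$ denotes Shannon entropy. Moreover, for any other measure $\nu$ on $[q]^{S}$ with $S\subseteq V(G)$ and any subgraph $H$ on $S$, $H(\nu) + \beta\E_\nu[m(H,\tau)] \le \ln Z_H(q,\beta)$.

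\textbf{Step 2: weighted edge-Shearer decomposition.} For each edge $uv\in E(G)$ consider the ``edge-star'' $S_{uv}=N(u)\cup N(v)$ with its induced subgraph (which contains all edges incident to either $u$ or $v$). The triangle-free hypothesis is crucial here: it guarantees $N(u)\cap N(v)=\emptyset$, so $G[S_{uv}]$ is precisely the biclique $K_{d_u,d_v}$ with $u,v$ playing the roles of the two ``sides' centers.'' I would show the combined vertex-plus-edge covering identity: for weights $\alpha_{uv}=1/(d_u d_v)$, every vertex $w$ is covered with total weight $1$ and every edge $e$ is covered with total weight $1$ by the family $\{S_{uv}\}_{uv\in E(G)}$. (The vertex count uses $\sum_{u\in N(w)}\sum_{v\in N(u)}\frac{1}{d_u d_v}$; the edge count is an analogous local sum, and triangle-freeness rules out all double-covering that would otherwise ruin the identity.) Shearer's inequality applied with these weights then yields
\[
H(\sigma)+\beta\,\E[m(G,\sigma)] \;\le\; \sum_{uv\in E(G)}\frac{1}{d_u d_v}\Bigl( H(\sigma|_{S_{uv}}) + \beta\,\E\bigl[m(G[S_{uv}],\sigma|_{S_{uv}})\bigr]\Bigr).
\]

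\textbf{Step 3: local comparison with $K_{d_u,d_v}$.} Each bracketed term is the ``entropy + energy'' of the marginal of $\mu_G$ on $S_{uv}$ for the graph $K_{d_u,d_v}$. By the Gibbs variational principle applied in reverse on this local biclique, the bracket is at most $\ln Z_{K_{d_u,d_v}}(q,\beta)$. Summing and recalling $\ln Z_G = H(\sigma)+\beta\,\E[m(G,\sigma)]$ then delivers the claimed inequality.

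\textbf{Main obstacle.} The delicate part is the double covering identity: I need the weights $1/(d_u d_v)$ to simultaneously cover every vertex and every edge with total weight exactly one (so that both the entropy piece and the energy piece factorize). In the triangle-free setting this works because for each vertex $w$ the edges $uv\in E$ with $w\in N(u)$ are in bijection with pairs $(u,v)$ where $u\in N(w)$ and $v\in N(u)$ with $v\ne w$ and $v\notin N(w)$—the last condition is automatic precisely because no triangle contains both $w$ and the edge $uv$. Without triangle-freeness the same argument yields a weaker inequality involving the cliques $K_{d+1}$ as in Theorem~\ref{thm:extremal}; the triangle-free hypothesis is exactly what allows the finer biclique bound. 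I would therefore spend the bulk of the proof verifying this covering identity and checking that the isolated-vertex exclusion in the hypothesis is exactly what is needed to avoid degenerate terms in the product.
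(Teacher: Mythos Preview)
The paper does not prove Theorem~\ref{thm:TFextremal}; it is quoted from~\cite{SSSZ20} and used as a black box, so there is no in-paper argument to compare against. I therefore evaluate your plan on its own merits.

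Your overall framework (Gibbs variational identity plus a Shearer-type bound) is the right one, but the specific decomposition in Step~2 breaks down. The central claim, that $G[S_{uv}]$ with $S_{uv}=N(u)\cup N(v)$ ``is precisely the biclique $K_{d_u,d_v}$'', is false. Triangle-freeness only guarantees that $N(u)$ and $N(v)$ are disjoint independent sets; it says nothing about edges between $N(u)\setminus\{v\}$ and $N(v)\setminus\{u\}$, and certainly does not force every such pair to be adjacent. What is guaranteed inside $G[S_{uv}]$ is merely the double star of all edges incident to $u$ or $v$, which has $d_u+d_v-1$ edges, not $d_u d_v$. Consequently the energy you obtain from the decomposition is the $G$-energy on this double star, and the reverse variational step in Step~3 cannot then produce $\ln Z_{K_{d_u,d_v}}$. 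Your covering identities also fail: for the edge count with weights $1/(d_u d_v)$ and the double-star interpretation, already in the $d$-regular case a fixed edge is covered with total weight $(2d-1)/d^2\ne 1$; and the vertex identity $\sum_{x\in N(w)}\frac{1}{d_x}\sum_{y\in N(x)}\frac{1}{d_y}=1$ fails on the path $P_4$ (for an endpoint the sum is $3/4<1$, so Shearer does not even apply).

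In the actual argument of~\cite{SSSZ20}, $K_{d_u,d_v}$ never appears as a subgraph of $G$. One first reduces to bipartite $G$ (via the bipartite double cover; this is where triangle-freeness and the no-isolated-vertex hypothesis are genuinely used), writes $Z_G=\sum_{\sigma_A}\prod_{b\in B} f_b(\sigma_{N(b)})$ with $f_b(\tau)=\sum_{c\in[q]}\prod_{a\in N(b)}e^{\beta[\tau_a=c]}$, and applies a generalized H\"older/Shearer inequality to this product. The biclique partition function enters because $\sum_{\tau\in[q]^a} f(\tau)^b = Z_{K_{a,b}}(q,\beta)$: it is an $L^b$-norm of a neighborhood potential, not the partition function of a local piece of $G$. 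Getting the irregular exponents $1/(d_u d_v)$ right requires the careful fractional-cover/H\"older bookkeeping carried out in~\cite{SSSZ20}, and your Step~2 decomposition does not substitute for it.
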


\begin{corollary}\label{cor:TFextremal}
    Let $G$ be a triangle-free graph of maximum degree $\Delta$ with $n$ vertices and $m$ edges. Then for any integer $q\ge 2$ and $\beta\ge 0$,
    \[ Z_G(q,\beta) \le q^{n-\frac{2m}{\Delta}} Z_{K_{\Delta,\Delta}}(q,\beta)^{\frac{m}{\Delta^2}} . \]
\end{corollary}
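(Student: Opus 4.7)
The plan is to mirror the strategy used for Corollary~\ref{cor:extremal}, but rather than going through $\Delta$ disjoint copies of $G$ together with a log-convexity/smoothing argument on the degree sequence, I would reduce directly to the case where every vertex of an augmented graph has ``effective degree'' $\Delta$ via a pendant attachment. This works cleanly because Theorem~\ref{thm:TFextremal} is formulated per edge, and attaching degree-$1$ pendants to vertices of a triangle-free graph cannot create any triangle.

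First I would dispose of the trivial case $\Delta=0$ (in which $m=0$ and the bound is just $Z_G=q^n$). For $\Delta\ge 1$, construct an augmented graph $G^+$ by attaching, to each $v\in V(G)$, a set of $\Delta-d_v$ brand-new leaf vertices. By construction, $G^+$ is triangle-free (a hypothetical triangle would either lie inside $G$, or would use a vertex of degree $1$, both impossible), every original vertex has degree exactly $\Delta$ in $G^+$, every new leaf has degree $1$, there are no isolated vertices, and the maximum degree is still $\Delta$. The edges of $G^+$ split into the $m$ original edges, all with both endpoints of degree $\Delta$, and $\sum_v(\Delta-d_v)=n\Delta-2m$ pendant edges, each with endpoint degrees $\Delta$ and $1$.

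Now I would apply Theorem~\ref{thm:TFextremal} to $G^+$. Each original edge contributes $Z_{K_{\Delta,\Delta}}(q,\beta)^{1/\Delta^2}$ and each pendant edge contributes $Z_{K_{\Delta,1}}(q,\beta)^{1/\Delta}$, so
\[
Z_{G^+}(q,\beta)\ \le\ Z_{K_{\Delta,\Delta}}(q,\beta)^{m/\Delta^2}\cdot Z_{K_{\Delta,1}}(q,\beta)^{(n\Delta-2m)/\Delta}.
\]
On the other hand, marginalizing out the pendant colors one at a time shows $Z_{G^+}(q,\beta)=Z_G(q,\beta)\cdot(e^\beta+q-1)^{n\Delta-2m}$, since a pendant attached to a vertex colored $c$ contributes $\sum_{c'}e^{\beta\mathbf{1}[c'=c]}=e^\beta+q-1$. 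A direct calculation on the star gives $Z_{K_{\Delta,1}}(q,\beta)=q(e^\beta+q-1)^\Delta$, hence $Z_{K_{\Delta,1}}^{(n\Delta-2m)/\Delta}=q^{n-2m/\Delta}(e^\beta+q-1)^{n\Delta-2m}$. Substituting and canceling the common factor $(e^\beta+q-1)^{n\Delta-2m}$ from both sides yields precisely $Z_G(q,\beta)\le q^{n-2m/\Delta}Z_{K_{\Delta,\Delta}}(q,\beta)^{m/\Delta^2}$.

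There is no serious obstacle here; the one pleasant and essential point to verify is that the $(e^\beta+q-1)^{n\Delta-2m}$ arising on the left from the pendant marginalization matches \emph{exactly} the $(e^\beta+q-1)^{n\Delta-2m}$ arising on the right from the explicit evaluation of $Z_{K_{\Delta,1}}^{(n\Delta-2m)/\Delta}$. The only hypothesis of Theorem~\ref{thm:TFextremal} worth double-checking is the absence of isolated vertices, which holds in $G^+$ because any vertex $v$ with $d_v=0$ in $G$ picks up $\Delta\ge 1$ pendants in $G^+$, and each new leaf is automatically non-isolated.
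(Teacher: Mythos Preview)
Your proof is correct and takes a genuinely different route from the paper. The paper applies Theorem~\ref{thm:TFextremal} to $G$ itself (after stripping isolated vertices) and then upgrades each factor $Z_{K_{d_u,d_v}}(q,\beta)^{1/(d_u d_v)}$ to $Z_{K_{\Delta,\Delta}}(q,\beta)^{1/\Delta^2}$ via the monotonicity inequality~\eqref{eq:monotonicity}, which in turn requires a H\"older/Jensen argument. Your pendant-augmentation trick sidesteps this entirely: by padding every original vertex up to degree $\Delta$ with leaves, you force the product in Theorem~\ref{thm:TFextremal} to involve only $K_{\Delta,\Delta}$ and $K_{\Delta,1}$ factors, and the explicit star evaluation $Z_{K_{\Delta,1}}=q(e^\beta+q-1)^\Delta$ makes the latter cancel exactly against the pendant contribution on the left. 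This is more elementary, since it uses nothing beyond Theorem~\ref{thm:TFextremal} and a direct computation, whereas the paper's approach isolates the monotonicity~\eqref{eq:monotonicity} as a standalone fact of independent interest. Both arguments are short; yours is self-contained, while the paper's exposes a reusable tool.
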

\begin{proof}
    Let $G$ have $t$ isolated vertices. Then by Theorem~\ref{thm:TFextremal} we have 
    \[ Z_G(q,\beta) \le q^t \prod_{uv\in E(G)} Z_{K_{d_u,d_v}}(q,\beta)^{\frac{1}{d_u d_v}}. \]
    The desired result is now a consequence of the fact that for $1\le a\le c$ and $1\le b\le d$ we have 
    \begin{equation}\label{eq:monotonicity} 
        q^{-\left(\frac{1}{a}+\frac{1}{b}\right)} Z_{K_{a, b}}(q,\beta)^{\frac{1}{a b}} \le q^{-\left(\frac{1}{c}+\frac{1}{d}\right)} Z_{K_{c,d}}(q,\beta)^{\frac{1}{c b}}.
    \end{equation}
    To see this, apply the inequality to each term of the product and collect the factors of $q$ to obtain
    \[ Z_G(q,\beta) \le q^{t + \sum_{uv\in E(G)}\left(\frac{1}{d_u}+\frac{1}{d_v}\right) - \frac{2m}{\Delta}}Z_{K_{\Delta,\Delta}}(q,\beta)^{\frac{m}{\Delta^2}}.  \]
    Some simple counting gives
    \[ \sum_{uv\in E(G)}\left(\frac{1}{d_u}+\frac{1}{d_v}\right) = n - t \]
    since each vertex $u$ appears as the endpoint of precisely $d_u$ edges, and the result follows.

    It remains to prove inequality~\eqref{eq:monotonicity}, which follows from H\"older's inequality and Jensen's inequality as noted\footnote{See Definition 1.11 and the subsequent discussion of monotonicity in~\cite{SSSZ20}. Note that for this monotonicity to hold one must work with probability measures on the sets of spins, which is not necessarily the case for other ``scale-free'' results proved in~\cite{SSSZ20} and related works.} in~\cite{SSSZ20}. 
    For concreteness, we observe that inequality~\eqref{eq:monotonicity} is a direct consequence of a generalized H\"older inequality stated as Theorem~3.1 in~\cite{LZ15}. 
    We can write $\Omega=[q]$, and let $\mu$ be the uniform probability measure on $\Omega$ and $\mu^a$ be the uniform probability measure on $\Omega^a$ such that for the non-negative functions $f:\Omega^2\to\mathbb{R}$ and $g:\Omega^a\to\mathbb{R}$ given by
    \begin{align*}
        f(x,y) &= \begin{cases}e^\beta & x = y \\ 1 & \text{otherwise},\end{cases} \\
        g(x_1,\dotsc,x_a) &= \int_{\Omega} \prod_{i=1}^a f(x_i,y) \,\mathrm d\mu.
    \end{align*}
    Then to prove~\eqref{eq:monotonicity} in this notation, we have
    \[ \mathrm{LHS} := q^{-\left(\frac{1}{a}+\frac{1}{b}\right)} Z_{K_{a, b}}(q,\beta)^{\frac{1}{a b}} = \left(\int_{[q]^a}|g|^b\,\mathrm d\mu^a\right)^{\frac{1}{ab}}. \]
    When $b\le d$ we can apply the generalized H\"older inequality (or in this somewhat special case iterated applications of the usual H\"older inequality followed by Jensen's inequality) to the integral over $[q]^a$ above to obtain the upper bound 
    \[ \mathrm{LHS} \le \left(\int_{\Omega^a} g^d \,\mathrm d\mu^a \right)^{\frac{1}{ad}} = q^{-\left(\frac{1}{a}+\frac{1}{d}\right)} Z_{K_{a, d}}(q,\beta)^{\frac{1}{a d}}. \]
    A symmetric application of this argument for any integer $c$ such that $a\le c$ gives inequality~\eqref{eq:monotonicity}.
\end{proof}

\begin{lemma}\label{lem:BGJ}
    Let $q \geq d^{4}$ and for some fixed $\epsilon > 0$ let $\beta \ge (1+\epsilon)\beta_o$. Then, 
    \begin{enumerate}
    \item $Z_{K_{d+1}}(q,\beta) \le (1+q^{-\Omega(\epsilon)})q e^{\beta\binom{d+1}{2}}$, and
    \item $Z_{K_{d,d}}(q,\beta) \le (1+ q^{-\Omega(\epsilon)})qe^{\beta d^2}$.
    \end{enumerate}
\end{lemma}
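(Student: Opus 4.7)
The plan is to prove each bound in Lemma~\ref{lem:BGJ} by explicitly separating the monochromatic contribution to the partition function and bounding the remaining non-monochromatic terms using the hypotheses $\beta \ge (1+\epsilon)\beta_o \sim (1+\epsilon) \cdot 2\ln q / d$ and $q \ge d^{4}$; concretely, for $d$ and $q$ large enough, $e^{-\beta d} \le q^{-2(1+\epsilon)}$ up to negligible corrections.

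For part~(1), I would fix a color $c\in[q]$, let $T\subseteq V(K_{d+1})$ be the vertices not colored $c$, and sum over $c$, $T$, and colorings $\sigma_T:T\to[q]\setminus\{c\}$. Because each coloring $\sigma$ appears exactly $q$ times in this sum (once for each $c$, with $T=V\setminus\sigma^{-1}(c)$), dividing by $q$ yields the exact identity
\[ Z_{K_{d+1}}(q,\beta) = \sum_{t=0}^{d+1} \binom{d+1}{t}\, e^{\beta\binom{d+1-t}{2}}\, Z_{K_t}(q-1,\beta). \]
The main term $qe^{\beta\binom{d+1}{2}}$ arises from the $t=0$ summand (contributing $e^{\beta\binom{d+1}{2}}$) together with the monochromatic part of $Z_{K_{d+1}}(q-1,\beta)$ in the $t=d+1$ summand (contributing approximately $(q-1)e^{\beta\binom{d+1}{2}}$). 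For the correction terms $1\le t\le d$, I would use the trivial bound $Z_{K_t}(q-1,\beta)\le (q-1)^t e^{\beta\binom{t}{2}}$; simplifying via the algebraic identity $t(2d+1-t)/2-\binom{t}{2}=t(d+1-t)$, the ratio of the $t$-th correction to the main term becomes at most $\binom{d+1}{t}(q-1)^t e^{-\beta t(d+1-t)}/q$. For $t=1$ this evaluates to at most $(d+1)q^{-1-2\epsilon}\le q^{-2\epsilon}$ (using $q\ge d+1$); for larger $t$, the exponential factor $e^{-\beta t(d+1-t)}$ provides further geometric decay, so summing over $t\ge 1$ gives a total correction of $q^{-\Omega(\epsilon)}$.

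For part~(2), I would use an analogous bipartite decomposition on $K_{d,d}$: fix color $c$, and let $T_A\subseteq A$ and $T_B\subseteq B$ be the defect sets on the two sides of the bipartition. Summing over $c, T_A, T_B$ yields the analog identity expressing $Z_{K_{d,d}}(q,\beta)$ in terms of $Z_{K_{|T_A|,|T_B|}}(q-1,\beta)$ weighted by $e^{\beta(d-|T_A|)(d-|T_B|)}$, with main term $qe^{\beta d^2}$ arising from the monochromatic colorings. The dominant correction again comes from single-vertex defects on one side, contributing at most $2dq\cdot e^{-\beta d}\le 2d\,q^{-1-2\epsilon}\le q^{-\Omega(\epsilon)}$ in the ratio by the same calculation.

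The main obstacle is controlling contributions uniformly across all defect sizes $t$. For small $t$, the binomial factor $\binom{d+1}{t}\le d^t$ must be absorbed by the decay $q^{-(2+2\epsilon)t(d+1-t)/d}$, which $q\ge d^4$ just barely supplies: $dq^{-1-2\epsilon}\le d\cdot d^{-4(1+2\epsilon)}\ll q^{-\epsilon}$ gives the required contraction per unit of $t$. For the middle range (defects of size roughly $(d+1)/2$), the combinatorial factor $\binom{d+1}{\lfloor(d+1)/2\rfloor}\approx 2^{d+1}$ is enormous, but the quadratic-in-$d$ suppression $e^{-\beta(d+1)^2/4}\le q^{-(1+\epsilon)(d+1)/2}$ is overwhelmingly stronger for any $\epsilon>0$ and $d$ sufficiently large, so these terms are doubly exponentially small. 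Together these estimates yield the total correction of $q^{-\Omega(\epsilon)}$ required by both bounds.
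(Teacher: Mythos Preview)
Your identity
\[ Z_{K_{d+1}}(q,\beta) = \sum_{t=0}^{d+1} \binom{d+1}{t}\, e^{\beta\binom{d+1-t}{2}}\, Z_{K_t}(q-1,\beta) \]
is correct, but the subsequent analysis has a genuine gap. The claim that ``for larger $t$, the exponential factor $e^{-\beta t(d+1-t)}$ provides further geometric decay'' is false: the exponent $t(d+1-t)$ is a downward parabola, symmetric about $t=(d+1)/2$, so it returns to its $t=1$ value at $t=d$. Meanwhile the factor $(q-1)^t$ grows monotonically. Concretely, at $t=d$ your ratio bound is
\[ \frac{(d+1)(q-1)^d e^{-\beta d}}{q} \;\approx\; (d+1)\,q^{d-1}\cdot q^{-2(1+\epsilon)} \;=\; (d+1)\,q^{\,d-3-2\epsilon}, \]
which is enormous for $d\ge 4$. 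So the trivial bound $Z_{K_t}(q-1,\beta)\le (q-1)^t e^{\beta\binom{t}{2}}$ is hopelessly weak for $t$ close to $d$: it throws away the fact that monochromatic colorings dominate $Z_{K_t}$ when $t$ is large, which is precisely the statement you are trying to prove. Relatedly, you extract the monochromatic piece of the $t=d+1$ term but never bound its non-monochromatic remainder $Z_{K_{d+1}}(q-1,\beta)-(q-1)e^{\beta\binom{d+1}{2}}$, which is the original problem with $q$ replaced by $q-1$.

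The paper avoids this circularity by a different route: it observes that $K_{d+1}$ (resp.\ $K_{d,d}$) is a $d$-regular $\eta$-expander with $\eta=d/2$, and that the proof of Lemma~\ref{lem:groundstate} (bounding the number $\mathcal{L}_k$ of colorings with exactly $k$ non-monochromatic edges, via Karger-type contraction arguments) extends from colorings in $S$ to all colorings using at least two colors. Summing $\mathcal{L}_k\, e^{-\beta k}$ over $k\ge d$ then gives the $q^{-\Omega(\epsilon)}$ bound directly, with the $n^4=(d+1)^4$ prefactor absorbed by the hypothesis $q\ge d^4$. Your decomposition could in principle be repaired---for instance by restricting to the majority color so that $t\le (d+1)/2$, handling the no-majority case separately, and arguing inductively in $q$ for the $t=d+1$ remainder---but this is substantially more work than you indicate, and even for $t\le (d+1)/2$ the estimate $\binom{d+1}{t}q^{-\epsilon t}$ needs care when $\epsilon$ is small relative to $1/4$ and $q=d^4$.
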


\begin{proof}
We will use the fact that both $K_{d+1}$ and $K_{d,d}$ are $\eta$-expanders with $\eta = d/2$. We will state the proof for $K_{d+1}$, and proof for $K_{d,d}$ follows similarly. Let $\sigma$ be any coloring of the vertices of $K_d$, and let $c(\sigma)$ denote the number of colors that appear in $\sigma$. 

We first observe that the proof of Lemma~\ref{lem:groundstate} also extends to all $\sigma$ such that $c(\sigma) \geq 2$. Indeed, the only place where $S$ was used was to establish Lemma~\ref{lem:nmedges}, and~\eqref{eqn:largeedges}. Both of these also hold for $K_{d+1}$ whenever there are at least two colors.
So we have
\begin{align*}
\frac{\sum_{\sigma:c(\sigma) \geq 2}e^{\beta\left(\binom{d+1}{2} - \operatorname{nm}(\sigma)\right)}}{q \cdot e^{\beta \binom{d+1}{2}}} & \leq \frac{1}{q}\sum_{k \geq d}\sum_{\substack{\sigma:c(\sigma) \geq 2 \\ \operatorname{nm}(\sigma) = k}}q^{-(2+\epsilon)\frac{k}{d}} \\
& \leq \frac{d^4}{q}\sum_{k \geq d}q^{-\Omega\left(\frac{\epsilon k}{d}\right)} \\
& \leq q^{-\Omega\left(\epsilon\right)}.\qedhere
\end{align*}
\end{proof}

The proofs of Lemma~\ref{lem:extremal} and Lemma~\ref{lem:TFextremal} are now straightforward calculations.

\begin{proof}[Proof of Lemma~\ref{lem:extremal}]
By Corollary~\ref{cor:extremal} and Lemma~\ref{lem:BGJ},
\begin{align*} Z_{G[\gam]}(q-1, \beta) &\leq (q-1)^{v_\gam-\frac{2e_\gam}{d}} Z_{K_{d+1}}(q-1, \beta)^{\frac{2e_\gam}{d(d+1)}}\\
&\leq (q-1)^{v_\gam-\frac{2e_\gam}{d}}\left(2(q-1) e^{\beta \binom{d+1}{2}}\right)^{\frac{2e_\gam}{d(d+1)}}\\
&<q^{v_\gam - \frac{2e_\gam}{d}+\frac{e_\gam}{\binom{d+1}{2}}} e^{\beta e_\gam}2^{\frac{e_\gam}{\binom{d+1}{2}}}.\qedhere
\end{align*}

\end{proof}

\begin{proof}[Proof of Lemma~\ref{lem:TFextremal}]
By Corollary~\ref{cor:TFextremal} and Lemma~\ref{lem:BGJ},
\begin{align*} Z_{G[\gam]}(q-1, \beta)&\leq (q-1)^{v_\gam - \frac{2e_\gam}{d}}Z_{K_{d,d}}(q-1,\beta)^{\frac{e_\gam}{d^2}}\\
&\leq (q-1)^{v_\gam - \frac{2e_\gam}{d}} \cdot (2(q-1)e^{\beta d^2})^{\frac{e_\gam}{d^2}}\\
&\leq q^{v_\gam - \frac{2e_\gam}{d}+\frac{e_\gam}{d^2}} \cdot 2^{\frac{e_\gam}{d^2}} \cdot e^{\beta e_\gam}.\qedhere
\end{align*}

\end{proof}

\section{High-temperature algorithms}\label{sec:hightemp}

The polymer model we use in the high-temperature region is the same as the one considered in~\cite{BCH+20,CDK+20,HJP20}, which we now define.
Given a graph $G=(V,E)$, the set $\cP$ of polymers is the set of connected subgraphs (not necessarily induced) of $G$ on at least two vertices. 
The polymer weights are given by $w_\gamma = q^{1 - v_{\gamma}}p^{e_{\gamma}}$, where we write $p=e^\beta -1$. 
Two polymers $\gam$ and $\gam'$ are incompatible if and only if their union is connected in $G$.

The key observation is that edge subsets $A\subset E$ are in bijection with the collection $\Omega$ of sets of pairwise compatible polymers by the map taking $A$ to the set of components of $(V,A)$ on at least two vertices.
In this setup, the polymer model partition function $\Xi$ is given by 
\begin{align*} 
    \Xi &= \sum_{\Lambda\in\Omega}\prod_{\gam\in\Lambda}w_\gam 
    \\&= \sum_{A\subset E}\;\prod_{\substack{\text{components }\gamma \\ \text{of } (V,A)}}\;q^{1-v_\gam}p^{e_\gam} 
    \\&=\sum_{A\subset E}q^{c(A)-n}(e^\beta-1)^{|A|} = q^{-n} Z_G(q,\beta),
\end{align*}
where $c(A)$ is the number of components of the graph $(V,A)$.
This shows that $q^n \Xi$ is precisely the function which we wish to approximate, and hence any multiplicative approximation to $\Xi$ corresponds to a multiplicative approximation to $Z$ with the same relative error.
Given the standard approach to approximating polymer model partition functions outlined in Section~\ref{sec:clusteralgorithms}, to complete our high-temperature algorithm it suffices to verify the conditions of Theorem~\ref{thm:KP} which imply that the cluster expansion of this polymer model is convergent.

\begin{lemma}\label{lem:hightempKP}
    For every $\eps\in(0,1)$ there exist $d_0(\eps)$ and $q_0(\eps)$ such that for every $d\ge d_0$, $q\ge q_0$ and $\beta \le (1-\eps)\beta_o(q,d)$, the Koteck\'y--Preiss condition (see~\eqref{eqn:KPcondition} in Theorem~\ref{thm:KP}) holds for the high-temperature polymer model with $f(\gam)=g(\gam)=v_\gam$.
\end{lemma}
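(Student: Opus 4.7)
The plan is to verify the Kotecký--Preiss condition polymer by polymer. Since two polymers are incompatible precisely when their vertex sets intersect, a union bound gives
\[
\sum_{\gam' \not\sim \gam} w_{\gam'} e^{f(\gam')+g(\gam')} \;\le\; \sum_{v \in V_\gam} \sum_{\gam' \ni v} w_{\gam'} e^{2 v_{\gam'}},
\]
so it suffices to show $S_v := \sum_{\gam' \ni v} w_{\gam'} e^{2v_{\gam'}} \le 1$ for each vertex $v$, since summing over $v \in V_\gam$ then yields at most $v_\gam = f(\gam)$.

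To bound $S_v$, I would parametrize each polymer $\gam'$ on $v_{\gam'} = k$ vertices by a spanning tree $T$ of $\gam'$ rooted at $v$ together with a set $F \subseteq E(G[V_T]) \setminus T$ of ``extra'' edges. This overcounts each polymer by its spanning-tree count but yields a valid upper bound: by Proposition~\ref{prop:trees}, the number of subtrees of $G$ on $k$ vertices rooted at $v$ is at most $(ed)^{k-1}$, and for each $T$ the sum over $F$ contributes a factor $\sum_F p^{|F|} \le (1+p)^{e(G[V_T]) - (k-1)}$. The essential refinement over the analyses in earlier works is to combine both $e(G[V_T]) \le \binom{k}{2}$ and $e(G[V_T]) \le dk/2$ and take whichever is sharper: for $k \le d+1$ the clique bound gives exponent $\binom{k-1}{2}$, which is much smaller than the naive $dk/2 - (k-1)$, and is what lets us approach $\beta_o$.

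Under $\beta \le (1-\eps)\beta_o(q,d)$ with $d \ge d_0(\eps)$, we have $(1+p)^{d/2} \le q^{1-\eps+o(1)}$ and $dp = 2(1-\eps)\ln q \cdot (1+o(1))$. Writing $T_k$ for the resulting $k$-th summand, the ratio is $T_{k+1}/T_k \approx (e^3 dp/q)(1+p)^{k-1}$, which lies below $1/2$ for $k$ near $2$ but exceeds $1$ near $k \approx d/(2(1-\eps))$. I expect the hard part to be this non-monotonicity: $\ln T_k / \ln q$ is a convex quadratic in $k$ whose maxima on $[2, d+1]$ occur at the endpoints, giving $T_2 = O(\ln q / q)$ and $T_{d+1} \le q^{-1-\eps(d-1)+o(1)}$, while $T_k / T_2 \le q^{-(k-2)+o(k)}$ for $k$ close to $2$. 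Geometric decay, applied separately on either side of the crossover $k \approx d/(2(1-\eps))$, then shows $\sum_{k=2}^{d+1} T_k = O(T_2)$, independently of $d$.

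For $k > d+1$ one switches to the bound $e(G[V_T]) \le dk/2$; the ratio $T_{k+1}/T_k$ becomes at most $2 e^3 \ln q \cdot q^{-\eps} < 1/2$ for $q \ge q_0(\eps)$, so the tail contributes at most $O(q^{-1-\eps(d-1)})$. Combining, $S_v = O(\ln q / q)$, which is at most $1$ once $q_0(\eps)$ is chosen large enough, and the Kotecký--Preiss condition follows.
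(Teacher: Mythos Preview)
Your proposal is correct and follows essentially the same route as the paper: the same per-vertex reduction, the same spanning-tree-plus-extra-edges enumeration with the key split between the $\binom{k}{2}$ bound for $k\le d+1$ and the $dk/2$ bound for larger $k$. The only difference is cosmetic, in how the sum over $k\le d+1$ is finished off: the paper loosens $\binom{k}{2}-k \le dk/2-k-d/2$ to reduce everything to a single geometric series with ratio $r=e^3 dp(1+p)^{d/2-1}/q$ and then checks $(1+p)^{1/2}r\le 1/100$, whereas you keep the sharp quadratic exponent and use convexity to bound the sum by its endpoints---both arguments give $S_v=O(\ln q/q)$.
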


\begin{proof}
We assume that $d_0(\epsilon)$ and $q_0(\epsilon)$ are large enough that $\beta \leq (1 - \epsilon)\beta_o$ implies that $\beta \leq (2-\epsilon)\ln(q)/d$ and hence $q\ge (1+p)^{d/(2-\epsilon)}$.

Write $\cP_{u,k,t}$ for the subset of polymers $\gamma\in\cP$ such that $\gamma$ contains a fixed vertex $u$, $v_\gam=k$, and $e_\gam=t$. 
Given an arbitrary polymer $\gam$, we bound from above the sum in the left-hand side of~\eqref{eqn:KPcondition} as follows
\[
\sum_{\gamma' \not\sim \gamma}w_{\gamma'}e^{f(\gamma') + g(\gamma')}\leq \sum_{u \in \gamma}\sum_{\gamma' \ni u}w_{\gamma'}e^{f(\gamma') + g(\gamma')}.
\]
Given this and our choice $f(\gam) = v_\gam$, it suffices to show that for every vertex $u$ in $G$,
\[
F_u := \sum_{\gamma' \ni u}w_{\gamma'}e^{f(\gamma') + g(\gamma')} \leq 1.
\]

Observe that 
\begin{align*}
F_u &= q\sum_{\gamma' \ni u}q^{-v_{\gamma'}}p^{e_{\gamma'}} e^{f(\gamma') + g(\gamma')}
= q\sum_{k\ge 2} \sum_{t\ge k-1}|\cP_{u,k,t}|q^{-k}p^{t} e^{2k}.
\end{align*}
For $t < k-1$ we have $|\cP_{u,k,t}|=0$ since polymers must be connected, hence the above sums over $t$ start at $k-1$.
We bound $|\cP_{u,k,t}|$ from above to continue, noting that
\[ |\cP_{u,k,t}| \le \min\left\{ (ed)^{k-1} \cdot \binom{\binom{k}{2} - k + 1}{t - k + 1},\; (ed)^{k-1} \cdot \binom{dk/2 -k + 1}{t - k + 1}\right\}. \]
To see this, observe that there are at most $(ed)^{k-1}$ choices of a spanning tree of $\gam\in \cP_{u,k,t}$ by Proposition~\ref{prop:trees}, and the binomial coefficient gives an upper bound on the number of ways of adding edges of $G$ to this spanning tree to form $\gam$.
In the case $k \le d+1$, we can use $\binom{k}{2}$ as an upper bound on the number of edges of $G$ incident to a vertex of this spanning tree; else, we use $dk/2$. 
Then 
\begin{align*}
    F_u &\le 
    \frac{q}{ed}\sum_{k=2}^{d} (e^3d/q)^k p^{k-1} \sum_{t=k-1}^{\binom{k}{2}} p^{t-k+1}\binom{\binom{k}{2}-k+1}{t-k+1} 
    \\& \hphantom{\le {}} + \frac{q}{ed}\sum_{k=d+1}^{\infty} (e^3d/q)^k p^{k-1} \sum_{t=k-1}^{dk/2} p^{t-k+1}\binom{dk/2-k+1}{t-k+1}.
\end{align*}

The inner sums over $t$ are precisely $(1+p)^{\binom{k}{2}-k+1}$ and $(1+p)^{dk/2-k+1}$ respectively, so that 
\begin{align*}
    F_u &\le 
    \frac{1+p}{p}\frac{q}{ed}\left(\sum_{k=2}^{d} (e^3dp/q)^k (1+p)^{\binom{k}{2}-k}+ \sum_{k=d+1}^{\infty} (e^3dp/q)^k(1+p)^{dk/2-k}\right).
\end{align*}

Let $r = e^3 d p(1+p)^{d/2-1}/q$,
and observe that for $k\in [2,d]$ we have $\binom{k}{2}-k \le \frac{dk}{2}-k-\frac{d}{2}$. 
Then 
\begin{align*}
    F_u &\le 
    \frac{1+p}{p}\frac{q}{ed}\left(\sum_{k=2}^{d} \frac{r^k}{(1+p)^{d/2}}+ \sum_{k=d+1}^{\infty} r^k\right) = \frac{e^2}{1-r}\left(r-r^d +(1+p)^{d/2}r^d\right).
\end{align*}
To finish the proof, observe that in order to show $F_u\le 1$ it suffices to show that $(1+p)^{1/2}r \le 1/100$. 

Substituting the definition of $r$ and $q\ge (1+p)^{d/(2-\epsilon)}$ into $(1+p)^{1/2}r$ we have
\[ (1+p)^{1/2}r \le  e^3 d p(1+p)^{\frac{d}{2} - \frac{d}{2-\eps} - \frac{1}{2}}. \]
This has a unique stationary point on $p\in[0,\infty)$ at $p^*=\frac{4-2 \eps }{(d +1)\eps -2}$ provided that $d \ge 10/\eps$ (which also ensures that $p^*>0$), and this stationary point is a maximum. 
Assuming that $d\ge 10/\eps$, one can verify that $(1+p)^{1/2}r<1/100$ whenever $p \geq 100p^*/\epsilon^2$.
In terms of $q$, to ensure that $p\ge 100p^*/\epsilon^2$, it suffices to take $q > \exp(400/\eps^3)$.
This establishes Lemma~\ref{lem:hightempKP}.
\end{proof}

We briefly note that the above proof requires $\epsilon > 0$ so we cannot, for instance, simply allow $\epsilon < 0$ in order to extend Theorem~\ref{thm:mainpotts} to all temperatures. However, we can extend our results if we modify our high-temperature polymer model to consider only a subset of polymers---namely, those at most a certain size. We discuss this approach in slightly more detail in Section~\ref{sec:conclusion}.

We also note that Theorem~\ref{thm:FPTASuniqueness} follows from Lemma~\ref{lem:hightempKP} and the standard method sketched in Subsection~\ref{sec:clusteralgorithms} which uses Theorem~\ref{thm:algorithm}.

\begin{proof}[Proof of Theorem~\ref{thm:FPTASuniqueness}]
As both $q$ and $d$ tend to infinity, there exists $\beta_u(q,d)\sim \ln(q)/d$ such that the Gibbs uniqueness region of the infinite $d$-regular tree corresponds to $\beta<\beta_u$ (see~\cite{Hag96,BGP16}). 
Since $\beta_o\sim2\ln(q)/d$ we can pick (say) $\eps=1/4$ in Lemma~\ref{lem:hightempKP} and so long as $d$ and $q$ are large enough (i.e.\ at least some absolute constants) we will have the conditions of that lemma, as well as $\beta_u \le (1-\eps)\beta_o$. 
The algorithm is given in Subsection~\ref{sec:clusteralgorithms}.
\end{proof}

\section{Typical structure of the Potts model}\label{sec:pt}

In this section, we give a proof of Theorem~\ref{thm:pt}. A useful perspective on the random cluster model comes from tilted bond percolation, providing a distribution on edge subsets of a graph that with $q=1$ is precisely the usual notion of (independent) bond percolation. For integer $q$, there is a coupling between the distributions given by the Potts and random cluster models due to Edwards and Sokal~\cite{ES88} which we now describe. Given a subset of edges from the random cluster model, color each component with a color uniformly from $[q]$ to obtain a vertex coloring distributed according to the Potts model. 
In reverse, one performs bond percolation (with a particular probability) on the monochromatic components of a coloring from the Potts model.
See e.g.~\cite{Sok05} for a thorough treatment of the function $Z$ and the various combinatorial quantities that it encodes, including the equivalence of~\eqref{eqn:pottspartition} and~\eqref{eq:FKrep}. 

\begin{proof}[Proof of Theorem~\ref{thm:pt}] For part~\ref{part:pt1}, we use the Koteck\'y--Preiss theorem for the high-temperature cluster expansion to argue that all polymers have size $O(\ln n)$ with high probability. In order to do this, let $L=(1+\delta)\ln n$ and consider the truncated cluster expansion for $\Xi$. Recall that
\[
\Xi(L) = \exp\left(\sum_{\substack{\Gamma \in \mathcal{C}\\g(\Gamma) \leq L}}\phi(\Gamma)\prod_{\gamma \in \Gamma}w_{\gamma}\right).
\]

We have that $\left|\ln\Xi(L)-\ln\Xi\right|\leq n^{-\delta}$. Therefore, if we let $\tilde{Z}_G$ be the sum over configurations with components of size at most $L$, we have 
\[
\frac{\tilde{Z}_G}{Z_G} = \frac{q^n\Xi(L)}{q^n\Xi} \geq e^{-n^{-\delta}} = 1- \Omega(n^{-\delta}),
\]
and so, with high probability all components are on at most $(1+\delta)\ln n$ vertices. To get a configuration of the Potts model we make use of the Edwards–Sokal coupling \cite{ES88} by coloring each component at random with one of the $q$ colors.
Note that if $C_1,\dots,C_K$ are the components we have 
\[
\sum_{i=1}^K |C_i|^2 \leq \max_{i \in \{1,\ldots,K\}}|C_i| \cdot \sum_{i = 1}^K|C_i| \leq (1+\delta)n \ln n .
\]
Let $N_j$ be the number of vertices with color $j$. Clearly, $\mathbb{E}N_j = n/q$. By Hoeffding's inequality~\cite[Theorem $2$]{HOEFFDING63}, we have that 
\[
\mathbb{P}\left(\left|N_j - \frac{n}{q}\right| > tn \right) \leq  2 \exp\left(\frac{-2t^2 n^2}{\sum_{i=1}^K |C_i|^2}\right) \leq 2\exp\left(- \frac{2t^2 n}{(1+\delta)\ln n}\right).
\]
Taking $t=o\left(\frac{1}{q}\right)\cap \omega\left(\sqrt{\frac{\ln n \cdot \ln q}{n}}\right)$ and a union bound over all colors $q$, we have that with high probability all the $N_j$ satisfy $N_j=(1+o(1))n/q$.

The strengthening of Part~\ref{part:pt2} now follows. Replacing $n/2$ with $\delta n$ in~\eqref{eqn:finallargedeviation}, we have that
\begin{align*}
\mathbb{P}(\|\mathbf{\Lambda}\| \geq (1-\delta)n) &= \mathbb{P}(e^{\|\mathbf{\Lambda}\|/d} \ge e^{(1-\delta)n/d})\\
&\leq \exp\left(\frac{\epsilon n \ln q}{4q^{\epsilon/4}d} - (1-\delta)\frac{n}{d}\right) \\
\end{align*}

This gives us that for some $\delta = O\left(\frac{\epsilon\ln q}{q^{\epsilon/4}}\right)$ we have
\begin{equation}
\label{eqn:pottsprediction}
\text{the largest color class has at least }n\left(1 - \delta\right)~\text{vertices w.h.p.}
\end{equation}
\end{proof}

Note that there is some subtlety in the final argument. We did not start out by defining the ``majority colorings'' (those not in $S_0$) to be the ones with at least $(1-o(1))n$ vertices of the same color; if we had, the combinatorial arguments necessary to prove the Kote\'{c}ky-Preiss condition would have been far more difficult (or in some cases false) due to the weakness of our expansion assumptions. Instead, we take a two-layered approach. In the first layer, we use combinatorial arguments to show that $Z_G(q,\beta)$ is exponentially well-approximated by the majority colorings, where ``majority'' here means at least $\frac{n}2$ vertices of the same color (Lemma~\ref{lem:maingroundstate}). In the second layer, we show that this approximation is itself well-approximated by a polymer model with convergent cluster expansion (Lemma~\ref{lem:polymerapprox}). Due to the convergence (Lemma~\ref{lem:polymerpf}), we can discern an even finer level of detail within the polymer model---that the polymer model partition function is itself well-approximated by majority colorings, where we now take ``majority'' to mean at least $(1-o(1))n$ vertices of the same color. Our approximations at each step are good enough that we can chain them together to achieve an approximation of our original $Z_G(q,\beta)$.
%}

\section{The random cluster model on the discrete hypercube}\label{sec:rccube}

In this section, we prove Theorem~\ref{thm:maincube} about the existence of a structural phase transition for the random cluster model on the discrete hypercube.

\begin{proof}[Proof of Theorem~\ref{thm:maincube}] The proof of part~\ref{part:subcritical} follows from the proof of part~\ref{part:pt1} of Theorem~\ref{thm:pt}, which shows that with high probability, each component in the random cluster model has size at most $(1 + o(1))\ln n \leq \ln_2 n \leq d$.

For part~\ref{part:supercritical}, suppose there are $K$ components with vertex sets $C_1,\dots,C_K$, where $|C_1| \geq \cdots \geq |C_K|$. We make use of the Edwards--Sokal coupling~\cite{ES88}. For a subset of edges distributed according to the random cluster model, color each component at random with one of the $q$ colors to get a configuration of the Potts model. 
 
By~\eqref{eqn:pottsprediction}, every component with more than $\delta n$ vertices should get the same color. If there are at least two such components, then the probability of this occurring is at most $1/q \ll 1$, thus contradicting~\eqref{eqn:pottsprediction}. 

Henceforth, let us assume that there is at most one component of size  at least $\delta n$, that is, $ \max_{i \in \{2,\ldots,K\}}|C_i| \leq \delta n$. Let $n_1 := n - |C_1|$. We have 
\[
\sum_{i=2}^K |C_i|^2 \leq \max_{i \in \{2,\ldots,K\}}|C_i| \cdot \sum_{i = 2}^K|C_i| = \delta n\cdot n_1.
\]
Let $N_j$ be the number of vertices in components $C_2,\ldots C_K$ with color $j$. Clearly, $\mathbb{E}N_j = \frac{n_1}{q}$. Set $t = \sqrt{\delta} \cdot \ln q$. By Hoeffding's inequality, we have that 
\[
\mathbb{P}\left(N_j - \frac{n_1}{q} > t n_1\right) \leq  \exp\left(\frac{-2t^2 n_1^2}{\sum_{i=2}^K |C_i|^2}\right) \leq \exp\left(- \frac{2t^2 n_1}{\delta n}\right).
\]
If $n_1 \leq n/\ln q$, then $|C_1| \geq n\left(1 - 1/\ln q\right)$ and we are done. Else, we have that $\frac{2t^2 n_1}{\delta n} \geq 2\ln q$, and so $\mathbb{P}\left(N_j > (t+1/q) n_1\right) \leq e^{-2\ln q} = 1/q^2$. By taking a union bound over all colors $q$, we have that with high probability all of the $N_j$ are smaller than $(t+1/q) n_1$, and so with high probability, the largest color class has at most 
\[
|C_1| + (t+1/q) n_1 \leq n\left(1 - \Omega\left(\frac{1}{\ln q}\right)\right)
\]
vertices, contradicting~\eqref{eqn:pottsprediction}.
\end{proof}

\section{Conclusion}\label{sec:conclusion}

We conclude with a discussion on future directions and potential extensions of our arguments. 

\subsection{Extension to all temperatures}
A deceptively simple-sounding task would be to extend our algorithms to work at all temperatures, meaning the range $(1-\eps)\beta_0 \leq \beta \leq (1+\eps)\beta_0$. This requires convergence of both our low- and high-temperature cluster expansions beyond the threshold $\beta_0$. 
The main results of~\cite{HJP20} show that this must be possible for large enough $q$ and strong enough expansion conditions, so the main interest here is to close the gap without significantly strengthening our assumptions.

For the high-temperature regime, we can take the same polymer model restricted to polymers on at most $\frac{n}{2}$ vertices and repeat the argument from Section~\ref{sec:hightemp}. Subject to the condition of $\eta$-expansion, the constraint in the polymer size allows for better bounds on the size of $\mathcal P_{u,k,t}$ (the $k$-vertex $t$-edge polymers contained a fixed vertex $u$), and ultimately lets us establish the Koteck{\'y}--Preiss condition for this ``smaller'' polymer model at temperatures slightly beyond $\beta_o$.
There is a delicate interaction between the expansion and improved range of $\beta$, however, and one must handle the colorings that are no longer represented due to the restriction on polymer size.

More significant obstacles appear in extending our low-temperature results. In this regime, our current methods do not extend in a straightforward way beyond $\beta_0$ unless we allow $q$ to be exponential in $d$, thus losing some novelty in our results. 
That is, some extension of what we prove here is plausible, but our conditions seem to degrade and become comparable to the setup of~\cite{HJP20}, while still requiring $q$ to be an integer.
One possible approach to maintaining $q$ polynomial in $d$ is to improve Lemmas~\ref{lem:extremal} and~\ref{lem:TFextremal} by providing a better analysis of the partition functions of $K_{d+1}$ and $K_{d,d}$. 
Even a simple prerequisite such as more a comprehensive version of Lemma~\ref{lem:BGJ} that operates in the entire temperature range, appears not to be known. 
We leave the pursuit of such bounds to future work.

\subsection{Weaker expansion, fewer colors} A natural further direction is to extend our results to a weaker notion of expansion that only guarantees large edge-boundary for small sets, e.g.\ for some $k\ge 2$ we have
\[ \text{Every set}~A~\text{of at most}~n/k~\text{vertices satisfies}~|\nabla(A)| \geq 2|A|. \]
Such a condition is related to the gap between eigenvalues $k$ and $k+1$ of the graph, and it would be interesting to determine whether techniques from combinatorial optimization that handle small-set expansion and such higher-order eigenvalue gaps can be used for approximating partition functions. 
This was initiated in~\cite{CDK20} for the low-temperature polymer model that we study here.

We believe that the lower bounds on $q$ (and to some extent $d$) in our results are artifacts of our techniques. 
Extending our results to the case of all $d,q\ge 3$ presents an interesting challenge, as raised previously in~\cite{HJP20}. The slow mixing results of~\cite{CGG+22} provide some evidence that this may be possible, at least at low temperatures.

\subsection{Conjectured extension to the random cluster model}
It would also be interesting to remove the restriction that $q$ is an integer and handle the random cluster model directly.
That is, are there aspects of our approach that can be extended to work with the more general polymer models of~\cite{HJP20}?

Our high-temperature algorithm already works in the extra generality of the random cluster model, but some of the tools we rely on for the low-temperature algorithm only apply to the Potts model. 
We conjecture that the results we rely on in Section~\ref{sec:extremal} hold in the required generality. 

\begin{conjecture}\label{conj:cliqueextremal}
    For all real $q>1$ and $\beta>0$, for any graph $G$ we have
    \[ Z_G(q,\beta) \le \prod_{v\in V(G)}Z_{K_{d_v+1}}(q,\beta)^{1/(d_v+1)}. \]
\end{conjecture}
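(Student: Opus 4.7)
The goal is to extend Theorem~\ref{thm:extremal} of Sah, Sawhney, Stoner and Zhao from integer $q\ge 2$ to every real $q>1$, where we interpret $Z_G(q,\beta)$ through the FK representation~\eqref{eq:FKrep}, which coincides with the Potts partition function at positive integer $q$ and is a polynomial in $q$ at every fixed $\beta$. My primary plan is to replay the SSZ entropy/H\"older argument in the random cluster (edge) representation, since the random cluster measure is a genuine probability distribution on $2^E$ for all $q\ge 1$.

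I would begin by writing the inequality on the logarithmic scale,
\[ \ln Z_G(q,\beta) \le \sum_{v\in V(G)} \frac{\ln Z_{K_{d_v+1}}(q,\beta)}{d_v+1}, \]
and seek a random-object decomposition that realises the RHS as a sum of local contributions. The SSZ proof for integer $q$ draws a Potts colouring $\sigma$ from $\mu_{G,q,\beta}$, computes its Shannon entropy, and applies a generalised H\"older (Finner/BCCT) inequality to decompose the entropy into clique-localised pieces. I would replace $\sigma$ by the random edge-set $A\subseteq E$ with law $\pi_{G,q,\beta}(A)\propto q^{c(A)}(e^\beta-1)^{|A|}$, and attempt to prove an analogous clique-localised H\"older inequality for $\pi_{G,q,\beta}$, using as local data at each vertex $v$ the pattern of open edges incident to $N_G[v]$ together with the partial connectivity information these edges reveal.

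The main obstacle is the non-locality of the factor $q^{c(A)}$: unlike $(e^\beta-1)^{|A|}$, which is a product of local factors and so decomposes cleanly under H\"older, the number of components depends globally on $A$ and a single edge-flip can alter $c(A)$ in a way that the local view at $v$ does not see. A natural remedy is to use the identity $c(A)=|V|-\operatorname{rank}(A)$ and to decompose the rank via spanning-tree or electrical-network identities (the ``transfer current'' measures arising from Kirchhoff's theorem), which express derivatives of $\ln Z_G$ in $\ln q$ as expectations of local quantities under the random cluster measure. A fallback is polynomial interpolation: both sides raised to $D=\prod_v(d_v+1)$ are polynomials in $q$ that are non-negative at all integers $q\ge 2$, and writing the difference in the shifted basis $\binom{q-1}{k}$, $k\ge 0$, reduces the conjecture to non-negativity of certain generalised Whitney-rank-type coefficients, which is a clean combinatorial problem in its own right. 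I expect the random cluster / H\"older route to be the more promising, but also the more technical.
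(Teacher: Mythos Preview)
The statement you are attempting to prove is labelled a \emph{Conjecture} in the paper, and the paper does not supply a proof; it explicitly poses Conjecture~\ref{conj:cliqueextremal} as an open problem motivated by the desire to extend the low-temperature algorithm to non-integer $q$. There is therefore no paper proof to compare your proposal against.

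As a research plan rather than a proof, your proposal is honest about its own incompleteness. You correctly identify the central difficulty: the SSZ argument is a H\"older/entropy inequality applied to the spin representation, which relies on the product structure of $[q]^{V}$ and on $e^{\beta m(G,\sigma)}$ being a product of edge-local factors, and this structure is not available in the random cluster representation because $q^{c(A)}$ is a global quantity. Your suggestion to decompose the rank via Kirchhoff-type identities is reasonable speculation but not an argument; you do not indicate how any such decomposition would interact with a H\"older step, and no known electrical-network identity expresses $q^{c(A)}$ as a product of vertex-local factors for general $q>1$. So the primary route remains a description of the obstacle, not a resolution of it.

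Your fallback also does not yield a proof. It is true that after raising both sides to $D=\prod_v(d_v+1)$ one obtains an inequality between polynomials in $q$, and that this inequality is known at integers $q\ge 2$ by Theorem~\ref{thm:extremal}. But knowing that a polynomial difference is non-negative at all integers $q\ge 2$ does not imply non-negativity on the whole interval $(1,\infty)$; your proposed remedy is to expand in the basis $\binom{q-1}{k}$ and prove the coefficients are non-negative, which is simply a reformulation of the conjecture as another open combinatorial problem, not a proof. In short, nothing in your proposal closes the gap between integer and real $q$, which is precisely why the paper records the statement as a conjecture.
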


\begin{conjecture}\label{conj:bicliqueextremal}
    For all real $q > 1$ and $\beta>0$, for any triangle-free graph $G$ we have
    \[ Z_G(q,\beta) \le \prod_{uv\in V(G)}Z_{K_{d_u,d_v}}(q,\beta)^{1/(d_u d_v)}. \]
\end{conjecture}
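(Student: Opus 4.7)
The plan is to extend the proof of Theorem~\ref{thm:TFextremal} from integer $q\ge 2$ (due to~\cite{SSSZ20}) to all real $q>1$ using the Fortuin--Kasteleyn representation. Recall that the integer case proceeds via a generalized H\"older inequality applied to the spin factorization $Z_G(q,\beta)=q^{|V|}\int_{[q]^V}\prod_{uv\in E}f(\sigma_u,\sigma_v)\,d\mu^V$ with $f(x,y)=e^{\beta\mathbf{1}[x=y]}$ and $\mu$ the uniform probability measure on $[q]$; iterating H\"older edge by edge, using triangle-freeness so the edges at each vertex meet in no triangle, produces the biclique bound. My first step would be to search for a continuous analog of $\mu$ that preserves this edge-wise product structure for real $q$. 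The most natural candidate is the random cluster measure itself: using $c(A)=|V|-|A|+b_1(A)$, and the fact that triangle-freeness forces the subgraph $(V,A)$ to be locally tree-like at distance $2$, one might hope to localize the nonlocal factor $q^{c(A)}$ into a product over edges times a small global correction, then apply a H\"older step as in the integer case.

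In parallel I would pursue an analytic/interpolation route. For fixed $G$ and $\beta$, set
\[ F(q) := \sum_{uv\in E(G)}\frac{1}{d_u d_v}\log Z_{K_{d_u,d_v}}(q,\beta)-\log Z_G(q,\beta), \]
which is real-analytic on $(1,\infty)$, non-negative at every integer $q\ge 2$, and has only $O(1/q)$ corrections at large $q$ (since both sides are degree-$|V|$ polynomials in $q$ with matching leading terms up to the extremal ratio). It would suffice to rule out an interior value $q_*\in(k,k+1)$ with $F(q_*)<0$; a natural strategy is to establish an appropriate log-convexity in $\log q$ of $Z_G(q,\beta)/q$ together with a matching convexity of the RHS, extending the log-convexity of the sequence $Z_{K_{d+1}}(q,\beta)^{1/(d+1)}$ (over $d$) from~\cite[Lemma~5.3]{SSSZ20} that powers Corollary~\ref{cor:TFextremal}. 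Given such estimates and the known sign of $F$ at integers, one could try a Chebyshev-type interpolation argument to propagate the inequality between integers.

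The main obstacle, and the reason we can only state this as a conjecture, is that the random cluster model does not admit a clean spin representation for non-integer $q$: the entropy/H\"older argument of~\cite{SSSZ20} uses the discrete probability space $[q]$ in a fundamentally combinatorial way, and the connectivity term $q^{c(A)}$ in the FK sum is genuinely nonlocal, which resists the edge-by-edge factorization that H\"older requires. Overcoming this seems to require either a new nonlocal inequality in the spirit of the FKG/Holley inequalities known to hold for the random cluster model when $q\ge 1$, or a replica-style reduction that replaces each vertex by a bundle of $N$ copies so that an integer-$q^N$ argument applies, followed by a careful limiting analysis to recover the stated fractional exponents. Any of these routes would plausibly yield Conjecture~\ref{conj:cliqueextremal} as well.
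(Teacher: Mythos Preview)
This statement is a \emph{conjecture} in the paper, not a theorem; the paper gives no proof of it, and explicitly presents it as an open problem motivated by the desire to extend the low-temperature algorithm to the random cluster model with non-integer $q$. So there is no ``paper's own proof'' against which to compare your proposal.

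Your write-up is not a proof either, and you acknowledge this yourself (``the reason we can only state this as a conjecture''). What you have written is a reasonable survey of possible attack lines---a H\"older/entropy extension of the Sah--Sawhney--Stoner--Zhao argument, an analytic interpolation in $q$, and a replica trick---together with an honest identification of the central obstruction: the lack of a spin representation for non-integer $q$ makes the edge-wise factorization underlying the integer-$q$ proof unavailable, and the connectivity factor $q^{c(A)}$ in the Fortuin--Kasteleyn representation is genuinely nonlocal. That diagnosis matches the paper's own remarks following the conjecture. But none of the sketched routes is carried far enough to constitute even a partial result; in particular, the interpolation idea would need a convexity statement in $q$ that is itself open, and the ``localize $q^{c(A)}$ using triangle-freeness'' suggestion has no concrete mechanism behind it. As written, this is commentary on why the conjecture is hard rather than a proof proposal.
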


Extending our low-temperature algorithm to the random cluster model may not require results as strong as these conjectures, and instead one could work with analogues of Corollaries~\ref{cor:extremal} and~\ref{cor:TFextremal} that offer bounds in terms of the total number of edges and maximum degree parameter. 
So pursuing such slightly weaker results is also of interest.
Note that $q\ge 1$ is required in these conjectures;
they do not hold for $q < 1$ and at $q=1$ we have $Z_G(q,\beta) = e^{\beta e(G)}$ so in this degenerate case the conjectures are trivial.

\section*{Acknowledgments}
This work was undertaken as part of the Phase Transitions and Algorithms working group in the SAMSI Spring 2021 semester program on Combinatorial Probability. We thank the semester organizers for bringing us together. We also thank Will Perkins and the anonymous referees for helpful feedback on this paper.

\bibliographystyle{alpha}
\bibliography{references}

\newcommand{\etalchar}[1]{$^{#1}$}
\begin{thebibliography}{CDK{\etalchar{+}}20b}

\bibitem[Bar16]{Bar16}
Alexander Barvinok.
\newblock {\em Combinatorics and {{Complexity}} of {{Partition Functions}}},
  volume~30 of {\em Algorithms and {{Combinatorics}}}.
\newblock {Springer International Publishing}, {Cham}, 2016.

\bibitem[BCH{\etalchar{+}}20]{BCH+20}
Christian Borgs, Jennifer Chayes, Tyler Helmuth, Will Perkins, and Prasad
  Tetali.
\newblock Efficient sampling and counting algorithms for the {{Potts}} model on
  {{$\mathbb{Z}^{d}$}} at all temperatures.
\newblock In {\em Proceedings of the 52nd {{Annual ACM SIGACT Symposium}} on
  {{Theory}} of {{Computing}}}, {{STOC}} 2020, pages 738--751, {Chicago, IL,
  USA}, June 2020. {Association for Computing Machinery}.

\bibitem[BCP22]{BCP22}
Antonio Blanca, Sarah Cannon, and Will Perkins.
\newblock Fast and perfect sampling of subgraphs and polymer systems.
\newblock February 2022.

\bibitem[BG21]{BG21a}
Antonio Blanca and Reza Gheissari.
\newblock Random-{{Cluster Dynamics}} on {{Random Regular Graphs}} in {{Tree
  Uniqueness}}.
\newblock {\em Communications in Mathematical Physics}, 386(2):1243--1287,
  September 2021.

\bibitem[BGG{\etalchar{+}}20]{BGG+20}
Antonio Blanca, Andreas Galanis, Leslie~Ann Goldberg, Daniel {\v S}tefankovi{\v
  c}, Eric Vigoda, and Kuan Yang.
\newblock Sampling in uniqueness from the {{Potts}} and random-cluster models
  on random regular graphs.
\newblock {\em SIAM Journal on Discrete Mathematics}, 34(1):742--793, January
  2020.

\bibitem[BGP16]{BGP16}
Magnus Bordewich, Catherine Greenhill, and Viresh Patel.
\newblock Mixing of the {{Glauber}} dynamics for the ferromagnetic {{Potts}}
  model.
\newblock {\em Random Structures \& Algorithms}, 48(1):21--52, 2016.

\bibitem[CDK20a]{CDK20}
Charlie Carlson, Ewan Davies, and Alexandra Kolla.
\newblock Efficient algorithms for the {{Potts}} model on small-set expanders.
\newblock March 2020.

\bibitem[CDK{\etalchar{+}}20b]{CDK+20}
Matthew Coulson, Ewan Davies, Alexandra Kolla, Viresh Patel, and Guus Regts.
\newblock Statistical physics approaches to {{Unique Games}}.
\newblock In Shubhangi Saraf, editor, {\em 35th {{Computational Complexity
  Conference}} ({{CCC}} 2020)}, volume 169 of {\em Leibniz {{International
  Proceedings}} in {{Informatics}} ({{LIPIcs}})}, pages 13:1--13:27, {Dagstuhl,
  Germany}, 2020. {Schloss Dagstuhl\textendash Leibniz-Zentrum f\"ur
  Informatik}.

\bibitem[CGG{\etalchar{+}}21]{CGG+21}
Zongchen Chen, Andreas Galanis, Leslie~A. Goldberg, Will Perkins, James
  Stewart, and Eric Vigoda.
\newblock Fast algorithms at low temperatures via {{Markov}} chains.
\newblock {\em Random Structures \& Algorithms}, 58(2):294--321, 2021.

\bibitem[CGG{\etalchar{+}}22]{CGG+22}
Amin {Coja-Oghlan}, Andreas Galanis, Leslie~Ann Goldberg, Jean~Bernoulli
  Ravelomanana, Daniel Stefankovic, and Eric Vigoda.
\newblock Metastability of the {{Potts}} ferromagnet on random regular graphs.
\newblock February 2022.

\bibitem[DGGJ04]{DGGJ04}
Martin Dyer, Leslie~Ann Goldberg, Catherine Greenhill, and Mark Jerrum.
\newblock The relative complexity of approximate counting problems.
\newblock {\em Algorithmica}, 38(3):471--500, March 2004.

\bibitem[ES88]{ES88}
Robert~G. Edwards and Alan~D. Sokal.
\newblock Generalization of the {Fortuin-Kasteleyn-Swendsen-Wang}
  representation and monte carlo algorithm.
\newblock {\em Phys. Rev. D}, 38:2009--2012, 1988.

\bibitem[FK72]{FK72}
C.M. Fortuin and P.W. Kasteleyn.
\newblock On the random-cluster model: {{I}}. {{Introduction}} and relation to
  other models.
\newblock {\em Physica}, 57(4):536--564, February 1972.

\bibitem[GGS21]{GGS21b}
Andreas Galanis, Leslie~Ann Goldberg, and James Stewart.
\newblock Fast {{Algorithms}} for {{General Spin Systems}} on {{Bipartite
  Expanders}}.
\newblock {\em ACM Transactions on Computation Theory}, 13(4):25:1--25:18,
  August 2021.

\bibitem[GHLL21]{GHLL21}
Anupam Gupta, David~G Harris, Euiwoong Lee, and Jason Li.
\newblock Optimal bounds for the $k$-cut problem.
\newblock {\em ACM Journal of the ACM (JACM)}, 69(1):1--18, 2021.

\bibitem[GJ12]{GJ12}
Leslie~Ann Goldberg and Mark Jerrum.
\newblock Approximating the partition function of the ferromagnetic {{Potts}}
  model.
\newblock {\em Journal of the ACM}, 59(5):1--31, October 2012.

\bibitem[GK71]{GK71}
Christian Gruber and Herv\'{e} Kunz.
\newblock General properties of polymer systems.
\newblock {\em Communications in Mathematical Physics}, 22(2):133--161, 1971.

\bibitem[G{\v S}V16]{GSV16}
Andreas Galanis, Daniel {\v S}tefankovi{\v c}, and Eric Vigoda.
\newblock Inapproximability of the {{Partition Function}} for the
  {{Antiferromagnetic Ising}} and {{Hard-Core Models}}.
\newblock {\em Combinatorics, Probability and Computing}, 25(4):500--559, July
  2016.

\bibitem[G{\v S}VY16]{GSVY16}
Andreas Galanis, Daniel {\v S}tefankovi{\v c}, Eric Vigoda, and Linji Yang.
\newblock Ferromagnetic {{Potts Model}}: {{Refined}} \#{{BIS-hardness}} and
  related results.
\newblock {\em SIAM Journal on Computing}, 45(6):2004--2065, January 2016.

\bibitem[H{\"a}g96]{Hag96}
Olle H{\"a}ggstr{\"o}m.
\newblock The random-cluster model on a homogeneous tree.
\newblock {\em Probability Theory and Related Fields}, 104(2):231--253, June
  1996.

\bibitem[HJP20]{HJP20}
Tyler Helmuth, Matthew Jenssen, and Will Perkins.
\newblock Finite-size scaling, phase coexistence, and algorithms for the random
  cluster model on random graphs.
\newblock June 2020.

\bibitem[Hoe63]{HOEFFDING63}
Wassily Hoeffding.
\newblock Probability inequalities for sums of bounded random variables.
\newblock {\em Journal of the American Statistical Association}, 58:13--30,
  1963.

\bibitem[HPR19]{HPR19b}
Tyler Helmuth, Will Perkins, and Guus Regts.
\newblock Algorithmic {{Pirogov}}--{{Sinai}} theory.
\newblock {\em Probability Theory and Related Fields}, June 2019.

\bibitem[JKP20]{JKP20}
Matthew Jenssen, Peter Keevash, and Will Perkins.
\newblock Algorithms for \#{{BIS-hard problems}} on expander graphs.
\newblock {\em SIAM Journal on Computing}, 49(4):681--710, January 2020.

\bibitem[JP20]{JP20}
Matthew Jenssen and Will Perkins.
\newblock Independent sets in the hypercube revisited.
\newblock {\em Journal of the London Mathematical Society}, 102(2):645--669,
  2020.

\bibitem[JPP22]{JPP22}
Matthew Jenssen, Will Perkins, and Aditya Potukuchi.
\newblock Approximately counting independent sets in bipartite graphs via graph
  containers.
\newblock In {\em Proceedings of the 2022 {{Annual ACM-SIAM Symposium}} on
  {{Discrete Algorithms}} ({{SODA}})}, Proceedings, pages 499--516. {Society
  for Industrial and Applied Mathematics}, January 2022.

\bibitem[Kar93]{Kar93}
David~R. Karger.
\newblock Global min-cuts in {{RNC}}, and other ramifications of a simple
  min-cut algorithm.
\newblock In {\em Proceedings of the Fourth Annual {{ACM-SIAM}} Symposium on
  {{Discrete}} Algorithms}, {{SODA}} '93, pages 21--30, {USA}, January 1993.
  {Society for Industrial and Applied Mathematics}.

\bibitem[Knu98]{KNUTH98}
Donald~E. Knuth.
\newblock {\em {The Art of Computer Programming}}, volume 1-3.
\newblock Addison-Wesley Longman Publishing Co., Inc., 1998.

\bibitem[KP86]{KP86}
Roman Koteck{\'y} and David Preiss.
\newblock Cluster expansion for abstract polymer models.
\newblock {\em Communications in Mathematical Physics}, 103(3):491--498,
  September 1986.

\bibitem[KW80]{KW80}
Daniel~J. Kleitman and Kenneth~J. Winston.
\newblock The asymptotic number of lattices.
\newblock In J.~Srivastava, editor, {\em Annals of {{Discrete Mathematics}}},
  volume~6 of {\em Combinatorial {{Mathematics}}, {{Optimal Designs}} and
  {{Their Applications}}}, pages 243--249. {Elsevier}, January 1980.

\bibitem[KW82]{KW82}
Daniel~J. Kleitman and Kenneth~J. Winston.
\newblock On the number of graphs without 4-cycles.
\newblock {\em Discrete Mathematics}, 41(2):167--172, January 1982.

\bibitem[LZ15]{LZ15}
Eyal Lubetzky and Yufei Zhao.
\newblock On replica symmetry of large deviations in random graphs.
\newblock {\em Random Structures \& Algorithms}, 47(1):109--146, 2015.

\bibitem[Man07]{Mantel}
Willem Mantel.
\newblock Problem 28.
\newblock {\em Wiskundige Opgaven}, 10:60--61, 1907.

\bibitem[PY21]{PY22}
Aditya Potukuchi and Liana Yepremyan.
\newblock Enumerating independent sets in {Abelian Cayley} graphs.
\newblock {\em arXiv}, abs/2109.06152, 2021.

\bibitem[Sap87]{Sap87}
Aleksandr~Antonovich Sapozhenko.
\newblock On the number of connected subsets with given cardinality of the
  boundary in bipartite graphs.
\newblock {\em Metody Diskretnogo Analiza}, (45):42--70, 96, 1987.

\bibitem[Sap01]{Sap01}
Aleksandr~Antonovich Sapozhenko.
\newblock On the number of independent sets in extenders.
\newblock {\em Diskretnaya Matematika}, 13(1):56--62, 2001.

\bibitem[Sly10]{Sly10}
Allan Sly.
\newblock Computational {{Transition}} at the {{Uniqueness Threshold}}.
\newblock In {\em 2010 {{IEEE}} 51st {{Annual Symposium}} on {{Foundations}} of
  {{Computer Science}}}, pages 287--296, {Las Vegas, NV, USA}, October 2010.
  {IEEE}.

\bibitem[Sok05]{Sok05}
Alan~D. Sokal.
\newblock The multivariate {{Tutte}} polynomial (alias {{Potts}} model) for
  graphs and matroids.
\newblock In Bridget~S. Webb, editor, {\em Surveys in {{Combinatorics}} 2005},
  pages 173--226. {Cambridge University Press}, first edition, July 2005.

\bibitem[SS14]{SS14}
Allan Sly and Nike Sun.
\newblock Counting in two-spin models on d-regular graphs.
\newblock {\em The Annals of Probability}, 42(6):2383--2416, November 2014.

\bibitem[SSSZ20]{SSSZ20}
Ashwin Sah, Mehtaab Sawhney, David Stoner, and Yufei Zhao.
\newblock A reverse {{Sidorenko}} inequality.
\newblock {\em Inventiones mathematicae}, March 2020.

\end{thebibliography}
\end{document}